\numberwithin{equation}{section}
\newtheorem{Proposition}{Proposition}
\newtheorem{Theorem}[Proposition]{Theorem}
\newtheorem{Lemma}[Proposition]{Lemma}
\newtheorem{Remark}[Proposition]{Remark}
\newtheorem{Note}[Proposition]{Note}
\def\RR{\mathbb{R}}
\def\CC{\mathbb{C}}
\def\ZZ{\mathbb{Z}}
\def\NN{\mathbb{N}}
\title{Non-perturbative Solution of the 1d Schr\"odinger Equation Describing Photoemission from a Sommerfeld model Metal by an Oscillating Field}
\begin{document}

\date{}
\author[1]{Ovidu Costin\thanks{costin.9@osu.edu}}
\author[1]{Rodica Costin\thanks{costin.10@osu.edu}}
\author[2]{Ian Jauslin\thanks{ian.jauslin@rutgers.edu}}
\author[2,3]{Joel L. Lebowitz\thanks{lebowitz@math.rutgers.edu}}
\affil[1]{Department of Mathematics, The Ohio State University}
\affil[2]{Department of Mathematics, Rutgers University}
\affil[3]{Department of Physics, Rutgers University}
\maketitle

\begin{abstract}
  We analyze non-perturbatively the one-dimensional Schr\"odinger equation describing the emission of electrons from a model metal surface by a classical oscillating electric field. Placing the metal in the half-space $x\leqslant 0$, the Schr\"odinger equation of the system is $i\partial_t\psi=-\frac12\partial_x^2\psi+\Theta(x) (U-E x \cos\omega t)\psi$, $t>0$, $x\in\mathbb R$, where $\Theta(x)$ is the Heaviside function and  $U>0$ is the effective confining potential (we choose units so that $m=e=\hbar=1$). The amplitude $E$ of the external electric field  and the frequency $\omega$ are arbitrary. We prove existence and uniqueness of classical solutions of the Schr\"odinger equation for general initial conditions $\psi(x,0)=f(x)$, $x\in\mathbb R$. When the initial condition is in $L^2$ the evolution is unitary and the wave function goes to zero at any fixed $x$ as $t\to\infty$. To show this we prove a RAGE type theorem and show that the discrete spectrum of the quasienergy operator is empty. To obtain positive electron current we consider non-$L^2$ initial conditions containing an incoming beam from the left. The beam is partially reflected and partially transmitted for all $t>0$. For these initial conditions we show that the solution approaches in the large $t$ limit a periodic state that satisfies an infinite set of equations formally derived, under the assumption that the solution is periodic, by Faisal, et. al [Phys. Rev. A {\bf 72}, 023412 (2005)].  Due to a number of pathological features of the Hamiltonian (among which unboundedness in the physical as well as the spatial Fourier domain) the existing methods to prove such results do not apply, and we introduce new, more general ones. The actual solution exhibits a very complex behavior, as seen both analytically and numerically.  It shows a steep increase in the current as the frequency passes a threshold value $\omega=\omega_c$, with $\omega_c$ depending on the strength of the electric field. For small $E$, $\omega_c$ represents the threshold in the classical photoelectric effect, as described by Einstein's theory.
\end{abstract}
\vfill
\eject

\tableofcontents

\vfill
\eject

\section{Introduction}
\subsection{ Physical setting}

The emission of electrons from a metal surface induced by the application of an external electric field is a problem of continuing theoretical and practical interest \cite{HKK06,SKH10,BGe10,KSH11,KSe12,THH12,HSe12,PPe12,HBe12,PSe14,HWR14,EHe15,BBe15,YSe16,FSe16,RLe16,FPe16,LJ16,HKe17,SSe17,PHe17,Je17,WKe17,KLe18,LCe18,SMe18}.
It was first fully analyzed for constant electric field  using the ``new mechanics'' by Fowler and Nordheim (FN) in 1928 \cite{FN28}. They considered the Sommerfeld model of quasi-free electrons confined to a metal
 occupying the entire half-space $x<0$  by an effective step potential $U$. The metal is filled with electrons up to a Fermi level $\mathcal E_F$,  neglecting the small number of thermal electrons at room temperatures.
This gives the work function $W:=U-\mathcal E_F$ , i.e. $W$ is the minimum amount of energy necessary to take an electron out of the metal.

Applying a constant external electric field $E$ for $x>0$, see Figure \ref{fig:FN}, an electron in the Fermi sea moving in the positive $x$-direction, described by a plane wave $e^{ikx}$, $k>0$, can then tunnel out of the metal (we use units in which $\hbar=m=e=1$).

To describe this system FN considered the Schr\"odinger equation
\begin{equation}\label{eq1}
  i\partial_t\psi=-\frac12\partial_x^2\psi+\Theta(x)(U-Ex)\psi
\end{equation}
where $\Theta(x)$ is the Heaviside function, equal to $1$ if $x>0$ and $0$ otherwise.
To compute the stationary current observed after the field has been on for a while, FN  made the Ansatz that $\psi(x,t)$ is a generalized eigenfunction of \eqref{eq1}
\begin{equation}\label{1p2}
\psi(x,t)=e^{-\frac{ik^2}2t}\phi_E(x)
\end{equation}
with $\phi_E$ satisfying the equation
\begin{equation}
  \frac{k^2}2\phi_E=\frac12\partial_x^2\phi_E-\Theta(x)(U-Ex)\phi_E
  .
  \label{eq1E}
\end{equation}

The requirement that there be only one incoming wave from the left, given by $e^{i k x}, k>0$, for $x<0$ and only outgoing electrons for $x>0$, as well as that $\phi_E(x)$ and its derivative be continuous at $x=0$, and that $\phi_E(x)$ be bounded as $|x|\to\infty$, gave $\phi_E(x)=e^{i k x} +R_E e^{-i k x}$ for $x<0$ and an Airy function expression for $x>0$.

\begin{figure}
\hfil\includegraphics[width=8cm]{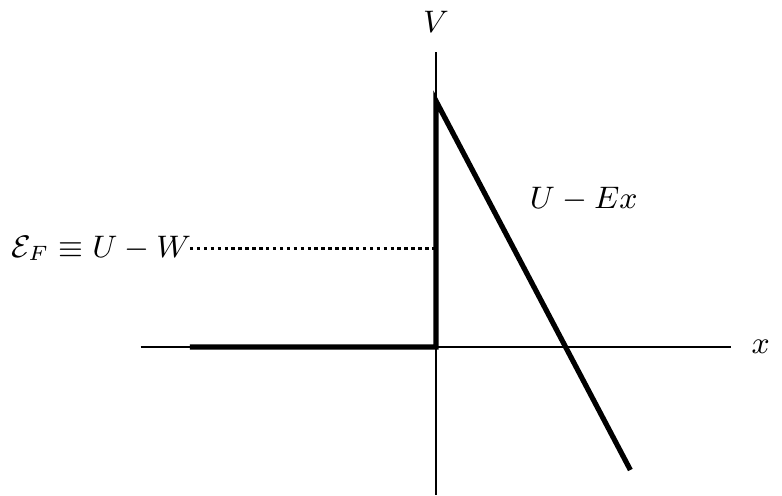}
\caption{The potential considered by Fowler and Nordheim. $x<0$ corresponds to the region inside the metal and $x>0$ corresponds to the vacuum outside.}
  \label{fig:FN}
\end{figure}

The FN computation is still the basic ingredient for the analysis of constant field currents experiments at present \cite{FN28,FKS05,HKK06,KSH11,YGR11,Ba06,KSe12b,PA12,YHe13,CPe14,ZL16,Fo16,Je17,KLe18,LZZ21}.
Their analysis does not consider the initial state of the system when the field is turned on. To check the validity of the FN  ansatz \eqref{1p2} we
recently revisited the FN setup by solving \eqref{eq1} for general initial values of $\psi(x,0)$.\,We showed that for all $\psi(x,0)$ representing an incoming beam $e^{ikx}$ \cite{CCe20} plus some square integrable function, $\psi(x,t)$ converges to the FN solution when $t\to\infty$. The asymptotic approach behaves like $t^{-\frac32}$. We considered in particular the initial state corresponding to a solution of  \eqref{eq1E} when $E=0$:
\begin{equation}
  \psi(x,0)=\phi_0(x)=
  \left\{\begin{array}{ll}
    e^{ikx}+R_0e^{-ikx}
    &\mathrm{for\ }x\leqslant 0
    \\
    T_0e^{-\sqrt{2U-k^2}x}
    &\mathrm{for\ }x> 0
  \end{array}\right.
  ,\quad
  R_0=\frac{ik+\sqrt{2U-k^2}}{ik-\sqrt{2U-k^2}}
  ,\quad
  T_0=\frac{2ik}{ik-\sqrt{2U-k^2}}
  .
  \label{init}
\end{equation}

\bigskip

 {\bf Time-periodic electric field and the photoelectric effect.} In the present work, we consider a setup similar to that of FN, except that the external field $E$ is taken to be periodic in time with period $\frac{2\pi}\omega$.
More precisely, we consider solutions of the equation
\begin{equation}
i\partial_t\psi =-\frac12\partial_x^2\psi+\Theta(x)(U-E x \cos\omega t)\psi,\ \ \ t>0
  \label{schrodinger2}
\end{equation}
with an initial value $\psi(x,0)$.
Physically, this can represent, depending on $\omega$, a great variety of situations ranging from  an alternating field produced by a mechanical generator to one produced by shining a laser on the metal surface.

For small values of $\omega$ the situation is in some ways similar to the constant field case with electrons tunneling through the (oscillating) barrier, although the limit $\omega\to 0$ in \eqref{schrodinger2} is very singular. 
For larger $\omega$, the situation is expected to be similar to that of the photoelectric effect, where light shining on a metal surface causes the almost instantaneous emission of electrons with a well-defined maximum kinetic energy $K$, given by the Einstein formula $K=\omega-W$ (recall that $\hbar=1$ in our units). Here of course we do not consider discrete photons, since   \eqref{schrodinger2} represents the electric field classically. It is expected however that the discrete jumps will show up as resonances, see \cite{CCL18}. Something like this is indeed the case for weak fields \cite{Wo35}. For larger fields one has to add to $W$ the ponderomotive energy $\frac{E^2}{4\omega^2}$ \cite{Wo35} of the electron in the oscillating field, see Figure \ref{Fig3} in the Appendix. There is a vast physical literature on this topic: For a comprehensive review see \cite{DPe20} and references therein.

\subsection{Mathematical setting.} 
From a mathematical point of view, the existence of solutions of \eqref{schrodinger2} with appropriate physical initial conditions which remain bounded and behave  in a physical way for all $x$ and $t$ is not obvious. In the physics literature, Faisal et al. \cite{FKS05}  considered periodic solutions of \eqref{schrodinger2} for general periodic fields $E(t)=E(t+2\pi/\omega)$ and, in analogy to the work of FN  sought solutions of \eqref{schrodinger2} in the form \footnote{Using the magnetic rather than the length gauge.}
\begin{equation}
  \label{eq:Faisal-a}
  \psi(x,t)=e^{-\frac12 i k^2 t}\phi(x,t)
\end{equation}
where $\phi(x,t)$ is periodic in time and has a single incoming wave $e^{i k x}$ for $x<0$, $k>0$. The continuity conditions at $x=0$ then lead to an infinite set of linear equations for the time-Fourier coefficients of $\phi$. The existence of solutions for this infinite system was not proven. What Faisal \& al. did was to truncate the infinite set of equations and solve the truncated system numerically.

In this paper we  rigorously analyze the full time evolution of \eqref{schrodinger2} both for $L^2$ initial conditions as well as for  an incoming beam $e^{ikx}$ as in \eqref{init} plus other terms which do not contribute to the long time behavior. We then find that for $L^2$ initial conditions $\psi(x,t)$ decays pointwise at least at a rate $O(t^{-1/2})$. For this, we first obtain a RAGE-type  theorem for this time-dependent potential. In the case the initial condition contains an incoming wave as in \eqref{init} (plus possible $L^2$ perturbations), the solution converges at least at a rate $O(t^{-1/2})$ to the ansatz in \cite{FKS05}. It follows from our result that the infinite system of equations obtained by  Faisal \& al. has a solution. We limit our analysis to time-periodic fields of the form in \eqref{schrodinger2} but expect our results to extend to general periodic fields.

To obtain these results we derive an integral equation \eqref{eqcontr} for $\psi(x,0):=\psi_0(x)$, which we show to have a unique solution (Lemma\,\ref{lemma:contraction}). We also obtain a set of formulas \eqref{psiminus}, \eqref{psiplus}, \eqref{fomulapsix0}, that recover the full wave function $\psi(x,t)$ from $\psi_0$. The properties of $\psi_0$, and therefore of $\psi$, are derived from the integral equation that it solves.   By far the most delicate analysis concerns the long time behavior of the solution of the Schr\"odinger equation.

 Behind the apparent simplicity of the potential in \eqref{schrodinger2} lie a number of significant mathematical difficulties making the analysis particularly challenging. Among them: lack of smoothness, and the fact that the Hamiltonian is unbounded in a time dependent way both in physical domain and in momentum space (owing to  the unboundedness of the potential energy term and lack of continuity). As a result, the classical PDE toolkit does not apply. To overcome these difficulties, we develop new methods, described in \S\ref{S:Math}, which we  combine with the spectral measure theory of the underlying unbounded operators. Preliminary results, without proofs, were given in \cite{CCe20}; that paper also contains interesting, and rigorously controlled numerical findings about the solutions, see Appendix \ref{appendix:figs}.

\section{Main Results}

Denote
\begin{equation}\label{defD}
\mathcal{D}=H^2(\RR\setminus\{0\})\cap H^1(\RR)\cap \{f\, |\, xf\in L^2(\RR)\}
\end{equation}

\begin{Theorem}\label{Thzero}
    \begin{enumerate}
\item[(a)] The Hamiltonians $\mathcal{H}_t:=-\frac12\partial_x^2\psi+\Theta(x)(U-E x \cos\omega t)\psi$, densely defined on $C_0^\infty(\RR)$ have self-adjoint extensions on $\mathcal{D}$ for each fixed $t$.
    
  \item[(b)] 
  Assuming $L^2$ initial conditions, the Schr\"odinger evolution  in the model \eqref{eq1} is unitary.
  \end{enumerate}
\end{Theorem}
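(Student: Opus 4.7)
I would proceed in two stages. First, verify that $\mathcal{H}_t$ is symmetric on $\mathcal{D}$ by integration by parts on each half-line $(-\infty,0)$ and $(0,\infty)$ separately: the boundary terms at $\pm\infty$ vanish because the combination $f\in H^1(\RR)$ with $xf\in L^2$ and $f\in H^2(\RR\setminus\{0\})$ forces $f$ and $f'$ to decay at infinity, while the contributions at $x=0$ cancel since the regularity built into $\mathcal{D}$ matches the value of $f$ (and controls the jump of $f'$) across the discontinuity of the potential. Second, to establish self-adjointness, I would invoke a Faris--Lavine type criterion: for each fixed $t$ the potential $V_t(x):=\Theta(x)(U-Ex\cos\omega t)$ is real, in $L^2_{\mathrm{loc}}(\RR)$, and satisfies $V_t(x)\geqslant -C_t(1+x^2)$, which yields essential self-adjointness of $\mathcal{H}_t$ on $C_0^\infty(\RR)$. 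The closure's domain is then identified with $\mathcal{D}$ by separate elliptic regularity arguments on $x<0$ and $x>0$, together with the moment bound $xf\in L^2$ that is forced by requiring $V_t f\in L^2$.

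\textbf{Part (b).} With $\mathcal{H}_t$ self-adjoint for each $t$ on the common domain $\mathcal{D}$, I would appeal to Kato's theorem on time-dependent generators to construct a strongly continuous unitary propagator $U(t,s)$ on $L^2(\RR)$, so that $\psi(t)=U(t,0)\psi_0\in\mathcal{D}$ solves \eqref{schrodinger2} for $\psi_0\in\mathcal{D}$. Kato's hypotheses---common domain, strong regularity of $t\mapsto\mathcal{H}_t$ on $\mathcal{D}$, and appropriate relative boundedness---all follow from the explicit form of the $t$-dependence, which enters only through the smooth scalar factor $\cos\omega t$ multiplying the fixed term $-Ex\Theta(x)$. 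Unitarity of $U(t,s)$ immediately yields $\|\psi(t)\|_{L^2}=\|\psi_0\|_{L^2}$; equivalently, for $\psi_0\in\mathcal{D}$ one may compute directly $\tfrac{d}{dt}\|\psi(t)\|^2=-2\,\mathrm{Im}\,\langle\mathcal{H}_t\psi(t),\psi(t)\rangle=0$ using the self-adjointness of $\mathcal{H}_t$, and then extend by density to all $L^2$ initial data.

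\textbf{Main obstacle.} The principal analytic difficulty is that when $\cos\omega t>0$ the potential drops linearly to $-\infty$ as $x\to+\infty$, so $\mathcal{H}_t$ is \emph{not} semibounded; standard form-theoretic tools such as KLMN or the Friedrichs extension are therefore unavailable, and one really needs an unbounded-below criterion such as Faris--Lavine. Equally delicate is pinning down the self-adjoint extension's domain as precisely $\mathcal{D}$: the jump discontinuity of $V_t$ at $x=0$ threatens to introduce a singular contribution to $f''$ there, and one must argue carefully---using the $H^1(\RR)$ constraint built into $\mathcal{D}$---that no such contribution survives in $\mathcal{H}_t f$. Once these two hurdles are cleared, part (b) is comparatively routine, because the time dependence is through an explicit smooth scalar factor and the domain $\mathcal{D}$ is independent of $t$.
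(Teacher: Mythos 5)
Your route is genuinely different from the paper's. For (a), the paper does not work in the length gauge at all: it first applies the unitary gauge transformation $\psi\mapsto e^{-ixA_t\Theta(x)}\psi$ to pass to the magnetic gauge, where the unbounded term $-Ex\Theta(x)\cos\omega t$ disappears and is replaced by a first-order term with bounded coefficient plus a $\delta$ at the origin; self-adjointness then follows from Kato--Rellich relative to $-\partial_x^2$ (and the same argument handles the quasi-energy operator). For (b), the paper does not invoke an abstract propagator theorem: it uses the explicit half-line Fourier representations \eqref{hatpsimin}, \eqref{hatpsiplu} of the solution already constructed from the integral equation, integrates by parts on a dense set of smooth, rapidly decaying data vanishing to high order at $0$ to show $\partial_t\hat\psi,\ \xi\hat\psi,\ \xi^2\hat\psi\in L^2$, deduces that the solution stays in the operator domain so that $\tfrac{d}{dt}\|\psi(\cdot,t)\|_2^2=0$ is justified, and concludes unitarity from norm conservation plus reversibility. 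Your Faris--Lavine step in (a) is sound and arguably cleaner for establishing essential self-adjointness on $C_0^\infty(\RR)$: the potential is in $L^2_{\mathrm{loc}}$ and bounded below by $-C_t(1+x^2)$, so the criterion applies directly without any gauge change.

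There are, however, two concrete gaps. First, in (a) your identification of the closure's domain rests on the claim that $x f\in L^2$ ``is forced by requiring $V_tf\in L^2$''; but membership in the operator domain only gives $-\tfrac12 f''+V_tf\in L^2$, not each term separately, and for Stark-type potentials there is genuine cancellation between the kinetic and potential parts (e.g.\ for $p^2-x$ on $\RR$, functions with $\hat f(\xi)=e^{-i\xi^3/3}g(\xi)$, $g\in H^1$, $\xi^2 g\notin L^2$, lie in the closure's domain but not in $H^2\cap D(x)$). So the closure of $C_0^\infty$ need not have domain contained in $\mathcal D$, and this step would fail as written; the paper sidesteps it (somewhat tersely) by transporting the magnetic-gauge domain. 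Second, and more seriously, in (b) the hypotheses of Kato's time-dependent theorem are not ``routine'' here: the difference $\mathcal H_t-\mathcal H_s=-Ex\Theta(x)(\cos\omega t-\cos\omega s)$ is an \emph{unbounded} multiplication operator, and its relative boundedness with respect to $\mathcal H_s$ (with uniform constants), as well as the $t$-independence of the domain, are exactly the delicate points the paper flags in \S\ref{S:Math} as the reason standard methods do not apply. The observation that the time dependence enters ``only through the smooth scalar factor $\cos\omega t$'' does not help, because that factor multiplies an unbounded operator. The standard remedy is precisely the gauge transformation the paper uses (after which the time-dependent part is bounded); without it, or without the paper's explicit-solution argument, the appeal to Kato's theorem is a genuine missing step rather than a routine verification.
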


\begin{Theorem}\label{theo:existence}

 If the initial state $\psi(\cdot,0):=f$ is in $\mathcal{D} $, then \eqref{schrodinger2} has a unique solution $\psi(\cdot,t)\in\mathcal{D}$, and $\psi(x,t)$ is continuously differentiable in $t>0$.

\end{Theorem}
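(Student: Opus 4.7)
The plan is to reduce the full problem to a one-dimensional (time-only) integral equation for the boundary trace $\psi_0(t):=\psi(0,t)$, invoke the contraction lemma (Lemma~\ref{lemma:contraction}) to get existence and uniqueness of $\psi_0$, and then reconstruct $\psi(x,t)$ via the formulas \eqref{psiminus}, \eqref{psiplus}, \eqref{fomulapsix0}. The key structural observation is that on each half-line the potential is either zero ($x<0$) or the Stark potential $U-Ex\cos\omega t$ ($x>0$), so both half-line evolutions can be written down fairly explicitly once a boundary datum at $x=0$ is specified.

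On $x<0$ I would use the free Schrödinger propagator, viewed as a half-line problem with prescribed Dirichlet data $\psi_0(t)$ at $x=0$ and initial data $f|_{(-\infty,0)}$. This yields an explicit representation of $\psi(x,t)$ for $x<0$ in terms of $f$ and $\psi_0$ (the formula \eqref{psiminus}). On $x>0$ I would first absorb the Stark term $-Ex\cos\omega t$ by a time-dependent gauge/Galilean change of variables so that the transformed Hamiltonian has only a time-dependent $c$-number (the ponderomotive phase) plus the constant barrier $U$; the free-like half-line propagator then gives an explicit formula for $\psi(x,t)$, $x>0$, in terms of $f|_{(0,\infty)}$ and $\psi_0$ (the formulas \eqref{psiplus}, \eqref{fomulapsix0}). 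Imposing $C^1$-matching at $x=0$ (continuity of $\psi$ is built in, continuity of $\partial_x\psi$ is the remaining condition) produces a linear integral equation in time for $\psi_0$, which is the equation \eqref{eqcontr}. Lemma~\ref{lemma:contraction} delivers a unique solution $\psi_0$, and the reconstruction formulas then produce a unique candidate $\psi(x,t)$.

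What remains is to verify that this candidate actually lies in $\mathcal{D}$ for every $t>0$, satisfies \eqref{schrodinger2} in the strong sense, and depends continuously differentiably on $t$. The half-line formulas preserve $H^2$ away from $x=0$ (since on each side the evolution is generated by a self-adjoint operator on the natural half-line domain) and, after reassembly, the $C^1$ matching at $x=0$ forces the global $H^1(\RR)$ condition. The weighted-$L^2$ condition $xf\in L^2$ is propagated because on $x<0$ the generator commutes with translations and on $x>0$ the Stark flow moves the wavepacket by at most $O(E/\omega^2)$ in position over a period, so the first moment grows only linearly in $t$ and stays locally finite. Continuous differentiability in $t$ then follows from differentiating the integral equation for $\psi_0$, combined with the self-adjointness provided by Theorem~\ref{Thzero}, which guarantees that $\mathcal{H}_t\psi(\cdot,t)$ makes sense in $L^2$ and equals $i\partial_t\psi$.

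The main obstacle I expect is the propagation of the domain $\mathcal{D}$ through the Stark half-line evolution: the linear potential is unbounded in $x$ and its coupling to $\Theta(x)$ creates a discontinuity, so one cannot directly apply standard Kato-type generation theorems that require a common, smoothly varying domain. The cleanest way around this is to work with the gauge-transformed generator on $x>0$, for which the domain is stationary and standard; one then tracks the regularity through the explicit reconstruction formulas, rather than asking the abstract theory to supply it. All other steps — uniqueness (from the contraction), $C^1$-in-$t$ regularity (from differentiating the explicit formulas), and self-consistency with the PDE (direct verification) — are then fairly mechanical.
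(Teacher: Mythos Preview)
Your proposal is correct and follows essentially the same route as the paper: derive the boundary integral equation \eqref{eqcontr} for $\psi_0$ by matching the two half-line evolutions at $x=0$, solve it via Lemma~\ref{lemma:contraction}, and rebuild $\psi$ from \eqref{psiminus}, \eqref{psiplus}, \eqref{fomulapsix0}. The paper carries out the half-line representations via one-sided Fourier transforms rather than through an explicit gauge/Galilean change of variables, but this is the same computation in different clothing; your remarks on propagating the domain $\mathcal{D}$ are handled in the paper by the Fourier-space integration-by-parts estimates in \S\ref{S:proof 1b} rather than by abstract generation theory.
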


\begin{Theorem}[Long Time Behavior]\label{Thlongtime}
  (i) For initial conditions in a dense subset of $\mathcal{D}$ we have: for any compact set $A\subset \RR$ the long time behavior of solutions is\footnote{We believe that the actual behavior below is $O(t^{-3})$, but this results from difficult to calculate cancellations occurring in algebraically cumbersome expressions.}
  \begin{equation}\label{intpsisq}
  \int_A|\psi(x,t)|^2\, dx=O(t^{-1})\ \ \ \text{as\ }t\to\infty
  \end{equation}
  
\noindent   (ii) If the initial condition $\psi(\cdot,0)$ is in $\mathcal{D}$, then
   \begin{equation}
  \label{largetphi}
    \lim_{t\to\infty}\psi(x,t)=0  \end{equation}
  uniformly in $x$ in compact sets in $\RR$.
\end{Theorem}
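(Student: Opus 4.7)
The plan is to exploit the time-periodicity of the Hamiltonian via the Floquet (quasienergy) operator $K=-i\partial_\tau+\mathcal{H}_\tau$ on the enlarged Hilbert space $\mathcal{K}=L^2([0,2\pi/\omega])\otimes L^2(\RR)$, with $\tau$ a $2\pi/\omega$-periodic time variable. Since the physical evolution is unitary (Theorem \ref{Thzero}(b)), $K$ is self-adjoint on a suitable domain and its spectral decomposition governs the long-time dynamics. The key input, stated in the abstract as part of the spectral analysis of $K$, is that the discrete spectrum of $K$ is empty, so $\mathcal{K}$ decomposes entirely into continuous-spectrum subspaces without bound states, and a RAGE-type statement already yields time-averaged decay on compact spatial sets.

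For part (i), working in the spectral representation $K=\int\lambda\,dE(\lambda)$, I would write, for $f$ in a suitable dense subset of $\mathcal{D}$,
\begin{equation}
\mathbf{1}_A\,\psi(\cdot,t)=\mathbf{1}_A\int e^{-i\lambda t}\,dE(\lambda)\,\tilde f,
\end{equation}
where $\tilde f$ is the Floquet lift of $f$. The dense subset consists of those $f$ whose spectral density $\rho_f(\lambda):=\tfrac{d}{d\lambda}\langle\tilde f,E(\lambda)\tilde f\rangle$ is $C^1$ with integrable derivative; because the spectrum is purely continuous, such $f$ are dense (they arise by applying smooth, compactly supported spectral cutoffs to a dense family in $\mathcal{D}$). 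One integration by parts in $\lambda$ then yields $\|\mathbf{1}_A\psi(\cdot,t)\|_{L^2}^2=O(t^{-1})$. The regularity of $\rho_f$ and the validity of the integration by parts must be extracted from the representation of $\psi_0$ via the integral equation \eqref{eqcontr} and the reconstruction formulas \eqref{psiminus}, \eqref{psiplus}, \eqref{fomulapsix0}, which supply uniform bounds on the generalized eigenfunctions and their $\lambda$-derivatives.

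For part (ii), I would first upgrade (i) to $L^2$-decay on compact sets for every $f\in\mathcal{D}$ by a density-plus-unitarity argument: given $\varepsilon>0$, pick $g$ in the dense subset with $\|f-g\|_{L^2}<\varepsilon$, so that Theorem \ref{Thzero}(b) gives
\begin{equation}
\|\mathbf{1}_A\psi_f(\cdot,t)\|_{L^2}\leqslant\|\mathbf{1}_A\psi_g(\cdot,t)\|_{L^2}+\varepsilon,
\end{equation}
and the right-hand side is at most $2\varepsilon$ for $t$ large by part (i). To upgrade this $L^2$-decay to uniform pointwise decay, I would combine the Sobolev embedding $H^1(A)\hookrightarrow C(\overline A)$ with the interpolation inequality $\|u\|_{L^\infty(A)}\leqslant C\|u\|_{L^2(A)}^{1/2}\|u\|_{H^1(A)}^{1/2}$. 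The required uniform-in-$t$ $H^1$-bound is supplied by Theorem \ref{theo:existence}, which keeps $\psi(\cdot,t)\in\mathcal{D}\subset H^1(\RR)$; differentiating $\|\partial_x\psi\|_{L^2(A')}^2$ on a slightly larger compact set $A'\supset A$ and using that the potential is locally bounded on $A'$ produces, by a Gronwall-type estimate, a time-uniform bound depending only on $\|f\|_{\mathcal{D}}$.

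The main obstacle lies in establishing the $C^1$-regularity of $\rho_f$ needed in part (i). Standard Mourre-type commutator techniques do not apply because the potential is discontinuous at $x=0$ and unbounded as $x\to+\infty$, making $\mathcal{H}_t$ unbounded in both the spatial and the Fourier domain, as emphasized in the introduction. The required regularity must therefore be read off from the contraction-mapping solution of \eqref{eqcontr} (Lemma \ref{lemma:contraction}) through a careful analysis of how $\psi_0$ depends $C^1$-smoothly on the spectral parameter; translating this parameter dependence into quantitative, integrable bounds on $\rho_f'(\lambda)$ that are uniform in $f$ on the dense subset is the technically demanding step.
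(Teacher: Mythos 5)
Your outline identifies the right organizing principle (the quasienergy operator and the absence of its point spectrum), but it treats as \emph{inputs} precisely the two things that constitute the proof. First, the emptiness of the discrete spectrum of $\mathcal K$ is not available as a black box: it is one of the main results being established, and the paper obtains it only after a long chain of work — discrete-Laplace transforming the integral equation \eqref{eqcontr}, showing the transformed operator splits into a compact ``singular'' part $\mathcal L_\sigma$ and a compact $\mathcal K_\sigma$, invoking the analytic Fredholm alternative to get meromorphy in $z=e^{2\pi i\sigma}$, showing (Lemma \ref{ifpoles}) that a pole would produce a solution asymptotic to $t^{m-1}e^{-i\sigma_p t}a(x,t)$ with $a$ periodic, and finally ruling such solutions out by the matching condition at $x=0$ via a transseries/resurgence argument (\S\ref{RAGE}). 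Citing the abstract for this step makes the argument circular. Second, the $C^1$-regularity of the spectral density that your integration by parts requires is exactly the quantitative spectral information that the paper extracts instead as the statement that $\mathcal P_\sigma\psi_0$ has only square-root branch points (no poles) on the unit circle; the actual singularity structure \eqref{stareq20} is of the form $a_{-1}(\sigma-\sigma_0)^{-1/2}+\dots$, whose ``derivative'' is not integrable at the thresholds, so your proposed dense subset and the bookkeeping of the rate would have to be set up with care around the threshold points $\sigma=0,\sigma_0$. Deferring this to ``a careful analysis of how $\psi_0$ depends $C^1$-smoothly on the spectral parameter'' leaves the core of the proof undone. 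A further structural point: $\|\mathbf 1_A\psi(\cdot,t)\|^2$ is an off-diagonal matrix element $\langle e^{-iKt}\tilde f,\mathbf 1_A e^{-iKt}\tilde f\rangle$, not the autocorrelation $\langle\tilde f,e^{-iKt}\tilde f\rangle$, so one integration by parts in $\lambda$ requires smoothness of $\langle dE(\lambda')\tilde f,\mathbf 1_A\,dE(\lambda)\tilde f\rangle$ in both spectral variables, i.e.\ uniform control of generalized eigenfunctions on $A$ — again the content of the paper's analysis of \eqref{psiminus}--\eqref{psiplus}.

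For part (ii), the proposed upgrade to uniform pointwise decay rests on a time-uniform $H^1(A')$ bound obtained ``by a Gronwall-type estimate.'' This is not justified: the Hamiltonian is time-dependent and unbounded (the paper emphasizes unboundedness in both position and momentum space), so neither energy conservation nor a Gronwall bound yields a $t$-uniform bound on $\|\partial_x\psi\|_{L^2(A')}$; differentiating that quantity in $t$ also produces uncontrolled flux terms through $\partial A'$. The paper avoids this entirely: pointwise decay of $\psi(x,t)$ on compacts is read off directly from the reconstruction formulas \eqref{psiminus}, \eqref{psiplus}, \eqref{fomulapsix0} once the $O(t^{-1/2})$ decay of $\psi_0(t)$ is established from the Hankel-contour asymptotics \eqref{largek} of the inverse discrete-Laplace transform. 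If you want to pursue your route, you would need to replace the Gronwall step by the same representation-formula argument, at which point the spectral-theorem scaffolding becomes redundant.
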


\begin{Theorem}[Wave Initial Condition]\label{theo:2}
 For the initial state \eqref{init}  equation \eqref{schrodinger2} has a unique solution that is bounded, and
   $$\psi(x,t)\sim e^{-ik^2 t/2}\phi(x,t)\ \ \ \text{as\ }t\to\infty$$
 where $\phi$ is time-periodic of period $2\pi/\omega$. 
\end{Theorem}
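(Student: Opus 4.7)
The plan is to reduce the wave-initial-condition case to a decay problem for a remainder term. First I would construct a candidate periodic function $\phi(x,t)$ such that $e^{-ik^2 t/2}\phi(x,t)$ is a formal solution of \eqref{schrodinger2} carrying the prescribed incoming wave $e^{ikx}$; then I set $\rho := \psi - e^{-ik^2 t/2}\phi$ and show $\rho\to 0$ on compact $x$-sets. For the construction, I would impose the periodicity ansatz $\phi(0,t) = \sum_n c_n e^{-in\omega t}$ and insert it into the periodic version of the integral equation \eqref{eqcontr} (obtained by discarding the transient initial-data contribution). This produces an infinite linear system for $(c_n)_{n\in\mathbb Z}$ --- essentially the system of Faisal \emph{et al.} I would solve it in a suitable weighted $\ell^2$ space via a Fredholm-alternative argument: the relevant operator inherits good mapping properties from the kernel estimates behind Lemma \ref{lemma:contraction}, while uniqueness follows from the emptiness of the point spectrum of the quasi-energy operator (the key spectral input used in Theorem \ref{Thlongtime}). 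Having $\phi(0,t)$, I would reconstruct $\phi(x,t)$ on all of $\mathbb R$ via \eqref{psiminus}, \eqref{psiplus}, \eqref{fomulapsix0}; boundedness of $\phi$ then follows from the periodicity and absolute summability of the $c_n$'s.

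For the remainder, observe that $\phi_0$ and $\phi(\cdot,0)$ share the same incoming wave $e^{ikx}$ on $x<0$ by construction, so $\rho(\cdot,0) = \phi_0-\phi(\cdot,0)$ consists of purely outgoing plane waves for $x<0$ plus a localized correction near the origin and an exponentially decaying piece for $x>0$. Both $\psi$ and $e^{-ik^2 t/2}\phi$ satisfy \eqref{schrodinger2}, so $\rho$ does too, and its boundary value $\rho(0,t)$ solves a version of \eqref{eqcontr} whose forcing term encodes only the ``scattered'' (outgoing) part of the initial datum. I would show $\rho(0,t)\to 0$ by a Laplace/Fourier analysis of this equation: the transformed forcing is analytic near the real axis except at the quasi-energies $k^2/2+n\omega$, and by the same spectral input as above there is no resonance there, so inversion of the transform yields decay (consistent with the $O(t^{-1/2})$ rate of Theorem \ref{Thlongtime}). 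This decay then propagates to $\rho(x,t)\to 0$ uniformly on compact $x$-sets through the smoothing reconstruction formulas.

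The main obstacle, as I anticipate it, is the invertibility step for the Faisal operator: it is neither compact nor self-adjoint in any naive $\ell^2$ setting, and the target quasi-energy $k^2/2$ lies inside the continuous spectrum of the quasi-energy operator, so I would need to work in weighted sequence spaces that encode the radiation (outgoing) boundary condition, effectively removing $k^2/2$ from the spectrum in the weighted topology and enabling an analytic-Fredholm conclusion. A secondary subtlety is that the initial data \eqref{init} does not lie in $\mathcal{D}$, so Theorem \ref{theo:existence} does not apply directly; uniqueness of the bounded solution $\psi$ must be established by extending Lemma \ref{lemma:contraction} to this larger class, using that \eqref{eqcontr} is universal in the boundary trace $\psi(0,t)$ and that the non-$L^2$ part of $\phi_0$ enters the equation only through an explicit, computable forcing term built from the plane-wave components of \eqref{init}.
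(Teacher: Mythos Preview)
Your strategy is the reverse of the paper's. You propose to \emph{first} construct the periodic target $\phi$ by solving the Faisal system, and \emph{then} show the remainder $\rho=\psi-e^{-ik^2t/2}\phi$ decays. The paper never constructs $\phi$ in advance. Instead it runs the same discrete-Laplace-transform (DLT) machinery built for Theorem~\ref{Thlongtime} directly on the full solution with initial datum~\eqref{init}: one computes $h_\pm(0,t)$ explicitly (they are error-function expressions), reads off that $\mathcal P_\sigma h_\pm$ has a simple pole at $\sigma=k^2/2$ and is otherwise analytic in $\sqrt\sigma$, and then the contour-deformation argument of Lemma~\ref{ifpoles} yields $\psi(x,t)=e^{-ik^2t/2}\phi(x,t)(1+O(t^{-1/2}))$ with $\phi$ periodic. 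The absence of \emph{extra} poles is already secured by Proposition~\ref{prop:floquet}. In this route the existence of a solution to the Faisal system is a \emph{corollary} (Remark~\ref{Faisal}), not an input.

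Your approach is not wrong in principle, but it front-loads the hardest step. The invertibility of the Faisal operator in a weighted $\ell^2$ with radiation conditions is precisely the issue the paper sidesteps; you correctly flag it as the main obstacle, and there is no cheap way around it short of essentially reproducing the analytic-Fredholm analysis the paper already carries out for $\mathcal L_\sigma+\mathcal K_\sigma$. Moreover, your remainder $\rho(\cdot,0)$ is not in $L^2$ (on $x>0$ the periodic $\phi(\cdot,0)$ carries transmitted Volkov-type oscillatory channels, not just evanescent ones), so you cannot invoke Theorem~\ref{Thlongtime} for $\rho$; you would again be forced back to a DLT/Laplace analysis of the forcing in \eqref{eqcontr}, checking that purely outgoing plane-wave pieces contribute no poles. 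At that point you are doing the paper's computation anyway, just split across two passes instead of one.
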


\begin{Remark}
{\rm   In the proof of Theorem\,\ref{theo:2} we make an additional simplifying assumption: $U+\frac{E^2}{4\omega^2}$ is not an integer multiple of $\omega$, and neither is $U+\frac{E^2}{4\omega^2}-\frac{k^2}2$.
  We do this because these two special cases have a slightly different singularity structure, which would require small changes in the proof, which we will not belabor.
   The two exceptional cases above correspond to a marginal situation in which absorbing an integer number of photons raises the energy of the electron to exactly the ionization value.}
\end{Remark}

\begin{Remark}\label{Faisal}
{\rm In \cite{FKS05}, Faisal, Kami\'nski and Saczuk computed the periodic solutions of the Schr\"odinger equation \eqref{schrodinger2} with an incoming plane wave.
By Theorem \ref{theo:2}, the solution they computed must be the asymptotic solution $\phi$.}
\end{Remark}

 The rest of the article deals with proving these results.

\subsection{ Outline of the Mathematical Approach}\label{S:Math}
The external potential $\Theta(x)(U-E x\cos(\omega t))$ is unbounded both in the physical domain and, due to low regularity,  in spatial Fourier space. These issues are at the root of some of the more serious difficulties of this model. Since non-smoothness is localized at $x=0$, it is convenient to work with one-sided Fourier transforms, by means of which we obtain a left-to-right continuity  integral equation. Existence, uniqueness, regularity and unitarity are derived,  by more or less standard operator theory techniques, in \S \ref{S:proof 1b} from the Fourier transform of this equation.
   
   Specific information about the behavior of the system is obtained from the equation satisfied by $\psi(0,t)$, an equation of the form \eqref{eqcontr} below. The integral operator in this equation is quite involved.  The  high complexity of the equations governing the evolution of many quantities of interest represents another source of technical difficulties.

   By far the most delicate task in this model is finding the large time behavior of the system. The usual Laplace transform methods (see \cite{CLT10,CCL18} and references therein) cannot be used here because of their daunting algebraic complexity. Instead,  we introduce a number of new methods.

   In a nutshell, we rely on ``sampling'' the wave function at $t=t_n=n(1+r) \frac{2\pi}\omega, n\in\NN, r\in [0,1)$ which we use as coefficients of a generating function, which is analytic in $r$ in the open unit disk. This analyticity only requires exponential bounds on the growth of the wave function with respect to time, a type of bounds  which are not difficult to get from the integral equation it satisfies. This generating function satisfies a sequence of equation based on compact operators in a family of Banach spaces (a type of decomposition of the governing equation that also seems new).

   The type of singularities of the generating function on the unit {\it circle} determine, by means of asymptotics of Fourier coefficients, the long time behavior of the system (see \S \ref{PfLongt}). If these singularities are weaker than poles, then $L^2$ initial conditions result in decay of the wave function for large time, pointwise in $x$. The presence of poles has an equivalent reformulation as the existence of nontrivial discrete spectrum of a compact operator in (a sequence of) Banach spaces. We show that the discrete spectrum of the aforementioned compact operators is empty, a property which is equivalent to the absence of poles of the generating function, hence  of bound states of the associated quasi-energy operator. The analysis of bound states of the quasi-energy operator, always a nontrivial task, is especially delicate here, and to tackle it we resorted to a new approach relying on the theory of resurgence and transseries, cf. \S\ref{RAGE}, as well as techniques of determining the global analytic structure of functions from their Maclaurin series \cite{CX15}, see \S\ref{calcPh}--\S\ref{merosol}.

   The proof of Theorem \ref{Thzero} (a) is given in \S\ref{S:proof 1a};  The proof of Theorem \ref{Thzero} (b) is given in   \S\ref{S:proof 1b}. The proof of Theorem \ref{theo:existence} is given in \S\ref{S:ProofT2} and Theorems \ref{Thlongtime} and \ref{theo:2} are proved in \S\ref{PfLongt}.

   \section{The spatial Fourier transform of  \eqref{schrodinger2}}\label{deriving}

Before turning to the proofs of the main results, we reduce the Schr\"odinger equation \eqref{schrodinger2} to a system of integral equations, which are derived by taking one-sided (half-line) Fourier transforms of $\psi$, denoted by $\hat\psi_-$ and $\hat\psi_+$ (this is equivalent to taking a pair of Laplace transforms; see also the paper by Fokas \cite{Fo97}).

Denote $\psi_0(t)=\psi(0,t)$ and $\psi_{x.0}(t)=\partial_x\psi(0,t)$. Recall the notation $\psi(x,0):=f(x)$.

We show that these transforms are in $L^2$ when the initial condition is in $L^2$. For the initial condition \eqref{init}, the calculation is understood in the sense of distributions. After establishing the main equations we need, the proofs will rely on essentially reversing, rigorously, these steps.

We calculate $\psi_-$ for $x<0$ by taking the half-line Fourier transform of \eqref{schrodinger2} on $\RR_-$, and the solutions  $\psi_+$ for $x>0$ by taking the half-line Fourier transform on $\RR_+$. We then impose the matching condition $\psi_-(0-,t)=\psi_+(0+,t):=\psi_0(t)$ and $\partial_x\psi_-(0-,t)=\partial_x\psi_+(0+,t):=\psi_{x,0}(t)$. Then $\psi(x,t)=\Theta(-x)\psi_-(x,t)+\Theta(x)\psi_+(x,t)$ is a solution of \eqref{schrodinger2}. We write
$$\hat{\psi}(\xi,t)=\frac 1{\sqrt{2\pi}}\int_{-\infty}^\infty {\rm e}^{-ix\xi}\psi(x,t)\, dx:=\hat{\psi}_-(\xi,t)+\hat{\psi}_+(\xi,t)$$ where $\hat{\psi}_\pm$ are the half-line Fourier transform of $\psi_\pm$. 

\begin{Note}\label{Note}  {\it As usual, the Fourier $\mathcal{F}$ transform of an $L^2$ function $f$ on a noncompact region $\mathcal{R}$  is understood as an $L^2$ limit of Fourier integrals on increasing compact subdomains $\mathcal{R}_N$ such that $\displaystyle \bigcup_N\mathcal{R}_N=\mathcal{R}$. We have 
 $$\hat{f}:=\mathcal{F} f =\frac1{\sqrt{2\pi}}\,{\rm l.i.m}\int_{\mathcal{R}_N} e^{-i \xi x}f(x) dx$$
where we adopted the notation of \cite[p.11]{RS75b}: in $n$ dimensions $l.i.m.$ stands for the norm limit of the integral over a ball of radius $R$ as $R\to\infty$.

To avoid complicating the notation, when we are not performing operations with such integrals, we will simply write
$$\hat{f}:=\mathcal{F} f =\frac1{\sqrt{2\pi}}\,\int_{\mathcal{R}} e^{-i \xi x}f(x) dx$$
}\end{Note}

By \eqref{schrodinger2},  $\hat{\psi}_-(\xi,t)$ satisfies
  \begin{equation}\label{eqhatpsim}
  i\frac{\partial \hat{\psi}_-}{\partial t}=\frac12\xi^2 \hat{\psi}_--  \frac1{2\sqrt{2\pi}}  \psi_{x,0}(t)-i\xi   \frac1{2\sqrt{2\pi}} \psi_{0}(t)
  \end{equation}
where $\psi_{0}(t)=\psi(0_-,t)$ and $\psi_{x,0}(t)=\partial_x \psi(0_-,t)$, whose solution with initial condition $f$ is
\begin{equation}\label{hatpsimin}
 \hat{\psi}_-(\xi,t)= {\rm e}^{-i\xi^2 t/2}\left\{ C_-(\xi) + \frac1{2\sqrt{2\pi}}  \int_0^{t} {\rm e}^{i\xi^2 s/2}\ \left[  i\psi_{x,0}(s)-\xi \psi_{0}(s) \right]\, ds\right\}
 \end{equation}
where
\begin{equation}\label{Cminus}
C_-(\xi) =\frac 1{\sqrt{2\pi}}\int_{-\infty}^0 {\rm e}^{-iy\xi}f(y)\, dy
\end{equation}

Similarly, $\hat{\psi}_+$ satisfies
\begin{equation}\label{eqhatpsip}
i\frac{\partial \hat{\psi}_+}{\partial t}=-i E \cos \omega t\,  \frac{\partial \hat{\psi}_+}{\partial \xi} +\left( \frac 12\xi^2+U\right) \hat{\psi}_+ +  \frac1{2\sqrt{2\pi}}  \psi_{x,0}(t)+i\xi   \frac1{2\sqrt{2\pi}}  \psi_{0}(t)
\end{equation}
 where $\psi_{0}(t)=\psi(0_+,t)$ and $\psi_{x,0}(t)=\partial_x \psi(0_+,t)$ (since we will impose the matching conditions we denote the lateral limits at $0$ the same, to avoid an overburden of the notation), with the solution
  \begin{equation}\label{hatpsiplu}
    \hat{\psi}_+(\xi,t)
    =e^{-i\Phi(u,t)}  \left\{C_+(u)+\int_0^t \, e^{i\Phi(u,s)} \frac1{2\sqrt{2\pi}}  [ -i \psi_{x,0}(s)+\xi\psi_0(s)]\right\}\, ds
  \end{equation}
  where
  \begin{equation}\label{uPhi}
   u= \xi-\frac E\omega \sin\omega t,\ \ \  \ \ \Phi(u,t)= \frac12 u^2 t +\left( U+\frac{E^2}{4\omega^2}\right)t - \frac{E}{\omega^2} u \cos(\omega t) -\frac{E^2}{8\omega^3}\sin(2\omega t)
  \end{equation}
  and  
   \begin{equation}\label{Cplus}
   e^{i\frac{E}{\omega^2} \xi}C_+(\xi) =\frac 1{\sqrt{2\pi}}\int_0^\infty {\rm e}^{-iy\xi}f(y)\, dy
   \end{equation}
 Taking the inverse Fourier transform, we obtain that for $x<0$ the wave function $\psi:=\psi_-$ satisfies
   \begin{equation}\label{psiminus}
  \psi_-(x,t)=h_-(x,t)+ \frac{\sqrt i}2\sqrt{2\pi} \int_0^t ds\, \left(\psi_{x,0}(s)+\frac{ix}{t-s}\psi_0(s)\right)\frac{e^{\frac{ix^2}{2(t-s)}}}{\sqrt{t-s}}
  \end{equation}
 (note that the last term is a convergent improper integral) where
  \begin{equation}
    h_-(x,t)=\frac 1{\sqrt{2\pi i t}}\int_{-\infty}^0 dy \, f(y) \, {\rm e}^{\frac{i(x-y)^2}{2t}} 
        \label{hminus}
    .
  \end{equation}
  For $x>0$, $\psi:=\psi_+$ satisfies
   \begin{multline}\label{psiplus}
    \psi_+(x,t)
    =h_+(x,t)+\frac 1{2\sqrt{2\pi i}}\  \int_0^t \,ds\,  \frac{ -i \psi_{x,0}(s)+\frac E\omega \sin\omega t\, \psi_0(s)}{\sqrt{t-s}}\   \, e^{iF(x,s,t)}  \\
    +\frac1{2\sqrt{2\pi i}}\ \int_0^t \,ds\,\psi_0(s)\, \frac{\frac{E}{\omega^2} [\cos(\omega t) -\cos(\omega s)]+x}{(t-s)^{3/2}}  \, e^{iF(x,s,t)}
 \end{multline}
 where
  \begin{equation}
    h_+(x,t)=    \frac1{\sqrt{2\pi i t }} {e^{ix\frac{E}{\omega}\sin(\omega t)-i(U+\frac{E^2}{4\omega^2})t+i\frac{E^2}{8\omega^3}\sin(2\omega t)  }}\int_0^{\infty }dy\, f(y)\, e^{\frac{i\left[x-y-\frac{E}{\omega^2}(1-\cos(\omega t))\right]^2}{2t}}
    .
    \label{hplus}
  \end{equation}
and
 \begin{equation}\label{defFxst}
 F(x,s,t)=x \frac E\omega \sin\omega t -i\left(U+\frac{E^2}{4\omega^2}\right)(t-s) -i\frac{E^2}{8\omega^3}\left[\sin(2\omega t)-\sin(2\omega s)\right]+\frac{\left[x+\frac{E}{\omega^2} (\cos\omega t -\cos\omega s)\right]^2}{2(t-s)}
 \end{equation}
    From \eqref{psiminus} and \eqref{hminus} we have 
  \begin{equation}\label{psim0}
  \psi_-(0,t)=h_-(0,t)+ \frac {\sqrt{i}}{2\sqrt{2\pi}}\int_0^t\, ds\, \psi_{x,0}(s)\frac 1{\sqrt{t-s}}
  +\frac12\psi_0(t) 
  \end{equation}
   where
  \begin{equation} \label{hminus0}
    h_-(0,t)=\frac 1{\sqrt{2\pi i t}}\int_{-\infty}^0 dy \, f(y) \, {\rm e}^{\frac{iy^2}{2t}}
\end{equation}
From  \eqref{psiplus}-\eqref{hplus} we have
   \begin{multline}\label{psiplus0}
    \psi_+(0,t)
    =h_+(0,t)+\frac1{2\sqrt{2\pi i}}\ \int_0^t \,ds  [ -i \psi_{x,0}(s)+\frac E\omega \sin\omega t\, \psi_0(s)]\ 
  \frac1{\sqrt{t-s}}\, e^{iF_0(s,t)} \\
  +\frac1{2\sqrt{2\pi i}}\, 
     \int_0^t\,ds\,  \psi_0(s)  \frac { E}{\omega^2}\frac{\cos \left( \omega
\,t \right) -\cos \left( \omega\,s \right)   }{(t-s)^{3/2}}\,  e^{iF_0(s,t)}  +\tfrac12\psi_0(t)
    \end{multline}
  where
   \begin{equation} \label{hplus0}
    h_+(0,t)=    \frac1{\sqrt{2\pi i t }} {e^{-i(U+\frac{E^2}{4\omega^2})t+i\frac{E^2}{8\omega^3}\sin(2\omega t)}}\int_0^{\infty }dy\, f(y)\, e^{\frac{i(y+\frac{E}{\omega^2}(1-\cos(\omega t)))^2}{2t}} 
  \end{equation}
and
 \begin{equation} \label{Fzero}
 F_0(s,t)=F(0,s,t)=-\left(U+\frac{E^2}{4\omega^2}\right)\, (t-s)+\frac{E^2}{8\omega^3}[\sin(2\omega t)-\sin(2\omega s)]+\frac{E^2 (\cos\omega t -\cos\omega s)^2}{2{\omega^4}(t-s)}
  \end{equation}
     Imposing the condition that $\psi_-(0,t)=\psi_0(t)$ in \eqref{psim0} and that $\psi_+(0,t)=\psi_0(t)$ in \eqref{psiplus0} we obtain a system of equations for $\psi_0$ and $\psi_{x,0}$
\begin{equation}
\psi_0(t)=2h_+(0,t)+ \mathcal{T}+ \frac1{{\sqrt{2\pi i}}}  \int_0^{t} \, ds\, \psi_0(s)\,g(s,t)\,{\rm e}^{iF_0(s,t)}
\end{equation}
where
$$\mathcal{T}= \frac1{{\sqrt{2\pi i}}}  \int_0^{t} \, \left[ -i \psi_{x,0}(s)\right]\,  \frac 1{\sqrt{t-s}}\,{\rm e}^{iF_0(s,t)} \,ds$$
and 
$$g(s,t)=\frac E\omega\frac{\sin \omega t}{{\sqrt{t-s}}}+\frac E{\omega^2} \frac{ \cos \omega t- \cos \omega s}{(t-s)^{3/2}}.$$

 The continuity of $\psi$ and its derivative imply
 
\begin{equation}
  \label{eq:relpsipsix}
  \psi_0(t)=2 h_{-}(0,t)+\sqrt{i/2\pi}\int_0^{t}(t-s)^{-1/2}\psi_{x,0}(s)ds
\end{equation}
which will be used in Lemma\,\ref{lemma:matching} below to eliminate $\psi_{x,0}$ from the equation ensuring the continuity of $\psi$ at $0$: $\psi_+(0,t)=\psi_-(0,t)=\psi_0(t)$,  which in Lemma\,\ref{lemma:contraction} is then shown to have a unique solution.

\bigskip

\section{Proof of Theorem\,\ref{Thzero}}\label{S:proof 1b}

\subsection{ A few more general results}
The unitary transformation $\psi\mapsto \varphi=U_t\psi$ given by
\begin{equation}
  \varphi_t(x)=e^{-ixA_t\Theta(x)}\psi_t(x)
  ,\quad \text{ where }
  A_t:=\int_0^t d\tau\ E_\tau=\frac{e_1}\omega\sin(\omega t)
\end{equation}
maps \eqref{schrodinger2} to the magnetic gauge representation,
\begin{equation}
  i\partial_t\varphi_t(x)=(i\partial_x-\Theta(x)A_t)^2\varphi_t(x)+\Theta(x)V\varphi_t(x)=:\mathcal{H}_{A;t}\phi_t
  .
\end{equation}
The quasi-energy operator $\mathcal{K}$  is defined on the domain
\begin{multline}
\mathcal{D}(\mathcal K) =\left\{\psi\in L^2(\mathbb{T}\times \RR)\cap AC(\mathbb{T}\times \RR):\partial_t\psi\in L^2(\mathbb{T}\times \RR), \right.\\ \left.
\partial_x\psi\in L^2(\mathbb{T}\times \RR),\partial_x^2\psi\in L^2(\mathbb{T}\times \RR), \partial_x\psi(\cdot,t) \in AC( \RR)\right\}
\end{multline}
where $\mathbb{T}$ is the torus $\RR/2\pi\ZZ$,  by
\begin{equation}
 \mathcal{K}= -i\partial_t+(i\partial_x-\Theta(x)A_t)^2+\Theta(x)V
\end{equation}
Let
\begin{equation}
\mathcal{D}_{A} =\left\{\psi\in L^2(\RR),\partial_x\psi\in L^2(\RR), \partial_x\psi(\cdot,t) \in AC( \RR),\partial_x^2\psi\in L^2(\RR)\right\}
\end{equation}
\begin{Proposition}
(i) For each $t$, $\mathcal{H}_{A;t}$ is self-adjoint on $\mathcal{D}_A$.

    (ii)  $\mathcal{K}$ is self-adjoint on $\mathcal{D}$.
\end{Proposition}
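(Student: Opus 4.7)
My plan is to reduce both parts of the proposition to the self-adjointness of $\mathcal{H}_t$ on $\mathcal{D}$ (Theorem\,\ref{Thzero}(a)) by exploiting the fact that the operator $U_t$ was introduced precisely to implement the length-to-velocity gauge transformation and is therefore unitary.

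\emph{Part (i).} I would argue by unitary equivalence. First, $U_t\colon \psi\mapsto e^{-ixA_t\Theta(x)}\psi$ is multiplication by a unimodular function, hence unitary on $L^2(\RR)$. Second, a direct distributional computation, using the standard gauge identity $e^{-iA_tx}(-i\partial_x)e^{iA_tx}=-i\partial_x+A_t$ on $x>0$ and tracking the $\delta$-terms arising from $\partial_x\Theta$, verifies the operator identity $U_t\mathcal{H}_tU_t^{*}=\mathcal{H}_{A;t}$; crucially, the terms involving $x\cos(\omega t)$ are cancelled by the specific choice $A_t=\int_0^tE_\tau\,d\tau$. Third, I would show that $U_t$ maps $\mathcal{D}$ bijectively onto $\mathcal{D}_A$: the gauge factor is smooth on each half-line and unimodular, so it preserves $L^2$-regularity on each side; the weight condition $xf\in L^2$ in $\mathcal{D}$ becomes redundant (the corresponding unbounded potential term is removed in the magnetic gauge), while the $H^1$ continuity at $0$ combined with the $H^2$-regularity on each half-line produces exactly the regularity described by $\mathcal{D}_A$, with the transmission behavior $\partial_x\varphi(0^+)-\partial_x\varphi(0^-)=-iA_t\varphi(0)$ absorbed into the distributional reading of $(i\partial_x-\Theta A_t)^2$. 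Self-adjointness of $\mathcal{H}_{A;t}$ on $\mathcal{D}_A$ then follows from self-adjointness of $\mathcal{H}_t$ on $\mathcal{D}$. As a fallback I would invoke Kato--Rellich directly, noting that after expansion $(i\partial_x-\Theta A_t)^2=-\partial_x^2-2iA_t\Theta\partial_x+A_t^2\Theta$ the perturbation of $-\partial_x^2$ is infinitesimally bounded, since $\|\partial_x\varphi\|\leqslant\epsilon\|\partial_x^2\varphi\|+C_\epsilon\|\varphi\|$ and the remaining coefficients are bounded.

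\emph{Part (ii).} For the quasi-energy operator $\mathcal{K}=-i\partial_t+\mathcal{H}_{A;t}$ on $L^2(\mathbb{T}\times\RR)$, I would use the Floquet/Howland framework. The clock operator $-i\partial_t$ is self-adjoint on $H^1(\mathbb{T})\otimes L^2(\RR)$ with periodic boundary conditions; by part (i), each $\mathcal{H}_{A;t}$ is self-adjoint on the $t$-independent domain $\mathcal{D}_A$; and since $A_t=(E/\omega)\sin(\omega t)$ is smooth and $2\pi/\omega$-periodic, the family $t\mapsto\mathcal{H}_{A;t}$ is strongly $C^\infty$ on that common domain. Howland's theorem then yields self-adjointness of $\mathcal{K}$ on $\mathcal{D}(\mathcal K)$. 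Alternatively, one can argue directly: symmetry on $\mathcal{D}(\mathcal K)$ is immediate by integration by parts in $t$ (boundary terms vanish by periodicity) together with symmetry of each $\mathcal{H}_{A;t}$; the equality $\mathcal{D}(\mathcal K^{*})=\mathcal{D}(\mathcal K)$ is then established mode by mode after Fourier expansion in $t$, each mode being handled by part (i).

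The step I expect to be most delicate is, in part (i), verifying that $U_t$ really maps $\mathcal{D}$ onto $\mathcal{D}_A$: the gauge factor $e^{-ixA_t\Theta(x)}$ has a jump in its $x$-derivative at $0$, so tracking how the regularity classes transform and how the implicit transmission condition at $x=0$ matches the distributional action of $(i\partial_x-\Theta A_t)^2$ requires careful bookkeeping of $\delta$-contributions. Once this correspondence between domains is pinned down, the remainder of both (i) and (ii) reduces to standard Hilbert-space arguments.
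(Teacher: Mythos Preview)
Your primary route for (i) is circular in this paper's logic: Theorem~\ref{Thzero}(a) is \emph{deduced from} the Proposition, not the other way around. The paper first establishes self-adjointness of $\mathcal{H}_{A;t}$ and $\mathcal{K}$ directly, and only then (in \S\ref{S:proof 1a}) transports this to $\mathcal{H}_t$ via the unitary $U_t$. So you cannot invoke Theorem~\ref{Thzero}(a) here.

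Your Kato--Rellich fallback for (i) is the paper's actual method, but your expansion is incomplete. Since $\partial_x\Theta=\delta$, one has
\[
(i\partial_x-\Theta A_t)^2=-\partial_x^2-2iA_t\Theta\partial_x-iA_t\delta+A_t^2\Theta,
\]
and the $\delta$ term is not ``a bounded coefficient.'' The paper singles out precisely this point: it shows separately that $\Theta\partial_x$ and the evaluation functional $\delta$ are $(-\partial_x^2)$-bounded with relative bound $<1$ (for $\delta$ via $|\psi(0)|=|(2\pi)^{-1/2}\int\hat\psi\,d\xi|$ and a splitting of $\int(1+\xi^2)^{-1}(1+\xi^2)|\hat\psi|$). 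You correctly flagged the $\delta$ bookkeeping as the delicate step in your unitary-equivalence plan, but then omitted it from the fallback where it actually matters.

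For (ii) the paper does not use Howland or a mode-by-mode argument. It again applies Kato--Rellich directly, with the free part $T=-i\partial_t-\partial_x^2$ on $\mathcal{D}(\mathcal K)$ and $A=\mathcal{K}-T$; since the time-dependent coefficients $A_t,A_t^2$ are bounded and commute with the spatial part, the same relative-boundedness estimates for $\Theta\partial_x$, $\delta$, and $\Theta V$ carry over verbatim. This is more elementary than invoking Howland's framework and avoids having to match the abstract Howland domain with the concrete $\mathcal{D}(\mathcal K)$ specified here.
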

\begin{proof}   We only prove (ii); (i) is similar and simpler. 
  We rely on Rellich's theorem [see Kato], which we restate for convenience.
  \begin{Theorem}[Rellich] Let $T$ be selfadjoint. If $A$ is symmetric and $T$-bounded with $T-$bound smaller than 1, then $T+A$ is also selfadjoint.
  \end{Theorem}
  Here $T-$ bounded means that $D(T)\subset D(A)$ and for any $u\in D(T)$ we have
  $$\|Au\|\le a\|u\|+b\|Tu\|$$
  and $b$ is the $T-$bound. We take $T=-i\partial_t-\partial_x^2$, with $D(T)$ given in the proposition and $A=\mathcal K-T$. Clearly $A$ is symmetric.  We first note that $-i\partial_x \Theta=-i\Theta \partial_x -i\delta$ where $\delta$ is the Dirac distribution at zero. It is enough to show that  $\Theta \partial_x$ and $\delta$ are $T-$bounded with $b<1$. Indeed, the time-dependent coefficients are bounded and commute with the spatial part,  and  $\Theta V$ is $T-$bounded with $b=0$.
The rest is fairly standard.  We start with $\Theta\partial_x$ and note that $\|\Theta\partial_x u\|\le \|\partial_x u\|$, and, for $u\in D(\partial_x^2)$ (the domain in the proposition with $k=0$)
  \begin{multline}
    \label{eq:boundnabla}
    \|\partial_x u\|^2+\|u\|^2=\int_{\RR}(\xi^2+1)|\hat u|^2 d\xi=\int_{-n}^n (\xi^2+1)|\hat u|^2 d\xi+\int_{|u|>n}(\xi^2+1)|\hat u|^2 d\xi\\ \le 2n (n^2+1)\|\hat u\|^2+\frac{1}{n^2+1}\int_{\RR} (\xi^4+1)|\hat u|^2 d\xi= 2n (n^2+1)\|u\|^2+\frac{\|\partial_x^2 u\|^2}{n^2+1}
  \end{multline}
  and the rest is straightforward. We check now that $\delta$ is $\partial_x-$ bounded with bound one. Indeed,
  $$\left|\delta u\right|=\left|\frac{1}{\sqrt{2\pi}}\int_{\RR} \hat u(\xi)d\xi\right|$$
  
\end{proof}
\subsection{Proof of Part (a)} \label{S:proof 1a}
The Hamiltonians $\mathcal{H}_t$ and $\mathcal{H}_{A;t}$ are related by a unitary transformation; it remains to verify the transformation of domains which which is straightforward.

 \subsection{Proof of Part (b)} \label{S:proof 1b}
We prove this result in Fourier space. Consider $f\in\mathcal{D}_0\subset \mathcal{D}$, a dense set of initial conditions, such that $f$ is  $C^\infty$, exponentially decaying at infinity, and $f(0)=f'(0)=f''(0)=f'''(0)=0$.

We see in \eqref{hatpsiplu} that the half-line Fourier transform of $\psi$ for $x>0$
\begin{equation}
  \hat\psi_+=T_1+T_2
\end{equation}
with
\begin{equation}
  T_1=e^{-i\Phi(u,t)}C_+(u)
  ,\quad
  T_2=e^{-i\Phi(u,t)} \int_0^t \, e^{i\Phi(u,s)} g(s,\xi)\, ds
  ,\quad
  g(s,\xi)=  -i \psi_{x,0}(s)+\xi\psi_0(s)
\end{equation}
Let $c$ be a constant large enough so that $\partial_t(\Phi(u,t)+ct)>0$ for all $u$ (such a $c$ satisfies $c>\frac{E^2}{2\omega^2}-U$).
Integrating by parts twice in $T_2$ and using the fact that $\partial_sg(0,\xi)=  \partial_{ss}g(0,\xi)=0$ we obtain
\begin{multline}
T_2=e^{-i\Phi(u,t)} \int_0^t \,  \frac1i\frac{e^{-ics}g(s,\xi)}{\Phi_s(u,s)+c} \, \partial_s e^{i\Phi(u,s)+ics}ds \\
=-i\frac{g(t,\xi)}{\Phi_t(u,t)+c}+ie^{-i\Phi(u,t)} \int_0^t \,  \partial_s\left( \frac{e^{-ics}g(s,\xi)}{\Phi_s(u,s)+c}\right) \, e^{i\Phi(u,s)+ics}ds\\
=-i\frac{g(t,\xi)}{\Phi_t(u,t)+c} e^{-i\Phi(u,t)}  +i\int_0^t \,  \frac1{\Phi_s(u,s)+c}\partial_s\left( \frac{e^{-ics}g(s,\xi)}{\Phi_s(u,s)+c}\right) \, \partial_s e^{i\Phi(u,s)+ics}ds\\
=-i\frac{g(t,\xi)}{\Phi_t(u,t)+c}+ \frac{-ic+\partial_tg}{(\Phi_t(u,t)+c)^2}- \frac{g\Phi_{tt}}{(\Phi_t(u,t)+c)^3} -i e^{-i\Phi(u,t)} \int_0^t \,   \partial_s\left[\frac1{\Phi_s(u,s)+c}\partial_s\left( \frac{e^{-ics}g(s,\xi)}{\Phi_s(u,s)+c}\right)\right] \, e^{i\Phi(u,s)+ics}ds\\
:=-i\frac{g(t,\xi)}{\Phi_t(u,t)+c}+g_1(t,\xi)\ \ \ \text{where\ } g_1(t,\xi)=O(\xi^{-3})\ (\xi\to\pm\infty)
\end{multline}

Similarly, in \eqref{hatpsimin} $\hat{\psi}_-$ is a sum of two terms which are, up to multiplicative constants, $T_{1,-}$ and $T_{2,-}$ which are obtained from $T_1,T_2$ above by replacing $g$ with $-g$ and for $\Phi(u,t)$ by $\xi^2t/2$. It follows that we have
$T_{2,-}+T_2=O(\xi^{-3})$.

Integrating by parts twice in $T_1$ and using the fact that $f(0)= f'(0)=0$ we obtain
$$T_1=e^{-i\Phi(u,t)-i\frac E{\omega^2}\xi}\,\int_0^\infty e^{-iy\xi}f(y)\,dy\big|_{u=\xi-\frac E\omega \sin\omega t}=\frac{-1}{\xi^2} e^{-i\Phi(u,t)-i\frac E{\omega^2}\xi}\,\int_0^\infty e^{-iy\xi}f''(y)\,dy$$
and similarly for $T_{1,-}$.

Now $\hat{\psi}=\hat{\psi}_-+\hat{\psi}_+$ and we see that $\partial_t\hat{\psi},\xi\hat{\psi},\xi^2\hat{\psi}$ are in $L^2(\RR,d\xi)$ for each $t$.  Returning to Eq. \eqref{schrodinger2}, we see that, for any $t$, $\psi_t$ and  $\partial_x^2 \psi$ are in $L^2$,implying straightforwardly that $x\psi\in L^2$.  Writing now, as usual,  the equation for $\frac{d}{dt}\|\psi\|(\cdot,t)\|_2^2$ it follows that $\|\psi\|(\cdot,t)\|_2$ is conserved. Since the evolution is reversible, it is unitary.

\section{Proof of Theorem\,\ref{theo:existence}}\label{S:ProofT2}

The proof relies on the following Lemmas, proved below.

  \subsection{The equation for $\psi(0,t)$} \label{Rigres}

  Let $\mathcal{D}$ be as defined in \eqref{defD}.
  We use the convolution
 \begin{equation}(f*g)(s)=\int_0^s f(u)g(s-u)\, du
 \label{def_convolution}\end{equation}

\begin{Lemma}\label{lemma:matching} Assume the initial condition $f$ satisfies $f\in \mathcal{D}$.

Let $\psi_-(0,t)$ be given by \eqref{psim0} and $\psi_+(0,t)$ given by \eqref{psiplus0}.

We have $\psi_-(0,t)=\psi_+(0,t)=\psi_0$ if and only if \eqref{eq:relpsipsix} holds and $\psi_0$
 satisfies the integral equation
\begin{equation}
  \psi_0(t)=h(t)+L\psi_0(t)
  \label{eqcontr}
\end{equation}  with 
  \begin{equation}
    h(t)= h_+(0,t)+ h_-(0,t) -\frac{1}{\pi}\int_0^t(h_-*s^{-1/2})\,  G(s,t)\,ds
    \label{h}
  \end{equation}
  (see (\ref{def_convolution}), in which $s^{-1/2}$ stands for the function $s\mapsto s^{-1/2}$) and
  \begin{equation}
    \begin{array}{r@{\ }>\displaystyle l}
      L\psi_0(t):=
      &\frac{1}{2\pi}\int_0^t(\psi_0*s^{-1/2}) G(s,t)\, ds\\[0.5cm]
      &+\frac{E}{2\omega\sqrt{2i\pi}}\int_0^t ds \, \,\psi_0(s)\, \frac 1{\sqrt{t-s}}\,  \left( \sin(\omega s)\, + \frac{ \cos(\omega t)- \cos(\omega s)}{\omega(t-s)}\right) \,e^{iF_0(s,t)}
    \end{array}
    \label{L}
  \end{equation}
  Here
  \begin{equation}
    G(s,t)=\frac d{ds}\left[\frac{e^{iF_0(s,t)}-1}{ \sqrt{t-s}}\right]
    \label{Gdef}
  \end{equation}
  and $F_0$ is given by \eqref{Fzero}.
  Furthermore,
  \begin{equation}\label{fomulapsix0}
  \psi_{x,0}= \frac{{\sqrt{2}}}{\sqrt{i \pi}} \frac{d}{dt}\left[ \psi_0(t)*t^{-1/2}- 2 h_-(0,t)*t^{-1/2}\right]  
  \end{equation}
\end{Lemma}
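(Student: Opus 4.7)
The matching conditions $\psi_-(0,t)=\psi_0(t)$ and $\psi_+(0,t)=\psi_0(t)$ produce two equations. Applied to \eqref{psim0} and solved for the $\tfrac12\psi_0$ term on the right, the first gives \eqref{eq:relpsipsix} directly; applied to \eqref{psiplus0} and doubled, the second is an equation involving both $\psi_0$ and $\psi_{x,0}$, and the remainder of the argument uses \eqref{eq:relpsipsix} to eliminate $\psi_{x,0}$ and so produce \eqref{eqcontr}. Formula \eqref{fomulapsix0} is then pure Abel inversion of \eqref{eq:relpsipsix}: convolve both sides with $t^{-1/2}$, differentiate in $t$, and use the identity $(t^{-1/2}*t^{-1/2})(t)\equiv\pi$.

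The elimination proceeds by writing $e^{iF_0(s,t)}=1+(e^{iF_0(s,t)}-1)$ in the $\psi_{x,0}$ integral of \eqref{psiplus0}. The ``$1$'' part is exactly the Abel convolution $(\psi_{x,0}*t^{-1/2})(t)$, which by \eqref{eq:relpsipsix} equals $\sqrt{2\pi/i}\,(\psi_0(t)-2h_-(0,t))$; after tracking the prefactor $-i/\sqrt{2\pi i}$ from \eqref{psiplus0}, one finds that this contributes $2h_-(0,t)-\psi_0(t)$ to the right-hand side, the $\psi_0(t)$ absorbs a full copy of $\psi_0$ from the equation, and after dividing by $2$ one recovers the clean combination $h_+(0,t)+h_-(0,t)$ sitting in \eqref{h}. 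For the ``$(e^{iF_0}-1)$'' part, set $A(s,t):=(e^{iF_0(s,t)}-1)/\sqrt{t-s}$, substitute the Abel inversion $\psi_{x,0}(s)=\tfrac{1}{\pi}\tfrac{d}{ds}\bigl[(\psi_{x,0}*s^{-1/2})*s^{-1/2}\bigr](s)$, and integrate by parts in $s$, moving the derivative onto $A$ and converting $\partial_s A$ into $G(s,t)$; replacing the remaining inner convolution once more by \eqref{eq:relpsipsix} yields exactly $\tfrac{1}{2\pi}\int_0^t(\psi_0*s^{-1/2})G\,ds$ in \eqref{L} together with $-\tfrac{1}{\pi}\int_0^t(h_-*s^{-1/2})G\,ds$ in \eqref{h}. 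The two integrals in \eqref{psiplus0} that already depend only on $\psi_0$ are carried along unchanged and, after halving, constitute the second line of \eqref{L}.

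The main technical point is the vanishing of the boundary term at $s=t$ in the integration by parts, i.e.\ the claim that $A(t,t)=0$. The third summand of $F_0(s,t)$ in \eqref{Fzero} is formally $0/0$ there, and only a short Taylor expansion $\cos\omega t-\cos\omega s=\omega(t-s)\sin\omega t+O((t-s)^2)$ reveals that this summand is actually $O(t-s)$, whence $F_0(s,t)=O(t-s)$ and $A(s,t)=O(\sqrt{t-s})\to 0$. The boundary term at $s=0$ vanishes automatically because $(\cdot*s^{-1/2})(0)=0$, while differentiation under the integral sign and the integration by parts itself are justified by the regularity of $\psi_0$ and $\psi_{x,0}$ inherited from $f\in\mathcal{D}$. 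The converse (``if'') direction of the equivalence is obtained by reading the same chain of substitutions backwards: given \eqref{eq:relpsipsix} and \eqref{eqcontr}, direct substitution into \eqref{psim0} and \eqref{psiplus0} forces both $\psi_\pm(0,t)$ to equal $\psi_0(t)$.
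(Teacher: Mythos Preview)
Your proof is correct and follows essentially the same route as the paper's own argument: split $e^{iF_0}=1+(e^{iF_0}-1)$ in the $\psi_{x,0}$ integral, use \eqref{eq:relpsipsix} on the ``$1$'' part to produce $2h_-(0,t)-\psi_0(t)$, then write $\psi_{x,0}(s)=\frac{d}{ds}\int_0^s\psi_{x,0}$ (your Abel-inversion identity is the same thing, since $(s^{-1/2}*s^{-1/2})=\pi$), integrate by parts to shift the $s$-derivative onto $(e^{iF_0}-1)/\sqrt{t-s}$ (your $A$, giving $G$), and replace $\int_0^s\psi_{x,0}$ via the convolved form of \eqref{eq:relpsipsix}. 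The paper likewise records that $e^{iF_0(s,t)}-1=(t-s)\Psi(s,t-s)$ with $\Psi$ entire, which is exactly your observation that $F_0(s,t)=O(t-s)$ and hence $A(s,t)\to0$ as $s\to t$, killing the boundary term.
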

The proof is given in \S\ref{3p1}.

\begin{Lemma}\label{lemma:contraction}

Consider equation  \eqref{eqcontr}  with $h$ given by \eqref{h} and $L$ by \eqref{L}.  Assume the initial condition $f$ satisfies $f\in \mathcal{D}$. 

(i)  There exists $\nu_0>0$ such that, if $\nu>\nu_0$, then \eqref{eqcontr} is a contraction in the Banach space
  \begin{equation}
    \mathcal B_\nu:=\{\psi_0(t):\ e^{-\nu t}\psi_0(t)\in L^\infty(\mathbb R_+)\}
    .
    \label{Bnu}
  \end{equation}
  
  (ii) The functions $h_-$ and $h_+$  defined in \eqref{hminus0} and \eqref{hplus} resp. are differentiable for $t>0$.

  (iii) The solution $\psi_0$ of \eqref{eqcontr}, unique in $\mathcal B_\nu$, is continuously differentiable.
  
  (iv) Moreover, $\psi'_0:=\tfrac{d}{dx}|_{x=0}\psi$ is H\"older continuous of exponent $1/4$.

\end{Lemma}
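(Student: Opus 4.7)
The plan is to apply the Banach fixed point theorem to $\psi_0 \mapsto h + L\psi_0$ on $\mathcal{B}_\nu$ for $\nu$ large. First, for $f\in\mathcal{D}$, Cauchy--Schwarz gives $f\in L^1(\RR)$ via $\int|f|\leqslant \|(1+|x|)f\|_{L^2}\|(1+|x|)^{-1}\|_{L^2}$, from which $|h_\pm(0,t)|\leqslant C\min(1,t^{-1/2})$; a similar estimate bounds the $h_-\ast s^{-1/2}$ convolution piece, so $h\in\mathcal{B}_\nu$ for every $\nu\geqslant 0$. The three families of kernels appearing in $L$---namely $G(s,t)=\partial_s[(e^{iF_0(s,t)}-1)/\sqrt{t-s}]$, $(t-s)^{-1/2}\sin\omega s$, and $(\cos\omega t-\cos\omega s)(t-s)^{-3/2}$---initially look $(t-s)^{-3/2}$-singular, but the cancellations $F_0(s,t)=O(t-s)$ and $\cos\omega t-\cos\omega s=O(t-s)$ as $s\to t^-$ reduce each to $O((t-s)^{-1/2})$. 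The weighted estimate
$$\|L\psi_0\|_\nu \leqslant C\|\psi_0\|_\nu\,\sup_{t>0}e^{-\nu t}\int_0^t e^{\nu s}(t-s)^{-1/2}\,ds=C\sqrt{\pi/\nu}\,\|\psi_0\|_\nu$$
then yields strict contractivity for $\nu>\nu_0:=\pi C^2$, giving existence and uniqueness in $\mathcal{B}_\nu$.

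\textbf{Part (ii).} Direct differentiation of $h_-(0,t)$ produces a factor $-iy^2/(2t^2)$ inside the integrand that is not absolutely integrable against $f$ alone. The plan is to perform the substitution $y=\sqrt{t}\,z$ to obtain $h_-(0,t)=(2\pi i)^{-1/2}\int_{-\infty}^0 f(\sqrt{t}\,z)e^{iz^2/2}\,dz$. After the change of variable, $\partial_t$ acts on the slower-varying factor $f(\sqrt{t}\,z)$; justifying the interchange with the integral then requires only suitable combinations of $f,f'$ with linear weights, which follow from $f\in H^2(\RR\setminus\{0\})\cap H^1(\RR)$ and $xf\in L^2(\RR)$ after one integration by parts, and dominated convergence provides continuity on $(0,\infty)$. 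The same strategy applies to $h_+$.

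\textbf{Parts (iii) and (iv).} For (iii), I would bootstrap the contraction of (i) to the subspace $\mathcal{B}_\nu^1:=\{\psi_0\in\mathcal{B}_\nu\cap C^1((0,\infty)):\psi_0'\in\mathcal{B}_\nu\}$ equipped with the sum norm. Formally differentiating \eqref{eqcontr} and then performing one IBP in $s$ on each of the three kernel families of $L$ expresses $(L\psi_0)'(t)$ as $L$ applied to $\psi_0'$ plus boundary contributions involving $\psi_0(0)$ and $h(0)$; running the contraction on $\mathcal{B}_\nu^1$ (with $\nu$ possibly enlarged) yields a $C^1$ fixed point which, by the uniqueness of (i), coincides with the original solution. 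For (iv), set $\Psi:=\psi_0-2h_-(0,\cdot)$. The Fresnel asymptotic $h_-(0,0^+)=f(0)/2$ and continuity of $\psi$ at the origin give $\Psi(0)=0$, so an IBP converts \eqref{fomulapsix0} into
$$\psi_{x,0}(t)=\frac{\sqrt{2}}{\sqrt{i\pi}}\int_0^t\Psi'(s)(t-s)^{-1/2}\,ds.$$
Although $\Psi\in C^1((0,\infty))$, stationary-phase analysis shows that $h_-'(0,t)$ exhibits a mild boundary singularity at $t=0^+$ that is inherited by $\Psi'$; tracking this singularity through the Abel operator $I^{1/2}$ via the identity $I^{1/2}[s^{-\alpha}](t)=c_\alpha t^{1/2-\alpha}$, one arrives at the claimed global H\"older exponent $1/4$.

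\textbf{Main obstacle.} The principal technical hurdle is the hidden-cancellation structure of the kernels in $L$: the apparent $(t-s)^{-3/2}$ singularities would preclude any contraction estimate, and only the specific algebraic identities $F_0(s,t)=O(t-s)$ and $\cos\omega t-\cos\omega s=O(t-s)$ reduce them to integrable order. Sharpening the H\"older exponent in (iv) to exactly $1/4$ (rather than a cruder value) similarly rests on quantitative Fresnel asymptotics for $\partial_t h_-(0,t)$ at the origin, and on matching these boundary asymptotics with the Abel-transform regularity.
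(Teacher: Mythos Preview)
Your overall architecture is sound and your identification of the hidden cancellations in (i) (that $F_0(s,t)=O(t-s)$ and $\cos\omega t-\cos\omega s=O(t-s)$) is exactly the point. Two remarks on (i): your single estimate $\|L\psi_0\|_\nu\leqslant C\nu^{-1/2}\|\psi_0\|_\nu$ applies cleanly to the second line of \eqref{L}, but the first line $\int_0^t(\psi_0*s^{-1/2})G(s,t)\,ds$ is a \emph{double} convolution, and you need to unfold it; the paper does this by inserting the inner bound $\int_0^s e^{-\nu(s-u)}(s-u)^{-1/2}du\leqslant\sqrt{\pi/\nu}$ and then estimating $\int_0^t|G(t-s,t)|e^{-\nu s}ds=O(\nu^{-1})$ separately, obtaining $O(\nu^{-3/2})$ for that piece.

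For (ii) and (iii) your routes differ from the paper's but are reasonable alternatives. In (ii) the paper does not rescale: it splits the $y$-integral at $y=-1$, integrates by parts twice on the tail to reach an integrand of the form $g_3(u)u^{-5/4}$ with $g_3\in L^2$, and then proves differentiability in $\tau=1/(2t)$ by an $L^2$-Fourier argument. Your substitution $y=\sqrt{t}\,z$ is slicker in principle, but after differentiating you face $\int f'(\sqrt{t}z)\,z\,e^{iz^2/2}dz$, and the subsequent IBP leaves $\int f''(\sqrt{t}z)\,e^{iz^2/2}dz$; since $f''$ is only in $L^2$, this oscillatory integral is not absolutely convergent and needs the same kind of $L^2$-limit argument the paper uses---so the saving is smaller than it first appears. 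In (iii) the paper, instead of bootstrapping into $\mathcal B_\nu^1$, iterates the fixed-point equation once, swaps the order of integration, and reads off $C^1$ directly from analyticity of the kernels $g_1,g_2$; your bootstrap is viable but you should check that differentiating under the integral and the IBP in $s$ really reproduce an operator with the same contraction bound.

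Part (iv) contains a genuine misconception. The H\"older exponent $1/4$ is a \emph{uniform} statement in $t$ and does not come from a power-law singularity of $\Psi'$ at $t=0^+$ propagated through the Abel operator. Your formula $I^{1/2}[s^{-\alpha}](t)=c_\alpha t^{1/2-\alpha}$ describes behavior at the origin, not global H\"older regularity: if $\Psi'$ were smooth on $(0,\infty)$ with only a boundary singularity, $I^{1/2}\Psi'$ would be smooth for $t>0$, contradicting the sharpness of $1/4$. In the paper the exponent $1/4$ arises in the course of (ii): after two integrations by parts the tail of $h_-(0,t)$ becomes $\int_1^\infty e^{iu\tau}g_3(u)u^{-5/4}du$ with $g_3\in L^2$; differentiating costs a factor $u$, leaving $g_3(u)u^{-1/4}$, and then the difference $G_3'(\tau+\epsilon)-G_3'(\tau)$ is controlled by writing $e^{iu\epsilon}-1=g_2(u\epsilon)(u\epsilon)^{1/4}$, which absorbs the remaining $u^{-1/4}$ and produces the $\epsilon^{1/4}$. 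This is recorded as a separate lemma (uniform H\"older-$1/4$ of $G_3'$), and (iv) follows because the integral operators in \eqref{eqcontr} preserve this H\"older class. So the $1/4$ is dictated by the $L^2$ budget from $f\in H^2$, not by stationary phase at $t=0$.
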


{\bf Remark.} If $f$ is of class $C^r$ then $h_\pm$ are of class $C^r$.

The proof of Lemma\,\ref{lemma:contraction} is found in \S\ref{PfL6}.

  \begin{Lemma}\label{lemma:psipm} 
  Assume the initial condition $f$ satisfies $f\in \mathcal{D}$ and let $\psi_0,\psi_{x,0}$ be given by Lemma\,\ref{lemma:contraction}. 
  
  (i) The function $\psi_-$ given by \eqref{psiminus} is a solution of \eqref{schrodinger2} for $x<0$ and satisfies $\psi_-(x,0)=f(x)$ for $x<0$, $\psi_-(0-,t)=\psi_0(t)$, $\partial_x\psi_-(0-,t)=\psi_{x,0}(t)$.
  
  (ii) The function $\psi_+$ given by \eqref{psiplus} is a solution of \eqref{schrodinger2} for $x>0$ and satisfies $\psi_+(x,0)=f(x)$ for $x>0$, $\psi_+(0+,t)=\psi_0(t)$, $\partial_x\psi_+(0+,t)=\psi_{x,0}(t)$.
  
  (iii) The Fourier transform of $\psi_-$ is \eqref{hatpsimin} and the Fourier transform of $\psi_+$ is \eqref{hatpsiplu}. $\psi_-(\cdot,t)$ and $\psi_+(\cdot,t) $are $L^2$ functions.

  \end{Lemma}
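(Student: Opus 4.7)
The plan is to reverse rigorously the formal derivation of \S\ref{deriving}. With $\psi_0,\psi_{x,0}$ produced by Lemma \ref{lemma:contraction}, define $\psi_\pm$ by the explicit expressions \eqref{psiminus}, \eqref{psiplus}. For (i), I would first verify by direct differentiation that $h_-$ defined in \eqref{hminus} is a classical solution of the free Schr\"odinger equation on $\mathbb{R}_-\times\mathbb{R}_+$ with $h_-(x,0)=f(x)$ for $x<0$; this is the standard Gaussian propagator applied to $\Theta(-x)f$. For the Volterra piece of \eqref{psiminus}, the kernel $\sqrt{i/(2\pi\tau)}\,e^{ix^2/(2\tau)}$ and its $x/\tau$-weighted variant satisfy $(i\partial_t+\tfrac12\partial_x^2)K(x,t-s)=0$ for $x<0$ and $s<t$, so differentiation under the integral sign (justified by the H\"older-$1/4$ regularity of $\psi_{x,0}$ from Lemma \ref{lemma:contraction}(iv)) picks up only the regular boundary contribution at $s=t$. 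The initial condition is immediate as $t\to 0^+$. The identification $\psi_-(0^-,t)=\psi_0(t)$ is exactly \eqref{eq:relpsipsix}, obtained from \eqref{psim0} using that the Abel kernel $(t-s)^{-1/2}$ produces a $\tfrac12$ jump; the derivative matching $\partial_x\psi_-(0^-,t)=\psi_{x,0}(t)$ follows by differentiating under the integral sign (after an integration by parts that tames the $x/(t-s)$ factor) and recognizing the Abel-inverted form \eqref{fomulapsix0}.

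Part (ii) proceeds analogously. The essential computation is that the driven kernel $K_F(x,s,t):=\sqrt{i/(2\pi(t-s))}\,e^{iF(x,s,t)}$ satisfies $i\partial_t K_F=[-\tfrac12\partial_x^2+U-Ex\cos\omega t]K_F$ for $x>0$ and $s<t$. This can be verified directly, or more conceptually by noting that $F$ is the classical Hamilton--Jacobi action for the driven particle, so the Kramers--Henneberger-type gauge built into $u=\xi-(E/\omega)\sin\omega t$ and $\Phi(u,t)$ maps the driven equation to the free one; this is exactly how \eqref{hatpsiplu} was derived in the first place. Matching at $x=0^+$ and elimination of $\psi_{x,0}$ via \eqref{eq:relpsipsix} then yields the contraction equation \eqref{eqcontr}, which holds by Lemma \ref{lemma:contraction}, giving both the identification of boundary values and the initial condition (on letting $t\to 0^+$).

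For (iii), take the spatial Fourier transform of \eqref{psiminus} and \eqref{psiplus} term by term, exchanging the order of integration by Fubini (justified since $f\in\mathcal{D}$ and $\psi_0,\psi_{x,0}$ are at most exponentially growing by Lemma \ref{lemma:contraction}) and using the Gaussian identity $\mathcal{F}\{(2\pi i t)^{-1/2}e^{ix^2/(2t)}\}(\xi)=e^{-i\xi^2 t/2}$. After simplification the transforms coincide term-for-term with \eqref{hatpsimin}, \eqref{hatpsiplu}. For the $L^2$ claim, repeat the integration-by-parts argument already used in \S\ref{S:proof 1b}: iterating IBP in $s$ inside the Volterra integrals, using that $\partial_s\Phi+c$ is bounded away from zero for $c$ large enough, yields $\hat\psi_\pm(\xi,t)=O(\xi^{-3})$ for large $|\xi|$, which combined with the $L^2_\xi$ character of $C_\pm$ (since $f\in L^2$) delivers $\hat\psi_\pm(\cdot,t)\in L^2$; Plancherel then gives $\psi_\pm(\cdot,t)\in L^2$.

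The main obstacle I anticipate is the boundary behaviour of the Volterra integrals at $x=0$, specifically when differentiating in $x$ the terms containing the $x/(t-s)$ and $(t-s)^{-3/2}$ factors in \eqref{psiminus}, \eqref{psiplus}. These integrals are only conditionally convergent and must be regularized by integration by parts, which is precisely where the H\"older-$1/4$ regularity of $\psi_{x,0}$ from Lemma \ref{lemma:contraction}(iv) becomes indispensable; without it the classical $\partial_x$ limit at the boundary would not exist. All other steps --- the Fubini exchanges, the algebra identifying the Fourier transforms with \eqref{hatpsimin}, \eqref{hatpsiplu}, and the large-$\xi$ decay --- are routine once this boundary analysis is in place.
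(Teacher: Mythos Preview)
Your approach is essentially the same as the paper's, with two small differences worth noting.

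First, the paper reverses the order: it proves (iii) first, working from the Fourier side to the physical side (i.e.\ it shows the expressions \eqref{hatpsimin}, \eqref{hatpsiplu} are in $L^2(d\xi)$ and then computes their \emph{inverse} Fourier transforms, obtaining \eqref{psiminus}, \eqref{psiplus}). This has the advantage that $L^2$ membership is checked directly in Fourier space term by term, and the identification of the Fourier transforms comes for free. Your direction (physical $\to$ Fourier) is equivalent but slightly less economical.

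Second, your $O(\xi^{-3})$ claim for $\hat\psi_\pm$ is too strong for general $f\in\mathcal{D}$. The integration-by-parts argument in \S\ref{S:proof 1b} uses the restrictive extra hypotheses $f(0)=f'(0)=f''(0)=f'''(0)=0$ (and the resulting vanishing of $g,\partial_s g,\partial_{ss}g$ at $s=0$). For arbitrary $f\in\mathcal{D}$ the paper instead integrates by parts \emph{once} in the $\xi\psi_0$ term (see \eqref{modelxneg}) to get only $O(\xi/(\xi^2+1))$, and for the $\psi_{x,0}$ term uses the representation $\psi_{x,0}=u*t^{-1/2}$ from \eqref{fomulapsix0} together with an explicit computation of $\int_0^{t-\sigma}e^{i\xi^2\tau/2}\tau^{-1/2}d\tau=O(|\xi|^{-1})$. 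These weaker bounds suffice for $L^2$. Your proposal should be adjusted accordingly.

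The boundary analysis you describe (the ``$\tfrac12$ jump'' from the Abel kernel, regularization of the $x/(t-s)^{3/2}$ term by subtracting $\psi_0(t)$) is exactly what the paper does in \eqref{beforewhere}--\eqref{diffpsi} and \eqref{I420}, so that part is correct.
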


The proof is found in \S\ref{PfLema4}.
 
\begin{Note}\label{lemma:x}
If $f\in L^\infty(\RR)$ formulas \eqref{psiminus}--\eqref{hplus} also hold in the sense of distributions. This is needed in order to accommodate initial conditions of the form \eqref{init}.
\end{Note}

\subsection{Proof of Lemma\,\ref{lemma:matching}} \label{3p1}

Relation \eqref{eq:relpsipsix} is precisely the condition that $\psi_-(0,t)=\psi_0$. We will now use this to eliminate $\psi_{x,0}$ from the condition $\psi_+(0,t)=\psi_0$.

Equation \eqref{eq:relpsipsix} implies
\begin{equation}\label{eq:psiprime}
\psi_{x,0}*t^{-1/2}=\frac{\sqrt{2\pi}}{\sqrt{i}}\left[ \psi_0(t)- 2 h_-(0,t)\right]
\end{equation}
which convolved with $t^{-1/2}$, and using the fact that $t^{-1/2}*t^{-1/2}=\pi$, gives
\begin{equation}
  \label{eq:postconvo}
   \int_0^{t} \psi_{x,0}(s)\,ds=\frac{\sqrt{2}}{\sqrt{i \pi}}\left[ \psi_0(t)*t^{-1/2}- 2 h_-(0,t)*t^{-1/2}\right]   
 \end{equation}
Note that this also proves\-~\eqref{fomulapsix0}.

The condition that $\psi_+(0+,t)=\psi_0(t)$ is equivalent to
\begin{equation}\label{eqpsip}
\psi_0(t)=2h_+(0,t)+ \mathcal{T}+ \frac1{{\sqrt{2\pi i}}}  \int_0^{t} \, ds\, \psi_0(s)\,g(s,t)\,{\rm e}^{iF_0(s,t)}
\end{equation}
where
$$\mathcal{T}= \frac1{{\sqrt{2\pi i}}}  \int_0^{t} \, \left[ -i \psi_{x,0}(s)\right]\,  \frac 1{\sqrt{t-s}}\,{\rm e}^{iF_0(s,t)} \,ds$$
and 
$$g(s,t)=\frac E\omega\frac{\sin \omega t}{{\sqrt{t-s}}}+\frac E{\omega^2} \frac{ \cos \omega t- \cos \omega s}{(t-s)^{3/2}}$$

 Noting that $e^{iF_0(s,t)}=1+(t-s)\Psi(s,t-s)$ where $\Psi(s,z)$ is entire, and using \eqref{eq:psiprime}, integrating by parts, then using \eqref{eq:postconvo}, we rewrite $\mathcal{T}$ as
 \begin{multline} \label{eq:oneterm}
 \mathcal{T}=\frac{-i}{{2\sqrt{i\pi}}}   \int_0^{t}  \,  \psi_{x,0}(s)\frac 1{\sqrt{t-s}}ds+\frac{-i}{{2\sqrt{i\pi}}}   \int_0^{t}  \psi_{x,0}(s)\sqrt{t-s}\,\frac{e^{iF_0(s,t)}-1}{t-s}\,ds\\
 =-\psi_0(t)+2h_-(0,t)+\frac{-i}{{2\sqrt{i\pi}}}   \int_0^{t} \left[ \frac{d}{ds}\int_0^s\psi_{x',0}(u)\, du\right]\, \frac{e^{iF_0(s,t)}-1}{\sqrt{t-s}}\, ds\\
   =-\psi_0(t)+2h_-(0,t)+\frac{i}{{\sqrt{2\pi i}}}   \int_0^{t}  ds \frac d{ds}\left[ \frac{e^{iF_0(s,t)}-1}{\sqrt{t-s}}  \right]\int_0^s\psi_{x,0}(u)du\\
   =-\psi_0(t)+2h_-(0,t)+\frac{1}{\pi}\int_0^t(\psi_0*s^{-1/2}) \ G(s,t)\, ds
   -\frac{2}{\pi}\int_0^t(h_-*s^{-1/2}) \,G(s,t)]\,ds
 \end{multline}
  Substituting \eqref{eq:oneterm} in \eqref{eqpsip},  we obtain \eqref{eqcontr}.
  
  \bigskip
    
  \subsection{Proof of Lemma\,\ref{lemma:contraction}}\label{PfL6}
  {\bf (i)}
  We prove that \eqref{eqcontr} is a contraction in the Banach space $\mathcal B_\nu$ \eqref{Bnu}.
\bigskip

\indent
Defining $\|g\|_\nu:=\|g(s)e^{-\nu s}\|_\infty$,
We bound
\begin{equation}
  e^{-\nu t}\left|\int_0^{t} (\psi_0\ast s^{-\frac12})G(s,t)\ ds\right|
  \leqslant
  \|\psi_0\|_\nu\int_0^{t}  |G(s,t)|e^{-\nu(t-s)}\int_0^s\ \frac{e^{-\nu(s-u)}}{\sqrt{s-u}}\ duds
  .
\end{equation}
Furthermore,
\begin{equation}
  \int_0^s\ \frac{e^{-\nu(s-u)}}{\sqrt{s-u}}\ du
  =\frac{\sqrt\pi\mathrm{erf}(\sqrt{\nu s})}{\sqrt\nu}\leqslant\frac{\sqrt\pi}{\sqrt\nu}
  .
\end{equation}
Now, changing variables,
\begin{equation}
  \int_0^{t} |G(s,t)|e^{-\nu(t-s)}\ ds
  =
  \int_0^{t} |G(t-s,t)|e^{-\nu s}\ ds
  .
\end{equation}
We then write
\begin{equation}
  G(t-s,t)=\frac{e^{iF_0(t-s,t)}-1-is\partial_s F_0(t-s,t)}{s^{\frac32}}
\end{equation}
and by \eqref{Fzero}, as $s\to 0$, $F_0(t-s,t)\sim {\rm const.}s$, so $G(t-s,t)\to0$ as $s\to 0$.
On the other hand, for large $s$, $\partial_sF_0(t-s,t)$ is bounded, so $G(s,t)$ is as well.
Thus
\begin{equation}
  \int_0^{t} |G(t-s,t)|e^{-\nu s}\ ds
  =
  O(\nu^{-1})
\end{equation}
and
\begin{equation}
  \left\|\int_0^{t} (\psi_0\ast s^{-\frac12})G(s,t)\ ds\right\|=O(\nu^{-\frac32})\|\psi_0\|_\nu
  .
  \label{contraction1}
\end{equation}
\bigskip

\indent
Similarly, 
\begin{multline}
  e^{-\nu t}\left|\int_0^{t} \psi_0(s)\, \frac 1{\sqrt{t-s}}\,  \left( \sin(\omega s)\, + \frac{ \cos(\omega  t)- \cos(\omega s)}{\omega (t-s)}\right) \,{\rm e}^{iF_0(s,t)}\ ds\right| \\
  \leqslant\|\psi_0\|_\nu\int_0^{t} \ e^{-\nu(t-s)}\frac{\sin(\omega s)+\frac{\cos(\omega  t)-\cos(\omega s)}{\omega (t-s)}}{\sqrt{t-s}}\ ds
\end{multline}
in which we change variables:
\begin{equation}
  \int_0^{t} \ e^{-\nu(t-s)}\frac{\sin(\omega s)+\frac{\cos(\omega t)-\cos(\omega s)}{\omega (t-s)}}{\sqrt{t-s}}\ ds
  =
  \int_0^{t} \ e^{-\nu s}\frac{\sin(\omega t-\omega s)+\frac{\cos(\omega t)-\cos(\omega t-\omega s)}{s}}{\sqrt{s}}\ ds
\end{equation}
and since, as $s\to0$, $\sin(\omega t-\omega s)+\frac{\cos(\omega t)-\cos(\omega t-\omega s)}{s}\sim {\rm const.}s$, so the integrand is bounded as $s\to0$.
For large $s$, the integrand is obviously bounded above, so
\begin{equation}
  \left\|\int_0^{t} \psi_0(s)\, \frac 1{\sqrt{t-s}}\,  \left( \sin(\omega s)\, + \frac{ \cos(\omega t)- \cos(\omega s)}{\omega (t-s)}\right) \,{\rm e}^{iF_0(s,t)}\ ds\right\|_\nu
  =O(\nu^{-1})\|\psi_0\|_\nu
\end{equation}
Combining this with \eqref{contraction1}, we find that
\begin{equation}
  \|L\psi_0\|_\nu=O(\nu^{-1})\|\psi_0\|_\nu
  .
\end{equation}
Therefore, for $\nu$ large enough, $\mathds 1-L$ is invertible in $\mathcal B_\nu$, so \eqref{eqcontr} is a contraction.

\

{\bf  (ii)}
To prove that $h_-(0,t)$ is differentiable we split the integral in \eqref{hminus} into the integral from $-1$ to $0$, which is clearly differentiable plus the integral from $-\infty$ to $-1$, which we show it is differentiable using 
$L^2$ limits to integrate by parts as follows. We have
$$l.i.m\, \int_{-\infty}^{-1} dy \, f(y) \, {\rm e}^{\frac{iy^2}{2t}}=\frac12\, l.i.m\, \int_1^{\infty} du \, \frac{f(-\sqrt{u})}{\sqrt{u}} \, {\rm e}^{\frac{iu}{2t}}$$
and, integrating by parts twice we find\footnote{The boundary terms vanish since we are dealing with an $L^2$ function which is continuous in $R$, cf. Note \ref{Note}, hence it goes to zero along some subsequence $\{R_n\}_{n\in\NN}$ where $\lim_{n\to \infty}R_n=\infty$.}
$$
  \frac12\, l.i.m\, \int_1^{\infty} du \, \frac{f(-\sqrt{u})}{\sqrt{u}} \, {\rm e}^{\frac{iu}{2t}}
  =
  -itf(-1) \, {\rm e}^{\frac{i}{2t}}
  -t^2{\rm e}^{\frac{i}{2t}}(f(-1)+f'(-1))
  -2t^2\, l.i.m\, \int_1^{\infty} du \, {\rm e}^{\frac{iu}{2t}} \, \frac{d^2}{du^2}\frac{f(-\sqrt{u})}{\sqrt{u}}
$$
The first two terms are obviously differntiable for $t\in(0,\infty)$, so it suffices to consider the integral term.
The second derivative above is a sum of terms of the form: $\frac{f''(-\sqrt{u})}{u^{3/2}} $, $\frac{f'(-\sqrt{u})}{u^{2}} $, $\frac{f(-\sqrt{u})}{u^{5/2}} $.

Since $f''\in L^2$ then the following quantities are finite:
$$\int_{-\infty}^{-1} |f''(y)|^2\,dy=\int_1^{\infty} |f''(-\sqrt{u})|^2\, \frac1{2\sqrt{u}}\, du<\infty$$ 
hence $\frac{f''(-\sqrt{u})}{u^{1/4}}$ is in $L^2$. The other two terms,  $\frac{f'(-\sqrt{u})}{u^{2}} $, $\frac{f(-\sqrt{u})}{u^{5/2}} $ have faster decay. Then $\frac{d^2}{du^2}\frac{f(-\sqrt{u})}{\sqrt{u}} ={g_3}(u)u^{-5/4} $ with ${g_3}$ in $L^2$. Denoting $\tau=1/(2t)$, we need to show that ${G_3}(\tau):=\int e^{iu\tau} {g_3}(u)u^{-5/4}\, du$ is differentiable in $\tau$. Calculate then
$${G_3}(\tau+\epsilon)-{G_3}(\tau)=\int_1^\infty e^{iu\tau} {g_3}(u)u^{-5/4}\, \left(e^{iu\epsilon}-1\right) du$$
We have 
$e^{ix }=1+ix+g_1(x)\, x^{5/4}$
where $g_1:=(e^{ix}-1-ix)x^{-\frac54}$ is a continuous, bounded function.  Therefore
$$\frac{{G_3}(\tau+\epsilon)-{G_3}(\tau)}{i\epsilon}=\int_1^\infty e^{iu\tau} {g_3}(u)u^{-1/4}\, du+I_\epsilon,\ \ \text{where}\ I_\epsilon=i^{\frac14}\epsilon^{1/4}\int_1^\infty e^{iu\tau} {g_3}(u) g_1(u\epsilon)\, du$$
Using the fact that the integral in $I_\epsilon$ is the Fourier transform of the $L^2$ function ${g_3}g_1\chi_{[1,\infty)}$, then its  $L^2$ norm is bounded by $\epsilon^{1/4}\|{g_3}\| \sup |g_1|$ hence $I_\epsilon$ goes to $0$ in the $L^2$ norm,  hence in $L^1_{\rm loc}$. It follows that ${G_3}$ is differentiable in distributions and its derivative is $i\int_1^\infty e^{iu\tau} {g_3}(u)u^{-1/4}\, du$, an $L^2$ function (hence  $L^1_{\rm loc}$) implying that ${G_3}$ is absolutely continuous, hence differentiable a.e.

Now it follows that ${G_3}'$ is continuous a.e. since, using $e^{ix }=1+g_2(x)\, x^{1/4}$
where $g_2$ is a continuous, bounded function, we have
\begin{equation}\label{eq:g3Holder}
  {G_3}'(\tau+\epsilon)-{G_3}'(\tau)=i\int_1^\infty e^{iu\tau} {g_3}(u)u^{-1/4}\left(e^{iu\epsilon}-1\right)\, du =(i\epsilon)^{1/4}i\int_1^\infty e^{iu\tau} {g_3}(u)\, du
\end{equation}
which goes to zero as $\epsilon\to 0$, as $I_\epsilon$ did before.
We have also shown:
\begin{Lemma}\label{L10-a}
  The function $G_3$ is differentiable and the derivative is H\"older continuous of exponent $1/4$ uniformly in $\tau$.
\end{Lemma}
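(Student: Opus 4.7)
The plan is to derive a bound of the form $|G_3'(\tau+\epsilon)-G_3'(\tau)|\le C\,\epsilon^{1/4}$ with $C$ independent of $\tau$ for all small $\epsilon>0$, starting from the representation
\[
G_3'(\tau+\epsilon)-G_3'(\tau)=i\int_1^\infty e^{iu\tau}\,g_3(u)\,u^{-1/4}\bigl(e^{iu\epsilon}-1\bigr)\,du
\]
already established in the proof of differentiability of $G_3$ just above.

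The key step is the factorization trick used just above to prove continuity of $G_3'$: set $g_2(x):=(e^{ix}-1)/x^{1/4}$, with a single-valued branch of $x^{1/4}$ chosen so that $g_2$ is bounded and continuous on $\mathbb{R}$. Boundedness follows from the elementary estimate $|e^{ix}-1|\le\min(|x|,2)$, both sides of which are dominated by a constant multiple of $|x|^{1/4}$ on their respective regimes. Substituting $x=u\epsilon$ yields $e^{iu\epsilon}-1=\epsilon^{1/4}u^{1/4}g_2(u\epsilon)$, which exactly cancels the $u^{-1/4}$ factor and pulls out $\epsilon^{1/4}$:
\[
G_3'(\tau+\epsilon)-G_3'(\tau)=i\,\epsilon^{1/4}\int_1^\infty e^{iu\tau}\,g_3(u)\,g_2(u\epsilon)\,du.
\]
The remaining integral is the Fourier transform (in $u$, evaluated at $\tau$) of the function $u\mapsto g_3(u)\,g_2(u\epsilon)\,\mathds{1}_{[1,\infty)}(u)$. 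Since $g_3\in L^2$ and $g_2$ is uniformly bounded, this product lies in $L^2(\mathbb{R})$ with norm at most $\|g_3\|_{L^2}\,\|g_2\|_\infty$, uniformly in $\epsilon$. By Plancherel, its Fourier transform is an $L^2(d\tau)$ function bounded by the same constant, independent of both $\tau$ and $\epsilon$, which delivers the Hölder-$1/4$ estimate with a constant that does not depend on the base point $\tau$.

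The hard part, and the reason the above factorization is essential, is that $g_3$ is only known to lie in $L^2$ and not in $L^1$: a naive pointwise bound on the original integrand would require $g_3(u)u^{-1/4}\in L^1$, which does not follow from $f\in\mathcal{D}$ alone. The factorization absorbs the $u^{-1/4}$ singularity into the $\epsilon^{1/4}$ prefactor, leaving only an $L^2\times L^\infty$ product whose Fourier bound is automatically $\tau$-independent. This is the same mechanism that controlled $I_\epsilon$ in the preceding paragraph, exploited now one derivative higher.
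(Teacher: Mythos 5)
Your argument is, step for step, the paper's own: the same factorization $e^{ix}-1=g_2(x)\,x^{1/4}$ with $g_2$ bounded, the same cancellation of $u^{-1/4}$ against $u^{1/4}$, and the same extraction of the prefactor $\epsilon^{1/4}$, leaving $G_3'(\tau+\epsilon)-G_3'(\tau)=i\,\epsilon^{1/4}\int_1^\infty e^{iu\tau}g_3(u)g_2(u\epsilon)\,du$. So there is no divergence of method to report.

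There is, however, a genuine gap in your final step. Plancherel controls the $L^2(d\tau)$ norm of the Fourier transform of $u\mapsto g_3(u)g_2(u\epsilon)\mathds{1}_{[1,\infty)}(u)$ by $\|g_3\|_{L^2}\|g_2\|_\infty$; it does not give a pointwise bound, and an $L^2$ function of $\tau$ with norm at most $C$ need not be bounded by $C$ (or bounded at all) at any particular $\tau$. Since, as you yourself note, $g_3$ is only known to lie in $L^2$ and not in $L^1$, the integral $\int_1^\infty e^{iu\tau}g_3(u)g_2(u\epsilon)\,du$ is not absolutely convergent, so it cannot be bounded pointwise by the triangle inequality either. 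What your argument actually yields is $\|G_3'(\cdot+\epsilon)-G_3'\|_{L^2(d\tau)}=O(\epsilon^{1/4})$, i.e.\ H\"older continuity of exponent $1/4$ in the $L^2$ sense, which is strictly weaker than the pointwise estimate ``uniformly in $\tau$'' that the lemma asserts (and that is later used to propagate pointwise H\"older continuity to $\psi_0'$). To close the lemma one needs a genuine pointwise bound on $\int_1^\infty e^{iu\tau}g_3(u)g_2(u\epsilon)\,du$, uniform in $\tau$ and $\epsilon$ --- the paper simply asserts that this integral is bounded --- and that requires more input than $g_3\in L^2$, e.g.\ a further integration by parts or an $\epsilon$-uniform $L^1$ majorant for $g_3\,g_2(\cdot\,\epsilon)$ extracted from the explicit form of $g_3$ and the decay of $f$. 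As written, the Plancherel step does not deliver the claimed uniformity.
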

Indeed the integral in the last term of \eqref{eq:g3Holder} is bounded.

Clearly,  $h_+(0,t)$ is differentiable if and only if 
$\int_0^{\infty }dy\, f(y)\, e^{i\frac{[y+\frac{E}{\omega^2}(1-\cos(\omega t))]^2}{2t}  }$ is differentiable. Let $u=[y+\frac{E}{\omega^2}(1-\cos(\omega t))]^2$. We need to show differentiability in $\tau$ of $\int_{\frac{E^2}{\omega^4}(1-\cos(\omega t))^2}^{\infty }dy\, f(\sqrt{u}-\frac{E}{\omega^2}(1-\cos(\omega t))) u^{-1/2}\, e^{i\tau u} $, for which it suffices to show that  $\int_c^{\infty }dy\, f(\sqrt{u}-\frac{E}{\omega^2}(1-\cos(\omega t))) u^{-1/2}\, e^{i\tau u} $ is differentiable, where $c$ is a constant large enough. The rest of the proof is similar to the one above for $h_-(0,t)$.

\

{\bf (iii)}
To prove regularity of  $\psi_0$, note first that, since $\psi_0\in\mathcal B_\nu\subset L^\infty_{loc}$ then $L\psi_0$ is continuous, since integrals of the form $\int_0^t\psi_0(s)(t-s)^{-1/2}f(s,t)ds$ with $\psi_0\in L^{\infty}_{\rm loc}$ and $f$ continuous are continuous in $t$. Therefore, since $h(t)$ is differentiable, $\psi_0$ is continuous. 

Then, iterating \eqref{eqcontr}, it follows that $L\psi_0$ is differentiable, as follows. We have $\psi_0=L\psi_0+h$ where $h$ is differentiable and $L\psi_0$ has the form
$L\psi_0(t)=\int_0^t ds\, \psi_0(s)(g_1(s,t)+\frac{1}{\sqrt{t-s}}g_2(s,t))$ and we will now show that $g_1,g_2$ are analytic in $s,t$.
By\-~\eqref{L},
\begin{equation}
  g_1(s,t)=\frac1{2\pi}\int_s^t du\ \frac{G(u,t)}{\sqrt{u-s}}
  ,\quad
  g_2(s,t)=
  \frac{E}{2\omega\sqrt{2i\pi}}\left( \sin(\omega s)\, + \frac{ \cos(\omega t)- \cos(\omega s)}{\omega(t-s)}\right) \,e^{iF_0(s,t)}
  .
\end{equation}
By\-~\eqref{Fzero}, $F_0$ is analytic, and so $g_2$ is as well.
As for $g_1$, we rewrite\-~\eqref{Gdef} as
\begin{equation}
  G(u,t)
  =\frac{G_1(u,t)}{\sqrt{t-u}}
  ,\quad
  G_1(u,t):=
  i\partial_u F_0(u,t)e^{iF_0(u,t)}-\frac{e^{iF_0(u,t)}-1}{2(t-u)}
\end{equation}
in terms of which
\begin{equation}
  g_1(s,t)=\frac1{2\pi}\int_s^t\ du\ \frac{G_1(u,t)}{\sqrt{t-u}\sqrt{u-s}}
  .
\end{equation}
Furthermore, by\-~\eqref{Fzero}, $G_1$ is analytic in $u,t$, and since for $n\geqslant 0$,
\begin{equation}
  \frac1{2\pi}\int_s^t\ du\frac{(t-u)^n}{\sqrt{t-u}\sqrt{u-s}}
  =
  \frac{\Gamma(n+\frac12)}{2\sqrt\pi n!}(t-s)^n
\end{equation}
we have
\begin{equation}
  g_1(s,t)=\sum_{n=0}^\infty\frac1{n!}\frac{\Gamma(n+\frac12)}{2\sqrt\pi n!}(t-s)^n\left.\frac{\partial^n G_1(u,t)}{\partial u^n}\right|_{u=t}
\end{equation}
which is analytic in $s,t$.
We then split $\psi_0(t)=\int_0^t ds\,\psi_0(s)(g_1(s,s)+\frac1{\sqrt{t-s}}g_2(s,s))+h_1(t)$ with $h_1:=h(t)+\int_0^tds\psi_0(s)((g_1(s,t)-g_1(s,s))+\frac1{\sqrt{t-s}}(g_2(s,t)-g_2(s,s))$ which is differentiable.
We now iterate this formula:
\begin{equation}
  \psi_0(t)=\int_0^t ds\,\left(g_1(s,s)+\frac{1}{\sqrt{t-s}}g_2(s,s)\right)\left[ \int_0^s d\sigma\,\psi_0(\sigma)\left(g_1(\sigma,\sigma)+\frac1{\sqrt{t-\sigma}}g_2(\sigma,\sigma)\right)+h_1(s)\right]     +h_1(t)
\end{equation}
in which we change the order of integration to find
\begin{equation}
  \psi_0(t)=\int_0^t d\sigma\,\psi_0(\sigma)\left[ \int_\sigma^t ds\,\left(g_1(s,s)+\frac{1}{\sqrt{t-s}}g_2(s,s)\right)\left(g_1(\sigma,\sigma)+\frac1{\sqrt{t-\sigma}}g_2(\sigma,\sigma)\right)+h_1(s)\right]     +h_1(t)
  .
\end{equation}
In this integral, $g_1,g_2$ are analytic in a neighborhood of $\RR^+$ and $\psi_0$ is continuous, hence the integral is differentiable with continuous derivative.
Using Lemma \ref{L10-a}, the same arguments, and the fact that the integral operators preserve H\"older continuity,  show (iv) holds.

\subsection{Proof of Lemma\,\ref{lemma:psipm} }\label{PfLema4}
We will first prove (iii), and then move on to (i) and (ii).
\bigskip

{\bf  (iii)}
 {\em For $x<0$} we show that the function given by \eqref{hatpsimin} is in $L^2$, we take its  inverse Fourier transform and show that the result is \eqref{psiminus} which is an $L^2$ function.
 
For the first term in  \eqref{hatpsimin}, note that since $f$ is in $L^2(\RR)$ then by \eqref{Cminus}, so is $C_-$ hence so is the inverse Fourier transform of ${\rm e}^{-i\xi^2 t/2} C_-(\xi)$. We have (see Note\,\ref{Note})
\begin{multline}\label{calchmin}
 l.i.m.\, \frac 1{{2\pi}}\int_{-N}^N d\xi\,  {\rm e}^{ix\xi -i\xi^2 t/2} l.i.m.\,\int_{-N}^0 dy\,  {\rm e}^{-iy\xi }f(y)\\
=l.i.m. \frac 1{{2\pi}}\int_{-N}^0 dy \, f(y) \int_{-N}^N d\xi \,{\rm e}^{ix\xi -i\xi^2 t/2-iy\xi} =l.i.m.\,\frac 1{{2\pi}}\int_{-N}^0 dy \, f(y)\,\frac{\sqrt{2\pi}}{\sqrt{it}} {\rm e}^{\frac{i(x-y)^2}{2t}}
\end{multline}
yielding \eqref{hminus}, and that $h_-(\cdot, t)$ is an $L^2$ function.

The second term in  \eqref{hatpsimin} is $\tfrac 1{2\sqrt{2\pi}}e^{-i\xi^2 t/2}\int_0^{t} ds\, {\rm e}^{i\xi^2 s/2}\   i\psi_{x,0}(s)$, is also an $L^2$ function. Indeed, from \eqref{fomulapsix0} we have $\psi_{x,0}(t)=u*t^{-1/2}$ with $u={\rm const}(\psi_0'-2\partial_th_-(0,\cdot))$ hence, after changing the order of integration and a substitution we have
$$\int_0^{t} ds\, {\rm e}^{i\xi^2 s/2}\  \psi_{x,0}(s)=\int_0^td\sigma\, u(\sigma)\,  {\rm e}^{i\xi^2 \sigma/2}\int_0^{t-\sigma}d\tau\,  {\rm e}^{i\xi^2 \tau/2}\tau^{-1/2} $$
and the last integral can be explicitly calculated and, for large $|\xi|$, it is less than const. $|\xi|^{-1}$.

The Fourier transform of this second term can be then computed as was done above for $h_-$, yielding 
\begin{equation}\label{term2minus}
\frac {\sqrt{i}}{2\sqrt{2\pi}}\int_0^t\, ds\, \psi_{x,0}(s)\frac 1{\sqrt{t-s}}e^{\frac{ix^2}{2(t-s)}}
\end{equation}

The third term in \eqref{hatpsimin} is also in $L^2$, since integrating by parts we have
\begin{multline}\label{modelxneg}
\int_0^{t} ds\, {\rm e}^{i\xi^2 s/2}\   \xi \psi_{0}(s)=\xi \int_0^{t} ds\, {\rm e}^{i(\xi^2+1) s/2}\  {\rm e}^{-i s/2} \psi_{0}(s)\\
=\frac{-2i\xi}{\xi^2+1} \left[ {\rm e}^{i \xi^2 t/2} \psi_0(t)-\psi_{0}(0)+\int_0^{t} ds\,{\rm e}^{i\xi^2 s/2}\  \left(\frac i2 \psi_{0}(s) -\psi_0'(s)\right)\, ds  \right]
\end{multline}
which,  since $\psi$ and $\psi'$ are locally bounded by lemma \ref{lemma:contraction}, is manifestly in $L^2$. To calculate its inverse Fourier transform we write
\begin{multline}\label{I4minus}
\frac{-1}{4\pi}\, \underset{\epsilon\to 0}{\underset{N\to\infty}{l.i.m.}}\, \int_{-N}^{N}\, d\xi\, {\rm e}^{ix\xi}{\rm e}^{-i\xi^2 t/2}\xi\int_0^{t-\epsilon}\, ds\, \psi_0(s){\rm e}^{i\xi^2 s/2}=\frac{-1}{4\pi}\, l.i.m.\,\int_0^{t-\epsilon}\, ds\, \psi_0(s)\, \int_{-N}^{N}\, d\xi\, {\rm e}^{ix\xi}{\rm e}^{-i\xi^2 (t-s)/2}\xi\\
=\frac{i}{4\pi} \,\underset{\epsilon\to 0}{l.i.m.}\,\int_0^{t-\epsilon}\, ds\, \psi_0(s)\,\partial_x \int_{-\infty}^{\infty}\, d\xi\, {\rm e}^{ix\xi}{\rm e}^{-i\xi^2 (t-s)/2}=
\frac{i}{4\pi} \int_0^t\, ds\, \psi_0(s)\,\partial_x \frac{\sqrt{2\pi}}{\sqrt{i}}\frac1{\sqrt{t-s}}e^{\frac{ix^2}{2(t-s)}}
\end{multline}

Adding up \eqref{calchmin}, \eqref{term2minus} and \eqref{I4minus} we obtain \eqref{psiminus}.

\

{\em For $x>0$,} we show that taking inverse Fourier transform in \eqref{hatpsiplu}-\eqref{Cplus} we obtain \eqref{psiplus}, an $L^2$ function.

The inverse Fourier transform in \eqref{hatpsiplu} is a sum of two tems: $I_1+I_2$ where
$I_1$ is the inverse Fourier transform of $e^{-i\Phi(u,t)}C_+(u)$ (where $u=\xi-\frac E\omega\sin(\omega t)$):
$$I_1=\frac{1}{2\pi}\int_{-\infty}^{\infty}\, d\xi\, e^{ix\xi}e^{-i\Phi\left(\xi-\frac E\omega\sin(\omega t),t\right)-i\frac{E}{\omega^2}\xi}\int_0^\infty\, dy\, e^{-iy\xi}f(y) =h_+(x,t)$$
where the calculation is similar to that of $h_-$, and yields \eqref{hplus} (which is an $L^2$ function since $f$ is).

\begin{multline}\label{I2plus}
 I_2=l.i.m.\frac1{4\pi}\int_{-N}^Nd\xi e^{ix\xi}e^{-i\Phi(u,t)}\int_0^t \, e^{i\Phi(u,s)}   [ -i \psi_{x,0}(s)+\xi\psi_0(s)]\,ds \\ 
=\frac1{4\pi}e^{ix \frac E\omega \sin\omega t}l.i.m.\int_{-N}^Ndu\, e^{ixu}e^{-i\Phi(u,t)}\int_0^t \, e^{i\Phi(u,s)}  [ -i \psi_{x,0}(s)+(u+\frac E\omega \sin\omega t)\psi_0(s) ]\,ds \\ 
=\frac1{4\pi}e^{ix \frac E\omega \sin\omega t}\,\int_0^t \,ds  [ -i \psi_{x,0}(s)+\frac E\omega \sin\omega t\, \psi_0(s)]\, \ l.i.m.\,\int_{-N}^Ndu\, e^{ixu-i\Phi(u,t)+i\Phi(u,s)} \,ds \\
+ \frac1{4\pi}e^{ix \frac E\omega \sin\omega t}\,\int_0^t \,ds\ \psi_0(s)\,  l.i.m.\,\int_{-N}^Ndu\, \  u\ e^{ixu-i\Phi(u,t)+i\Phi(u,s)} \,ds \ =:I_3+I_4
\end{multline}

To continue the calculations in \eqref{I2plus}, we have
\begin{multline}\label{I3plus}
I_3=\frac1{4\pi}e^{ix \frac E\omega \sin\omega t -i(U+\frac{E^2}{4\omega^2})t}\\
\times\  \int_0^t \,ds  [ -i \psi_{x,0}(s)+\frac E\omega \sin\omega t\, \psi_0(s)]\ 
 e^{-i\frac{E^2}{8\omega^3}[\sin(2\omega t)-\sin(2\omega s)]+i(U+\frac{E^2}{4\omega^2})s}   \ \frac{\sqrt{2\pi}}{\sqrt{i}\sqrt{t-s}}\, e^{i\frac{[x+\frac{E}{\omega^2} [\cos\omega t -\cos\omega s]]^2}{2(t-s)}}
  \end{multline}
  Thus
  \begin{equation}\label{I3final}
    I_3 =\frac 1{2\sqrt{2\pi i}}\  \int_0^t \,ds\,  \frac{ -i \psi_{x,0}(s)+\frac E\omega \sin\omega t\, \psi_0(s)}{\sqrt{t-s}}\   \, e^{iF(x,s,t)}
 \end{equation}
 where $F$ is given by \eqref{defFxst}.

We evaluate $I_4$ in a way similar to \eqref{I4minus}:
\begin{multline}\label{I4plus1}
I_{4}= \frac1{4\pi}e^{ix \frac E\omega \sin\omega t}\ l.i.m.\int_0^{t-\epsilon} \,ds\,\psi_0(s)\, \int_{-N}^Ndu\, \  u\ e^{ixu-i\Phi(u,t)+i\Phi(u,s)} \,ds \\
 =\frac1{4\pi}e^{ix \frac E\omega \sin\omega t}\ l.i.m.\int_0^{t-\epsilon} \,ds\,\psi_0(s)\, (-i)\partial_x\,\int_{-N}^Ndu\,  e^{ixu-i\Phi(u,t)+i\Phi(u,s)} \,ds \\
 =\frac1{4\pi}e^{ix \frac E\omega \sin\omega t   -i(U+\frac{E^2}{4\omega^2})t  }\  \underset{\epsilon\to 0}{l.i.m}\,\int_0^{t-\epsilon} \,ds\,\psi_0(s)\, (-i)
 e^{-i\frac{E^2}{8\omega^3}[\sin(2\omega t)-\sin(2\omega s)]   +i(U+\frac{E^2}{4\omega^2})s  } \\
 \times\,  \ \frac{\sqrt{2\pi}}{\sqrt{i}\sqrt{t-s}}\, \partial_x\, e^{i\frac{[x+\frac{E}{\omega^2} [\cos(\omega t) -\cos(\omega s)]]^2}{2(t-s)}}\\
 \end{multline}
 \begin{multline}\label{I4plus}
  =\frac1{2\sqrt{2\pi i}}e^{ix \frac E\omega \sin\omega t   -i(U+\frac{E^2}{4\omega^2})t }\ \int_0^t \,ds\,\psi_0(s)\, (-i)
 e^{-i\frac{E^2}{8\omega^3}[\sin(2\omega t)-\sin(2\omega s)]   +i(U+\frac{E^2}{4\omega^2})s}   \\
 \times\,  i\,\frac{\frac{E}{\omega^2} [\cos(\omega t) -\cos(\omega s)]+x}{(t-s)^{3/2}}  \, e^{i\frac{\left[\frac{E}{\omega^2} (\cos\omega t -\cos\omega s)+x\right]^2}{2(t-s)}}
\end{multline}
thus
\begin{equation}\label{I4final}
I_4=\frac1{2\sqrt{2\pi i}}\ \int_0^t \,ds\,\psi_0(s)\, \frac{\frac{E}{\omega^2} [\cos(\omega t) -\cos(\omega s)]+x}{(t-s)^{3/2}}  \, e^{iF(x,s,t)}
\end{equation}
 a convergent improper integral, with $F$ given by \eqref{defFxst}. 

Adding \eqref{I2plus}, \eqref{I3plus} and \eqref{I4plus} we obtain \eqref{psiplus}, an $L^2$ function.

\bigskip

{\bf (i)} 
The fact that \eqref{psiminus} is a solution of \eqref{schrodinger2} for $x<0$ is a simple calculation.

We will now calculate the limit of \eqref{psiminus} as $x\to0-$. 
Note that, for $x<0$,
\begin{equation}\label{beforewhere}
\int_0^{t} ds\, \psi_0(s) \,  \frac {ix}{(t-s)^{3/2}}{\rm e}^{\frac{ix^2}{2(t-s)}} = \psi_0(t)  \int_0^{t} ds\, \frac{ix}{(t-s)^{3/2}}{\rm e}^{\frac{ix^2}{2(t-s)}}
+
\int_0^{t} ds\, \frac{ix(\psi_0(s)-\psi_0(t))}{(t-s)^{3/2}}{\rm e}^{\frac{ix^2}{2(t-s)}}
\end{equation}
and
\begin{equation}
\int_0^{t} ds\, \frac{ix(\psi_0(s)-\psi_0(t))}{(t-s)^{3/2}}{\rm e}^{\frac{ix^2}{2(t-s)}}
  =O(x)
  .
  \label{diffpsi}
\end{equation}
Furthermore,
\begin{equation}
\psi_0(t)  \int_0^{t} ds\, \frac{ix}{(t-s)^{3/2}}{\rm e}^{\frac{ix^2}{2(t-s)}}
=- i \psi_0(t)\int_{x^2/t}^\infty \frac 1{\sqrt{\tau}}{ \rm e}^{i\tau/2}
\end{equation}
so
\begin{equation}
  \int_0^{t} ds\, \psi_0(s) \,  \frac {ix}{(t-s)^{3/2}}{\rm e}^{\frac{ix^2}{2(t-s)}}
  \to  -i\, \psi_0(t)\, \sqrt{2\pi i}\ \ (\text{as }x\to 0-)
  .
\end{equation}
Therefore, taking $x\to0-$,
$$\psi_-(0,t)=h_-(0,t)+ \frac {\sqrt{i}}{2\sqrt{2\pi}} \int_0^{t} ds\,\frac { \psi_{x,0}(s)}{\sqrt{t-s}} +\frac{1}{{2}}\psi_0(t)$$
and the right hand side in the above equals $\psi_0(t)$ by \eqref{eqcontr}.

The limit of \eqref{psiminus} as $t\to0+$ for $x<0$ equals $\lim_{t\to 0+}h_-(x,t)$. With the large parameter $t^{-1}$, the integrand has a saddle point at $y=x$, hence, by the saddle point method, equals $f(x)$.

\

{\bf  (ii)}
The fact that $\psi_+$ given by \eqref{psiplus} is a solution of \eqref{schrodinger2} for $x>0$ is a simple calculation.

We will now take the limit of \eqref{psiplus} as $x\to0+$. 
From \eqref{I3final} we have
\begin{equation}\label{I30}
\lim_{x\to 0+}I_3=\frac1{2\sqrt{2\pi i}}\ \int_0^t \,ds  [ -i \psi_{x,0}(s)+\frac E\omega \sin\omega t\, \psi_0(s)]\ 
  \frac1{\sqrt{t-s}}\, e^{iF(0,s,t)}
   \end{equation}

To calculate the limit of $I_4$,  we write $I_4=I_{41}+I_{42}$ where
$$I_{41}=\frac1{2\sqrt{2\pi i}}\ \int_0^t \,ds\,\psi_0(s)\, \frac{\frac{E}{\omega^2} [\cos(\omega t) -\cos(\omega s)]}{(t-s)^{3/2}}  \, e^{iF(x,s,t)},\ \ I_{42}=\frac1{2\sqrt{2\pi i}}\ \int_0^t \,ds\,\psi_0(s)\, \frac{x}{(t-s)^{3/2}}  \, e^{iF(x,s,t)}$$
We have
\begin{equation}\label{I410}
\lim_{x\to 0+}I_{41}=\frac1{2\sqrt{2\pi i}}\ \int_0^t \,ds\,\psi_0(s)\, \frac{\frac{E}{\omega^2} [\cos(\omega t) -\cos(\omega s)]}{(t-s)^{3/2}}  \, e^{iF(0,s,t)}
 \end{equation}
while (by the same reasoning as in \eqref{diffpsi},)
 \begin{equation}\label{I420}
 I_{42}=
 \frac1{2\sqrt{2\pi i}}\ \psi_0(t)\ \int_0^t \,ds\, \frac{x}{(t-s)^{3/2}}  \, e^{iF(x,s,t)}+O(x)
  \end{equation}
  Now, by \eqref{defFxst}, $\frac{F(x,s,t)-\frac{x^2}{2(t-s)}}{t-s}$ is analytic, hence from \eqref{I420} we further have
  \begin{equation}\label{I420}
  I_{42}=
 \frac1{2\sqrt{2\pi i}}\ \psi_0(t)\ \int_0^t \,ds\, \frac{x}{(t-s)^{3/2}}  \, e^{i \frac{x^2}{2(t-s)}}+O(x)\to \frac12 \psi_0(t)\ \ \text{as}\ x\to 0+
  \end{equation}
where the last limit is evaluated as in \eqref{beforewhere}.

Combining  \eqref{I30}, \eqref{I410}, \eqref{I420} we obtain \eqref{psiplus0}, whose right hand side equals $\psi_0(t)$, since $\psi_0$ satisfied the relations in Lemma\,\ref{lemma:matching}.

The limit of \eqref{psiplus} as $t\to0+$ equals
$$\lim_{t\to 0+}  h_+(x,t)= \lim_{t\to 0+}\    \frac1{\sqrt{2\pi i t }} \int_0^{\infty }dy\, f(y)\, e^{\frac{i\left[x-y\right]^2}{2t} } +O(t)=f(x)$$
where we used the saddle point method.

\subsection{Proof of Theorem\,\ref{theo:existence}}

Let $\psi_+$ and $\psi_-$ be given by Lemma\,\ref{lemma:psipm}. Then $\psi(x,t):=\psi_-(x,t)\Theta(-x)+\psi_+(x,t)\Theta(x)$ is an $L^2$ solution of \eqref{schrodinger2} with the initial condition $f$.

\section{Long time behavior: proof of Theorems\,\ref{Thlongtime} and \ref{theo:2}}\label{PfLongt}

Most of the technical elements of the proof of Theorem \ref{theo:2} are common with those of Theorem\,\ref{Thlongtime}. The only distinction (the presence of some additional poles due to the initial condition) are dealt with at the end of this section.

The discrete-Laplace transform technique used in this section was devised as an adaptation of Laplace-Borel methods used in \cite{CCe20}, \cite{CCL18}, in order to deal with the present setting of noncompact operators, see Appendix \ref{appen} for the connection between the two.

We perform a discrete Laplace transform (DLP) and the long time behavior of the system is now contained in the analytic properties of the transformed wave function with respect to the Laplace parameter $\sigma$. Namely, the discrete inverse Laplace transform (DILT), whose coefficients are obtained by Cauchy's formula, shows that the solution of the Schr\"odinger equation \eqref{schrodinger2} decay, as $t\to\infty$, uniformly for $x$ on compact sets, if and only if the Maclaurin coefficients decay with respect to their index $k$, which happens if and only if the DLP has no poles in the Laplace variable in an open neighborhood of the unit disk. The decay in $t$ mimics the decay of the coefficients with respect to their index $k$, and for the latter we show to have an upper bound of $k^{-1/2}$. 
The absence of poles is shown by proving the absence of discrete spectrum of the quasienergy operator where we use methods of Ecalle's theory of resurgence of transseries \cite{Co08,Ec81}.

The mathematical details in this section are as follows.
To avoid complicating the notations, in this section we assume $\omega=1$ (in fact $\omega$ can be rescaled in equation \eqref{schrodinger2}; see  appendix \ref{appen}  for $\omega$ not rescaled).

In \S\ref{discrete-Laplace transform} we define the DLT and its inverse DILT, and we show how it can be used for the study of integral equations of our type. In \S\ref{anstrsol} we study integral kernels with a singularity of the type we are dealing with, and give details on the techniques we use and results. In \S\ref{sovpereq} we discrete-Laplace transform the equation \eqref{eqcontr} for $\psi_0$ and deduce that its discrete-Laplace transform only has singularities of the square root branch point type and possible poles, with a finite number in any compact set, having the analytic structure \eqref{stareq20}. 
 In \S\ref{Proofofb} we show that existence of poles imply existence of nontrivial solutions of the quasienergy equation. The latter are ruled out in \S\ref{RAGE} based on Ecalle's theory of transseries, showing that the DLP of $\psi$ has no poles in a neighborhood of the closed unit disk. 

Combining all these elements the proof of Theorem\,\ref{Thlongtime} is completed in \S\ref{TheEnd}, and that of Theorem\,\ref{theo:2} is completed in \S\ref{lasttheo}.

\subsection{Discrete-Laplace Transform and long time behavior of $\psi(0,t)$}\label{discrete-Laplace transform}

The logic of the construction is as in \S\ref{Rigres}: we derive formally an integral equation for the discrete-Laplace transform (defined below) of $\psi_0(t):=\psi(0,t)$ , we show existence and uniqueness of solutions of that equation after which we check, in a straightforward way, that the solution is the discrete-Laplace transform of $\psi_0$.

 Let $\mathcal{S}$ be the space of  functions of the form $t\mapsto \Theta(t)g(t)$ which decay faster than $t^{-1-\epsilon}$.  For $f\in\mathcal{S}$ define its \emph{discrete-Laplace transform} for $\tau\in (-\pi,\pi]$ and $\sigma\in [0,1)$ by
 \begin{equation}\label{defP}
 (\mathcal{P}_\sigma f)(\tau):=\sum_{ k\ge 0}{\rm e}^{i\sigma{2k\pi}}f\left(\tau+{2k\pi}\right)
 \end{equation} 
 
Note that the function $f$ can be recovered from its discrete-Laplace transform by
\begin{equation}\label{invP}
 f\left(\tau+{2k\pi}\right) =\int_{0}^1\,d\sigma \, {\rm e}^{-i\sigma{2k\pi}}(\mathcal{P}_\sigma f)(\tau)
\end{equation} 
for all $\tau\in (-\pi,\pi]$ and $ k\ge 0$.

 For functions with not enough decay to ensure convergence of \eqref{defP} it is convenient to define a more general transform by taking complex $\sigma$ with $\Im\sigma>0$. Denote $z={\rm e}^{i\sigma{2\pi}}$ (note that $|z|<1$) and define  
   \begin{equation}\label{defPz}
 (\mathcal{P}_z f)(\tau):=\sum_{ k\ge 0} z^k f\left(\tau+{2k\pi}\right)
 \end{equation} 
 Then \eqref{defPz} is a generating function. Assume \eqref{defPz} converges in $D_\delta:=\{ z\in\CC\,|\, |z|<\delta\}$. Then the inversion relations \eqref{invP} are replaced by the Cauchy formula:
 \begin{equation}\label{invPz}
 f\left(\tau+{2k\pi}\right) =\frac 1{2\pi i}\oint_Cd\zeta\frac{ (\mathcal{P}_\zeta f)(\tau)}{\zeta^{k+1}}
\end{equation} 
where $C\subset D_\delta$ is a simple closed path around $0$.
 
 If the limit of $\mathcal P_z$ in\-~(\ref{defPz}) as $z$ approaches the unit circle exists, except possibly  at a discrete set of singular points (which we show to be our case), we take that limit (called the Abel sum, or Abel mean) as the discrete-Laplace transform of our function. As is well known, Abel summation of a convergent series is the ordinary sum \cite{Co08}, hence the two definitions coincide in this case.

 We aim to transform $\psi_0$.  All that is guaranteed for now for $\psi_0$, by Lemma\,\ref{lemma:contraction}, are exponential bounds in time, therefore we use \eqref{defPz} which is guaranteed to converge for $z$ small enough. We then prove in this section that, if the initial condition $f\in L_2$ then $\psi_0$ decays in time, while for the wave initial condition\-~(\ref{init}) $\psi_0$ approaches a periodic function.

The proposition below shows the form of a discrete-Laplace transformed integral operator with a kernel of the form in which we are interested here.

\begin{Proposition}\label{PerK}

Consider an operator of the form
\begin{equation}\label{opLform}
Lf(t)=\int_0^t ds\, f(s)K(s,t),
\end{equation} 
where $K(s,t)=0$ if $t<0$ or $s\not\in[0,t]$.
Then 
\begin{equation}\label{peropint}
 (\mathcal{P}_\sigma Lf)(\tau) =\int_{-\pi}^\pi dr \int_{0}^1 d\sigma_1\, (\mathcal{P}_{\sigma_1} f)(r)\, (\mathcal{P}_{-\sigma_1}\mathcal{P}_\sigma K)(r,\tau):=(\mathcal{P}_\sigma L) \mathcal{P}_{\sigma} f
 \end{equation} 
 
For complex $\sigma$ with $\Im\sigma>0$ the integrals $\int_0^1 d\sigma_1$ are replaced by $\frac1{2\pi i}\oint \frac{1}{z_1}dz_1$.

\end{Proposition}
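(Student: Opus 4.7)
The plan is to prove this by a direct computation. Starting from the definition
\begin{equation*}
(\mathcal{P}_\sigma Lf)(\tau) = \sum_{k\ge 0} e^{i\sigma 2k\pi}\int_0^{\tau+2k\pi} f(s)\,K(s,\tau+2k\pi)\,ds,
\end{equation*}
I would parametrize the integration variable as $s = r + 2j\pi$ with $r\in(-\pi,\pi]$ and integer $j\ge 0$. Because $K(s,t)=0$ for $s\notin[0,t]$, the extra contributions coming from $r\in(-\pi,0)$ at $j=0$ (corresponding to $s<0$) and from $j$ large (corresponding to $s>t$) automatically vanish, so the iterated integral over $s\in[0,\tau+2k\pi]$ is equivalently written as a double sum/integral $\sum_{j\ge 0}\int_{-\pi}^{\pi}dr$ without truncation. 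This gives
\begin{equation*}
(\mathcal{P}_\sigma Lf)(\tau) = \sum_{k\ge 0}\sum_{j\ge 0} e^{i\sigma 2k\pi}\int_{-\pi}^{\pi}dr\, f(r+2j\pi)\,K(r+2j\pi,\tau+2k\pi).
\end{equation*}

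Next I would use the inversion formula \eqref{invP} (respectively \eqref{invPz} for complex $\sigma$) to express the slice $f(r+2j\pi)$ as
\begin{equation*}
f(r+2j\pi) = \int_0^1 d\sigma_1\, e^{-i\sigma_1 2j\pi}(\mathcal{P}_{\sigma_1}f)(r),
\end{equation*}
insert this in the previous display, and interchange the orders of $\int d\sigma_1$, $\int dr$, and the two sums over $j$ and $k$. By the definition \eqref{defP} applied in both the time variable (first to $K(s,\cdot)$ producing $\mathcal{P}_\sigma K$, then to $\mathcal{P}_\sigma K(\cdot,\tau)$ with parameter $-\sigma_1$), the double sum $\sum_{j,k}e^{-i\sigma_1 2j\pi}e^{i\sigma 2k\pi}K(r+2j\pi,\tau+2k\pi)$ is precisely $(\mathcal{P}_{-\sigma_1}\mathcal{P}_\sigma K)(r,\tau)$, yielding the claimed identity. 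The case of complex $\sigma$ with $\Im\sigma>0$ is identical once \eqref{invPz} is used, the integral $\int_0^1 d\sigma_1$ being replaced by a Cauchy contour $\frac{1}{2\pi i}\oint\frac{dz_1}{z_1}$ around the origin inside the disk of convergence.

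The main obstacle is justifying the interchange of the sums and integrals, which is where the hypothesis $f\in\mathcal{S}$ (decay faster than $t^{-1-\epsilon}$) enters. For real $\sigma\in[0,1)$ the decay of $f$ gives absolute convergence of $\sum_j|f(r+2j\pi)|$ uniformly in $r$, and combined with any mild growth bound on $K$ along the time direction needed to make $(\mathcal{P}_\sigma K)(s,\tau)$ make sense, Fubini applies. For $|z|=|e^{i2\pi\sigma}|<1$, absolute convergence is even easier since the geometric weight $|z|^k$ gives convergence under only an exponential bound on $f$ and $K$, which is exactly the regime in which we intend to apply the proposition to $\psi_0$ whose only a priori control comes from Lemma \ref{lemma:contraction}. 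Thus the two statements of the proposition should be handled in parallel, real $\sigma$ requiring the decay hypothesis $f\in\mathcal{S}$ and complex $\sigma$ requiring only that $|z|$ be sufficiently small.
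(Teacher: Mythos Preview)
Your proposal is correct and follows essentially the same route as the paper's proof: both start from the definition of $\mathcal{P}_\sigma$, parametrize $s=r+2j\pi$, insert the inversion formula \eqref{invP} for $f(r+2j\pi)$, and identify the resulting double sum over $j,k$ as $(\mathcal{P}_{-\sigma_1}\mathcal{P}_\sigma K)(r,\tau)$. Your version is in fact slightly more streamlined, since you use the support hypothesis $K(s,t)=0$ for $s\notin[0,t]$ to replace the paper's explicit three-part splitting of $\int_0^{\tau+2k\pi}$ (into $\int_0^\pi$, $\sum_{j=1}^{k-1}\int_{(2j-1)\pi}^{(2j+1)\pi}$, and $\int_{(2k-1)\pi}^{\tau+2k\pi}$) by a uniform $\sum_{j\ge 0}\int_{-\pi}^\pi$ with the boundary truncations absorbed automatically.
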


\begin{proof}
An immediate calculation, using \eqref{defP}, then \eqref{invP} shows that
\begin{multline}\label{intop}
 (\mathcal{P}_\sigma Lf)(\tau) =\int_0^\tau ds\, f(s)K(s,\tau)\Theta(\tau)+\sum_{k=1}^\infty e^{i\sigma2k\pi}\int_0^{\tau+2k\pi}\, ds\, f(s)K(s,\tau+2k\pi)\\ 
 =\int_0^\tau ds\, f(s)K(s,\tau)\Theta(\tau)+\sum_{k=1}^\infty e^{i\sigma2k\pi}\left[ \int_0^\pi +\sum_{j=1}^{k-1}\int_{(2j-1)\pi}^{2(j+1)\pi}+ \int_{(2k-1)\pi}^{\tau+2k\pi}\right] ds\, f(s)K(s,\tau+2k\pi)\\
 =\int_0^\tau ds\, f(s)K(s,\tau)\Theta(\tau)+\sum_{k=1}^\infty e^{i\sigma2k\pi}\left[ \int_0^\pi ds\,f(s)K(s,\tau+2k\pi)\right.\\
 +\left. \sum_{j=1}^{k-1}\int_{-\pi}^{\pi}ds\,f(s+2j\pi)K(s+2j\pi,\tau+2k\pi)
+ \int_{-\pi}^{\tau}ds\,f(s+2k\pi)K(s+2k\pi,\tau+2k\pi)\right]  \\
= \int_0^\tau ds\,\int_{0}^{1}d\sigma_1 \mathcal{P}_{\sigma_1} f(s) K(s,\tau) + \sum_{k=1}^\infty e^{i\sigma2k\pi} \int_0^\pi ds\,\int_{0}^{1}d\sigma_1\, \mathcal{P}_{\sigma_1} f(s)K(s,\tau+2k\pi)\\
+ \sum_{k=1}^\infty \sum_{j=1}^{k-1}e^{i\sigma2k\pi} \int_{-\pi}^{\pi} ds\,\int_{0}^{1}d\sigma_1\, e^{-i\sigma_12j\pi}\mathcal{P}_{\sigma_1} f(s)K(s+2j\pi,\tau+2k\pi)\\
+ \sum_{k=1}^\infty e^{i\sigma2k\pi}  \int_{-\pi}^{\tau}ds\, \int_{0}^{1}d\sigma_1\, e^{-i\sigma_12k\pi}\mathcal{P}_{\sigma_1} f(s) K(s+2k\pi,\tau+2k\pi)  
\end{multline}
which equals the right hand side of \eqref{peropint}, since $K(s,t)=0$ if $s\not\in[0,t]$.
\end{proof}

\subsection{Analytic structure of solutions}\label{anstrsol}
We will apply the discrete-Laplace transform to integral kernels $K(s,t)$ which are multiples of $\tfrac1{\sqrt{t-s}}$, and this factor introduces singularities in $\sigma$. We start by treating $\tfrac1{\sqrt{t-s}}$ as a standalone term, as this clarifies the techniques needed, and then proceed with the actual operator.
 
 For $K(s,t)=\tfrac1{\sqrt{t-s}}\Theta(t)\chi_{[0,t)}(s)$ a direct calculation shows that the discrete-Laplace transformed kernel in \eqref{peropint} has the expression
\begin{multline}\label{sumsq}
 (\mathcal{P}_{-\sigma_1}\mathcal{P}_\sigma K)(r,\tau) =\Theta(\tau) \chi_{[0,\tau]}(r)\frac1{\sqrt{\tau-r}} + \chi_{[0,\pi]}(r) \sum_{k=1}^\infty \frac {e^{i\sigma 2k\pi}}{\sqrt{\tau+2k\pi-r}}  \\
+\sum_{k=1}^\infty \sum_{j=1}^{k-1} \frac {e^{i\sigma 2k\pi} e^{-i\sigma_1 2j\pi}}{\sqrt{\tau+2(k-j)\pi-r}} +  \chi_{[-\pi,\tau]}(r) \sum_{k=1}^\infty  \frac {e^{i\sigma 2k\pi} e^{-i\sigma_1 2k\pi}}{\sqrt{\tau-r}}
\end{multline}

Some of the series in \eqref{sumsq} must be interpreted in the sense of distributions. To see how, we truncate the series in $k$ to a term $N$ then take the limit $N\to\infty$. Take for example the third term in \eqref{sumsq}, the most involved. Changing the index of summation $j$ to $\ell=k-j$ we have

$$T_3=\sum_{k=1}^\infty \sum_{j=1}^{k-1} \frac {e^{i\sigma 2k\pi} e^{-i\sigma_1 2j\pi}}{\sqrt{\tau+2(k-j)\pi-r}} =\frac1{\sqrt{2\pi}} \sum_{k=1}^N\frac{u^k}{w^k}\sum_{\ell=1}^{k-1}\frac{w^\ell}{\sqrt{a+l}}$$
where
\begin{equation}\label{notuw}
u=e^{i\sigma 2\pi},\ \ w=e^{i \sigma_12\pi},\ \ a=\frac{\tau-r}{2\pi}
\end{equation}
For $u,w\ne 1$ (meaning that  $\sigma,\sigma_1\ne 0$) we use the integral representation of the Lerch transcendent\-~\cite[(25.14)]{DLMF1.1.6}:
\begin{equation}\label{Phiint}
\Phi\left(z,s,a\right)=\frac{1}{\Gamma\left(s\right)}\int_{0}^{\infty}\frac{p^{s-1}e^{-ap}}{1-ze^{-p}}\mathrm{d}p,\ \ \ z\not\in[1,+\infty)
\end{equation}
and the identity
\begin{equation}
\label{sumPhi}
\sum_{\ell=1}^{k-1}\frac{z^\ell}{(a+\ell)^b}=z\Phi(z,b,a+1)-z^k\Phi(z,b,a+k)
\end{equation}
We then have
\begin{multline}
T_3=\frac1{\sqrt{2\pi}} \sum_{k=1}^N \frac{u^k}{w^k}\left[w\Phi(w,\frac12,a+1)-w^k\Phi(w,\frac12,a+k)\right]\\
=\frac1{\sqrt{2\pi}}w \Phi(w,\frac12,a+1)\sum_{k=1}^N \frac{u^k}{w^k}-\frac1{\sqrt{2\pi}}\frac1{\sqrt{\pi }} \int_0^\infty \frac{dp}{\sqrt{p }}e^{-ap}\frac1{1-we^{-p}}\sum_{k=1}^Nu^ke^{-pk}
=T_{3,1}-T_{3,2}
\end{multline}
with
\begin{equation}
T_{3,1}
:= \frac1{\sqrt{2\pi}}w \Phi(w,\frac12,a+1)\sum_{k=1}^N \frac{u^k}{w^k}
,\quad
T_{3,2}:=\frac1{\sqrt{2\pi}}\frac{u}{\sqrt{\pi }} \int_0^\infty \frac{dp}{\sqrt{p }}e^{-(a+1)p}\frac1{1-we^{-p}}\frac{1- \left(ue^{-p}\right)^N}{1-ue^{-p}}
\end{equation}

To determine $\lim_{N\to\infty}T_{3,1}$, we note that it appears in \eqref{intop} in an integral form, after multiplication by the periodic function $\mathcal{P}_{\sigma_1} f$, then integrated in $\sigma_1$. We have, using \eqref{notuw}, for $\sigma,\sigma_1\ne0$,

\begin{multline}
\lim_{N\to\infty}   \int_0^1d\sigma_1 T_{3,1}\mathcal{P}_{\sigma_1} f(\tau)\\
 =\lim_{N\to\infty} \frac1{\sqrt{2\pi}}\sum_{k=1}^N e^{2k\pi i \sigma} \int_0^1d\sigma_1e^{-2k\pi i \sigma_1}e^{2\pi i \sigma_1} \Phi\left(e^{2\pi i \sigma_1},\frac12,a+1\right)\mathcal{P}_{\sigma_1} f(r) \\
 =\frac1{\sqrt{2\pi}}e^{2\pi i \sigma} \Phi\left(e^{2\pi i \sigma},\frac12,a+1\right)\mathcal{P}_{\sigma} f(r)-\frac1{\sqrt{2\pi}} \int_0^1d\sigma_1e^{2\pi i \sigma_1} \Phi\left(e^{2\pi i \sigma_1},\frac12,a+1\right)\mathcal{P}_{\sigma_1} f(r)
\end{multline}
(the limit is a distribution).

Clearly, for $u,w\ne1$,
\begin{multline}\label{limT32}
\lim_{N\to\infty} T_{3,2}=\frac1{\sqrt{2\pi}}\frac{u}{\sqrt{\pi }} \int_0^\infty \frac{dp}{\sqrt{p }}e^{-(a+1)p}\frac1{1-we^{-p}}\frac{1}{1-ue^{-p}}  \\
\end{multline}

The other terms in \eqref{sumsq} are similar and simpler.

\

We now show that $\sigma=0$ and $\sigma_1=0$ are indeed singularities, namely square root branch points. For this we define the operator for $\sigma$ in the upper complex plane, and take the limit $\Im\sigma\to 0$.

Clearly \eqref{limT32} still holds for $u,w$ complex with $|u|<1,|w|<1$.

We now  deform the path of integration: $\int_0^\infty=\tfrac12\int_\mathcal{C}$ where $\mathcal{C}$ is a Hankel contour around $[0,+\infty)$, namely a contour starting at $\infty-0i$, going around 0, and ending at $\infty+0i$. We further deform the path of integration to $\mathcal{C}_1$ so that the poles at $p=\ln w$ and $p=\ln u$ are now inside $\mathcal{C}_1$, in the process collecting the residues:
$$\lim_{N\to\infty}T_{3,2}=\frac u{2\sqrt{2}\pi} \int_{\mathcal{C}_1} \frac{dp}{\sqrt{p }}e^{-(a+1)p}\frac1{(1-we^{-p})(1-ue^{-p})} - \frac u{2\sqrt{2}\pi} 2\pi i \frac{u}{u-w}\left( \frac{u^{-a-1}}{\sqrt{\ln u}} - \frac{w^{-a-1}}{\sqrt{\ln w}}  \right)$$
Letting $\Im\sigma,\sigma_1\to 0$, the first integral above is an analytic function, while the sum of residues equals
$$\frac i{\sqrt{2}}\frac{e^{i\sigma 4\pi}}{e^{i\sigma 2\pi} -e^{i\sigma_1 2\pi} } \left( \frac{e^{-i(a+1)\sigma 2\pi}}{\sqrt{i\sigma 2\pi}} -  \frac{e^{-i(a+1)\sigma_1 2\pi}}{\sqrt{i\sigma_1 2\pi}} \right)$$
and is analytic (including when $\sigma=\sigma_1$) except for $\sigma=0$ and $\sigma_1=0$, where there are square root branch points.

The term $T_{3,1}$ is similar: we deform the path of integration of $\Phi$, $[0,+\infty)$, to $\mathcal{C}$, which is further deformed to $\mathcal{C}_1$ so that the pole at $p=\ln w=2\pi i\sigma_1$ is inside $\mathcal{C}_1$, in the process collecting the residue:
\begin{multline}
\Phi(w,\tfrac12,a+1)=\frac 1{\sqrt{\pi}}\int_0^\infty \frac{dp}{\sqrt{p}}  \frac{e^{-(a+1)p}}{1-we^{-p}} =\frac 1{2\sqrt{\pi}}\int_\mathcal{C} \frac{dp}{\sqrt{p}}  \frac{e^{-(a+1)p}}{1-we^{-p}}\\
=\frac 1{2\sqrt{\pi}}\int_{\mathcal{C}_1} \frac{dp}{\sqrt{p}}  \frac{e^{-(a+1)p}}{1-we^{-p}}+\frac 1{2\sqrt{\pi}}\, 2\pi i\, \frac{e^{-(a+1)i 2\pi\sigma_1 }}{\sqrt{i 2\pi\sigma_1}}:=\Phi_1(\sigma_1)
\end{multline}
and in this form we can let $\Im\sigma_1\to 0$. Taking the limit $N\to\infty$ as before, we obtain the limit as a distribution, which now, due to the residue, contains square root branch points.
We thus see that $\Phi_1(\sigma_1)$ has the form
\begin{equation}\label{anstr}
A_1(\sigma_1)+\frac{1}{\sqrt{\sigma_1}}A_2(\sigma_1),\ \ \ \ \text{with\ }A_{1,2}\ \text{analytic}
\end{equation}

\subsection{Solving the discrete-Laplace transformed equation \eqref{eqcontr}}\label{sovpereq}

We first note that the series for 
$\mathcal{P}_\sigma\psi_0(\tau)$ converges when $z:= {\rm e}^{2\pi i\sigma}$ has small enough absolute value, by Lemma\,\ref{lemma:contraction}. We show in Proposition \ref{prop:floquet} that the series converges for $|z|<1$ and that the only singularities are square root branch points at $\sigma=0$ and at $\sigma=\sigma_0$. Based on this, Lemma\,\ref{psi0longtime} provides the decay of $\psi(0,t)$ and finishes the proof of  Theorem\,\ref{Thlongtime} (ii).

The following Theorem establishes the analytic structure of $\mathcal{P}_\sigma\psi_0$.
We apply discrete-Laplace transform of Proposition\,\ref{PerK} to our integral equation \eqref{eqcontr} and obtain

\begin{Theorem}\label{sweatandblood}

Assume the initial condition $f(x):=\psi(x,0)$ is differentiable, with $f'\in L^2(\RR)$, that $f(0)=0$, $f$ has compact support and $\int_{-\infty}^0f(y)dy=0$, $\int_0^\infty f(y)dy=0$.

Let $\sigma_0$ be the fractional part of  $U+\tfrac{E^2}{4\omega^2}$.

Let $\psi_0$ be the unique solution of equation \eqref{eqcontr} given by Lemma\,\ref{lemma:contraction}.

For simplicity, we choose units such that $\omega=1$.

(a) The discrete-Laplace transform $\Psi_\sigma:=\mathcal{P}_\sigma\psi_0$ satisfies the equation
\begin{equation}
  \label{eq:inteq}
  \Psi_\sigma=\mathcal{L}_\sigma\Psi_\sigma+\mathcal{K}_\sigma\Psi_\sigma+\mathcal{P}_\sigma h
\end{equation}
where $\mathcal{L}_\sigma+\mathcal{K}_\sigma$ is the discrete-Laplace transform of the integral operator $L$ given in given by \eqref{L} and $h$ is given by \eqref{h}.

The operator $\mathcal{K}_\sigma$ is a sum of operators of the form

\begin{equation}
  \label{formK}
  \Psi_\sigma(\tau)\mapsto \int_{-\pi}^\pi ds\, \int_0^1d\sigma_1 \Psi_{\sigma_1}(s)H(s,\sigma_1,\tau,\sigma)
  \end{equation}
where $H$ is an analytic function for $\Im\sigma 
>0$ multiplying characteristic functions of intervals, and for real $\sigma$ having square root branch points at $\sigma=\sigma_0$, $\sigma_1=\sigma_0$, $\sigma_1=0$ and at $\sigma=0$ and analytic at all other points. 

The operator $\mathcal{L}_\sigma$ is a sum of operators of the form

\begin{equation}
  \label{formL}
  \Psi_\sigma(\tau)\mapsto \int_{-\pi}^\pi ds\,  \Psi_{\sigma}(s)F(\tau,s,\sigma)
  \end{equation}
where $F$ is an analytic function for $\Im\sigma 
>0$ multiplying characteristic functions of intervals, and for real $\sigma$ having square root branch points at $\sigma=\sigma_0$ and at $\sigma=0$  and analytic at all other points.

The operator $\mathcal{L}_\sigma$ is compact on $L^2([-\pi,\pi],d\tau)$ and $\mathcal{K}_\sigma$ is compact on $L^2([-\pi,\pi]\times [0,1),d\tau\, d\sigma)$.

 $\sqrt{\sigma-\sigma_0}\,\mathcal{K}_\sigma$ is analytic in $\sqrt{\sigma-\sigma_0}$ and 
$\mathcal{K}_\sigma$ has analytic continuation on the Riemann surface of the square root.

(b)  $\sqrt{\sigma-\sigma_0}\,\Psi_\sigma$ is analytic in $\sqrt{\sigma-\sigma_0}$.

\end{Theorem}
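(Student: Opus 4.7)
The plan is to proceed in three stages: derive the transformed equation, identify the analytic structure of the two kernels, then deduce (b) via Fredholm theory.

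First, for part (a), I would apply Proposition~\ref{PerK} termwise to \eqref{eqcontr}. The inhomogeneous piece simply becomes $\mathcal{P}_\sigma h$. For the operator $L$, the two terms in \eqref{L} are handled separately: the first, $\tfrac1{2\pi}\int_0^t(\psi_0\ast s^{-1/2})G(s,t)\,ds$, is rewritten (by Fubini) as $\int_0^t du\,\psi_0(u)\,\tilde K(u,t)$ with $\tilde K(u,t):=\tfrac1{2\pi}\int_u^t \tfrac{G(s,t)}{\sqrt{s-u}}ds$, and Proposition~\ref{PerK} is applied to both terms with their respective kernels. The resulting double sum from \eqref{intop} is decomposed into pieces that are $\sigma_1$-independent---these will assemble into $\mathcal{L}_\sigma$ of the form \eqref{formL}---and pieces genuinely depending on both $\sigma$ and $\sigma_1$, which form $\mathcal{K}_\sigma$ of the form \eqref{formK}. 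This parallels the decomposition of \eqref{sumsq} into its four summand types (with the $k=0$ and $\sigma_1$-independent pieces feeding $\mathcal{L}_\sigma$, and the true double sums feeding $\mathcal{K}_\sigma$).

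Next I would locate the singularities. The key mechanism is that $e^{iF_0(s,t)}$ contains, by \eqref{Fzero}, the linear phase $-(U+\tfrac{E^2}{4\omega^2})(t-s)$; shifting $t\mapsto t+2k\pi$ therefore introduces a factor $e^{-i(U+E^2/(4\omega^2))2k\pi}$. The geometric series $\sum_k z^k e^{-i(U+E^2/(4\omega^2))2k\pi}$ is singular precisely at $z=e^{2\pi i\sigma_0}$, giving the branch point at $\sigma=\sigma_0$. The $1/\sqrt{t-s}$ factor in turn produces, exactly as in \S\ref{anstrsol}, a Lerch-transcendent representation whose residue, after deforming the Hankel contour past the poles at $p=\ln u$ and $p=\ln w$ (where now $u=e^{2\pi i(\sigma-\sigma_0)}$, $w=e^{2\pi i\sigma_1}$ or $e^{2\pi i(\sigma_1-\sigma_0)}$), yields square-root branch points at exactly $\sigma=0$, $\sigma=\sigma_0$ and, for $\mathcal{K}_\sigma$, $\sigma_1=0$, $\sigma_1=\sigma_0$---with everything else analytic. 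In particular $\sqrt{\sigma-\sigma_0}\,\mathcal{K}_\sigma$ extracts precisely the residue term $\tfrac{u^{-a-1}}{\sqrt{\ln u}}$, rendering it analytic in $\sqrt{\sigma-\sigma_0}$ with analytic continuation through the cut onto the Riemann surface of the square root.

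Compactness comes almost for free once the kernels are in hand: the singularity of $F$ (and of $H$) in $\tau-s$ is an integrable $(\tau-s)^{-1/2}$, so $\int\!\!\int |F|^2 d\tau ds<\infty$ and $\mathcal{L}_\sigma$ is Hilbert--Schmidt on $L^2([-\pi,\pi])$; the same argument, now also integrating in $\sigma_1$ against a kernel whose $\sigma_1$-singularities are likewise square-root-integrable, gives Hilbert--Schmidt-ness of $\mathcal{K}_\sigma$ on $L^2([-\pi,\pi]\times[0,1))$. The inhomogeneous term $\mathcal{P}_\sigma h$ is analyzed in the same way from \eqref{h} using the smoothness assumptions on $f$ (the hypotheses $f(0)=0$, compact support, and vanishing half-integrals are used to obtain the required decay of $h_\pm(0,t)$ that makes $\mathcal{P}_\sigma h$ well-defined on the closed unit disk modulo the branch point at $\sigma_0$).

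Finally, for part (b), I would apply analytic Fredholm theory to the family $\sigma\mapsto I-\mathcal{L}_\sigma-\mathcal{K}_\sigma$. Writing $w:=\sqrt{\sigma-\sigma_0}$ and changing variable, part (a) shows that $w(\mathcal{L}_\sigma+\mathcal{K}_\sigma)$ and $w\,\mathcal{P}_\sigma h$ are analytic in $w$ in a neighborhood of $w=0$; pulling the $w$ through, the resolvent $(I-\mathcal{L}_\sigma-\mathcal{K}_\sigma)^{-1}$ is meromorphic in $w$, and hence $w\Psi_\sigma=w(I-\mathcal{L}_\sigma-\mathcal{K}_\sigma)^{-1}\mathcal{P}_\sigma h$ is meromorphic in $w$. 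Since the existence of $\psi_0\in\mathcal B_\nu$ from Lemma~\ref{lemma:contraction} gives $\Psi_\sigma$ for $|z|<1$ small and the exponential bound rules out poles inside the unit disk for small $|z|$, analytic Fredholm plus the later absence-of-discrete-spectrum argument (\S\ref{RAGE}) prevents poles of the resolvent from appearing near $\sigma_0$, yielding analyticity of $\sqrt{\sigma-\sigma_0}\,\Psi_\sigma$ in $\sqrt{\sigma-\sigma_0}$ as claimed. The main obstacle is the bookkeeping in stage one---cleanly splitting the fourfold sum in \eqref{sumsq} (augmented now by the $e^{iF_0}$ factor) into the $\mathcal{L}_\sigma$ and $\mathcal{K}_\sigma$ pieces and verifying that the distributional limits as $\Im\sigma\to 0$ produce exactly the two branch points claimed and nothing else.
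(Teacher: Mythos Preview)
Your treatment of part (a) is essentially the paper's approach: apply Proposition~\ref{PerK}, split the resulting sums using the Lerch-transcendent identities of \S\ref{anstrsol}, and track how the phase $e^{-i\tilde U(t-s)}$ in $F_0$ shifts the branch point from $0$ to $\sigma_0$. The identification of the $\delta_{\sigma-\sigma_1}$ pieces as forming $\mathcal{L}_\sigma$ and the genuinely two-variable pieces as $\mathcal{K}_\sigma$ is exactly right.

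There is, however, a real gap in your argument for (b). You propose to apply analytic Fredholm theory directly to the family $\sigma\mapsto I-\mathcal{L}_\sigma-\mathcal{K}_\sigma$, but the two pieces do not act on the same space: $\mathcal{L}_\sigma$ is compact on $L^2([-\pi,\pi],d\tau)$ for each fixed $\sigma$ (it is multiplication in the $\sigma$-variable, hence \emph{not} compact on $L^2([-\pi,\pi]\times[0,1))$), whereas $\mathcal{K}_\sigma$ genuinely integrates in $\sigma_1$ and is compact only on the product space. So neither interpretation of ``the family $I-\mathcal{L}_\sigma-\mathcal{K}_\sigma$'' gives you a compact perturbation of the identity on a single Banach space to which analytic Fredholm applies. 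The paper resolves this by a two-step procedure (\S\ref{merosol}): first invert $I-\mathcal{L}_\sigma$ parametrically in $\sigma$ on $L^2([-\pi,\pi])$ via analytic Fredholm, obtaining $(I-\mathcal{L}_\sigma)^{-1}=(z-z_p)^{-m}PM(z)+B(z)$ with $M$ finite rank; then apply $(I-\mathcal{L}_\sigma)^{-1}$ to \eqref{eq:inteq}, project onto $\mathrm{Ran}(PM)$ and its complement, and solve the resulting finite-dimensional-plus-compact system on the product space. This decoupling is not optional bookkeeping---it is what makes the Fredholm alternative applicable at all here, and your proposal as written does not supply it.
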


\

\begin{proof}

\

The outline of the proof is as follows. In \S\ref{1p2p1p} we calculate the discrete-Laplace transform of the integral operator and of the inhomogeneous term. The discrete-Laplace transformed operator, $\mathcal{L}_\sigma+\mathcal{K}_\sigma$, has a ``singular'' part,  $\mathcal{L}_\sigma$, which needs to be considered in a one-dimensional space, with $\sigma$ being a parameter. $\mathcal{K}_\sigma$ is a usual Fredholm operator in two dimensions. In \S\ref{calcPh} we calculate the discrete-Laplace transform of the inhomogeneous term $h$, finishing the proof of (a). To prove (b), we deduce the existence and analytic structure of $(I-\mathcal{L}_\sigma-\mathcal{K}_\sigma)^{-1}$ using the analytic Fredholm alternative as follows. First, in \S\ref{merosol}, we first apply the analytic Fredholm alternative to invert $I-\mathcal{L}_\sigma$ (operator in one variable). We then treat the resulting equation, \eqref{eq:inteq2}, by splitting it into a system (a regular part and a ''pole'' part) which we show has a meromorphic solution. In \S\ref{Proofofb} Lemma\,\ref{ifpoles} we show that any poles can only occur for $\sigma\in\RR$, and thus the series of $\mathcal{P}_\sigma \psi_0$ converges for $|z|<1$. Finally, in \S\ref{RAGE} we show that there are no poles even if $\sigma\in\RR$, proving (b).

\subsubsection{Calculation of $\mathcal{L}_\sigma, \mathcal{K}_\sigma$ and their analytic properties}\label{1p2p1p}

By Proposition\,\ref{PerK} the operator $\mathcal{L}_\sigma+\mathcal{K}_\sigma$ is the integral operator 
\begin{equation}
\label{fromK}
f(\tau)\mapsto \int_{-\pi}^\pi dr\int_0^1d\sigma_1f(\tau) \tilde{K}(\sigma_1,\sigma,r,\tau) 
\end{equation}
where $ \tilde{K}(\sigma_1,\sigma,r,\tau)= (\mathcal{P}_{-\sigma_1}\mathcal{P}_\sigma K)(r,\tau)$ is the discrete-Laplace transform of $K$, the kernel of the integral operator $L$. 

The kernel of $L$ is a sum of three terms. We detail below the calculations for one of them, namely the most delicate. The others are similar and simpler.

Consider the first term:
$$T_1(t):=\int_0^t\,ds\, \psi_0*s^{-1/2}\ G(s,t)=\int_0^t\, ds\, \psi_0(s)\int_s^t\,du\, \frac{G(u,t)}{\sqrt{u-s}}$$
where, applying $\mathcal{P}_\sigma$ in the variable $t$, see \eqref{defP}, we obtain
\begin{multline}
\mathcal{P}_\sigma T_1(\tau)=\int_0^\tau\, ds\, \psi_0(s)\int_s^\tau\,du\, \frac{G(u,\tau)}{\sqrt{u-s}}+\sum_{k=1}^\infty e^{i\sigma k 2\pi}\int_0^{\tau+2k\pi}ds\, \psi_0(s)\int_s^{\tau+2k\pi}\,du\, \frac{G(u,\tau+2k\pi)}{\sqrt{u-s}}\\
=\int_0^\tau\, ds\, \psi_0(s)\int_s^\tau\,du\, \frac{G(u,\tau)}{\sqrt{u-s}}+\sum_{k=1}^\infty e^{i\sigma k 2\pi}\int_0^{\pi}ds\, \psi_0(s)\int_s^{\tau+2k\pi}\,du\, \frac{G(u,\tau+2k\pi)}{\sqrt{u-s}}  \\
+\sum_{k=2}^\infty e^{i\sigma k 2\pi}\sum_{j=1}^{k-1} \int_{(2j-1)\pi}^{(2j+1)\pi}ds\, \psi_0(s)\int_s^{\tau+2k\pi}\,du\, \frac{G(u,\tau+2k\pi)}{\sqrt{u-s}} \\+\sum_{k=1}^\infty e^{i\sigma k 2\pi}\int_{(2k-1)\pi}^{\tau+2k\pi}ds\, \psi_0(s)\int_s^{\tau+2k\pi}\,du\, \frac{G(u,\tau+2k\pi)}{\sqrt{u-s}}:=T_{1,1}+T_{1,2}+T_{1,3}+T_{1,4}
\end{multline}

It suffices to establish the properties listed in a) for each of the terms above.
Let us look at the most involved of the terms $T_{1,j}$ above: changing the variable of integration we have
\begin{equation}
T_{1,3}=\sum_{k=2}^\infty e^{i\sigma k 2\pi}\sum_{j=1}^{k-1} \int_{-\pi}^{\pi}ds\, \psi_0(s+2j\pi)\int_{s+2j\pi}^{\tau+2k\pi}\,du\, \frac{G(u,\tau+2k\pi)}{\sqrt{u-s-2j\pi}}
\end{equation}
so
\begin{multline}
T_{1,3}= \sum_{k=2}^\infty e^{i\sigma k 2\pi}\sum_{j=1}^{k-1} \int_{-\pi}^{\pi}ds\, \psi_0(s+2j\pi)\int_{s+2j\pi}^{(2j+1)\pi}\,du\, \frac{G(u,\tau+2k\pi)}{\sqrt{u-s-2j\pi}}\\
+\sum_{k=3}^\infty e^{i\sigma k 2\pi}\sum_{j=1}^{k-1} \int_{-\pi}^{\pi}ds\, \psi_0(s+2j\pi) \sum_{m=j+1}^{k-1}\int_{(2m-1)\pi}^{(2m+1)\pi}\,du\, \frac{G(u,\tau+2k\pi)}{\sqrt{u-s-2j\pi}}\\
+\sum_{k=2}^\infty e^{i\sigma k 2\pi}\sum_{j=1}^{k-1} \int_{-\pi}^{\pi}ds\, \psi_0(s+2j\pi)\int_{(2k-1)\pi}^{\tau+2k\pi}\,du\, \frac{G(u,\tau+2k\pi)}{\sqrt{u-s-2j\pi}}\\
\label{T131}
\end{multline}
and so
\begin{multline}
T_{1,3}=\sum_{k=2}^\infty e^{i\sigma k 2\pi}\sum_{j=1}^{k-1} \int_{-\pi}^{\pi}ds\, \psi_0(s+2j\pi)\int_{s}^{\pi}\,dv\, \frac{G(v+2j\pi,\tau+2k\pi)}{\sqrt{v-s}}\\
+\sum_{k=3}^\infty e^{i\sigma k 2\pi}\sum_{j=1}^{k-1} \int_{-\pi}^{\pi}ds\, \psi_0(s+2j\pi) \sum_{m=j+1}^{k-1}\int_{-\pi}^{\pi}\,du\, \frac{G(v+2m\pi,\tau+2k\pi)}{\sqrt{u-s+2(m-j)\pi}}\\
\sum_{k=2}^\infty e^{i\sigma k 2\pi}\sum_{j=1}^{k-1} \int_{-\pi}^{\pi}ds\, \psi_0(s+2j\pi)\int_{-\pi}^{\tau}\,dv\, \frac{G(v+2k\pi,\tau+2k\pi)}{\sqrt{u-s+2(k-j)\pi}}\\
\end{multline}
which we split into the terms
\begin{equation}
T_{1,3}=: \int_{-\pi}^\pi ds\int_0^1d\sigma_1\mathcal{P}_{\sigma_1}(s)\left({\rm Sum}_1[G]+{\rm Sum}_2[G]+{\rm Sum}_3[G]\right)
\label{T13}
\end{equation}
where in the last step we used \eqref{invP}.

Now note that
\begin{equation}
\label{defG12}
G(s,t)=\frac{i\partial_sF_0(s,t)\, e^{iF_0(s,t)}}{\sqrt{t-s}}+\frac12 \frac{ e^{iF_0(s,t)}-1}{(t-s)^{3/2}}:=\frac{g_1(s,t)}{\sqrt{t-s}}+\frac{g_2(s,t)}{(t-s)^{3/2}}:=G_1+G_2
\end{equation}
and that
\begin{equation}
\label{perF_0}
F_0(s+2m\pi,\tau+2k\pi)=-(k-m)2\pi \tilde{U}+F_0(s,\tau),\ \ \ \ \ \  \ \partial_sF_0(s+2m\pi,\tau+2k\pi)=\partial_sF_0(s,\tau)
\end{equation}
where $ \tilde{U}=U+\frac{E^2}{4\omega^2}$, and $F_0,\ G$ are defined in \eqref{Fzero}, respectively \eqref{Gdef}.
Note that $\tilde U$ is the potential $U$ plus the ponderomotive energy \cite{Wo35}.

The sums in \eqref{T13} are split according to Sum$_j[G]$=Sum$_j[G_1]+$Sum$_j[G_2]$.

\

Calculation of Sum$_1[G_1]$ contains the main ingredients needed for the calculation of the others, so we start with this term, providing many details. From \eqref{defG12} and \eqref{perF_0} we see that $g_1(s+2j\pi,\tau+2k\pi)=e^{-i\tilde{U}(k-j)2\pi } g_1(s,\tau)$ thus we have
$${\rm Sum}_1[G_1]=\sum_{k=2}^\infty e^{i\sigma k 2\pi}\sum_{j=1}^{k-1} \int_{-\pi}^{\pi}ds\, \psi_0(s+2j\pi)\int_{s}^{\pi}\,dv\, \frac{G_1(v+2j\pi,\tau+2k\pi)}{\sqrt{v-s}}$$
\begin{multline}
=\int_{-\pi}^{\pi}ds\, \int_0^1d\sigma_1 \mathcal{P}_{\sigma_1}\psi_0(s)\,g_1(s,\tau) \sum_{k=2}^\infty e^{i\sigma k 2\pi}\sum_{j=1}^{k-1} e^{-i\sigma_1j2\pi}\int_{s}^{\pi}\,du\, \frac{e^{-i\tilde{U}(k-j)2\pi } }{\sqrt{u-s}} \frac1{\sqrt{\tau-u+(k-j)2\pi}}
\end{multline}
and the double sum above equals, after changing the index of summation from $j$ to $\ell=k-j$,
\begin{multline}
\sum_{k=2}^\infty e^{i\sigma k 2\pi}\sum_{\ell=1}^{k-1} e^{-i\sigma_1(k-\ell)2\pi}\int_{s}^{\pi}\,du\, \frac{e^{-i\tilde{U}\ell2\pi } }{\sqrt{u-s}} \frac1{\sqrt{\tau-u+\ell2\pi}}=\int_{s}^{\pi}\frac{du}{\sqrt{u-s}}  \sum_{k=2}^\infty e^{i(\sigma-\sigma_1) k 2\pi}\sum_{\ell=1}^{k-1}  \frac{e^{i(\sigma_1-\tilde{U})\ell2\pi } }{\sqrt{\tau-u+\ell2\pi}}\\
=\int_{s}^{\pi}\frac{du}{\sqrt{u-s}}  \sum_{k=2}^\infty e^{i(\sigma-\sigma_1) k 2\pi} \frac1{\sqrt{2\pi}}\left[ e^{i(\sigma_1-\tilde{U})2\pi } \Phi\left(e^{i(\sigma_1-\tilde{U})2\pi },\frac12,\frac{\tau-u}{2\pi}+1\right) \right.\\
\left.  - e^{i(\sigma_1-\tilde{U})k2\pi }\Phi\left(e^{i(\sigma_1-\tilde{U})2\pi },\frac12,\frac{\tau-u}{2\pi}+k\right)  \right]  \label{sumG1}
\end{multline}
where we used the  formula \eqref{sumPhi}.

The first sum in \eqref{sumG1} must be understood in the sense of distributions, and the second one is convergent. 
Indeed, for the first sum we have
\begin{multline}
\int_0^1d\sigma_1 \mathcal{P}_{\sigma_1}\psi_0(s)\,\int_{s}^{\pi}\,du\,\frac{g_1(s,\tau)}{\sqrt{u-s}}  \sum_{k=2}^\infty e^{i(\sigma-\sigma_1) k 2\pi} \frac1{\sqrt{2\pi}} e^{i(\sigma_1-\tilde{U})2\pi } \Phi\left(e^{i(\sigma_1-\tilde{U})2\pi },\frac12,\frac{\tau-u}{2\pi}+1\right)\\
=\mathcal{P}_{\sigma}\psi_0(s)\,\int_{s}^{\pi}\,du\,\frac{g_1(s,\tau)}{\sqrt{u-s}}  \frac1{\sqrt{2\pi}} e^{i(\sigma-\tilde{U})2\pi } \Phi\left(e^{i(\sigma-\tilde{U})2\pi },\frac12,\frac{\tau-u}{2\pi}+1\right):=\mathcal{P}_{\sigma}\psi_0(s)\,K_1(s,\tau)
\end{multline}
yielding a term in $\mathcal{L}_\sigma$, of the form \eqref{formL}. Since $K_1$ is continuous, the operator with this kernel is compact on $L^2([-\pi,\pi],ds)$.

To see that the second sum in \eqref{sumG1} is convergent, we use the integral representation \eqref{Phiint} of the Lerch $\Phi$ transcendendent; we have

\begin{multline}
\label{Ksigma1}
\int_{s}^{\pi}\frac{du}{\sqrt{u-s}}  \sum_{k=2}^\infty e^{i(\sigma-\sigma_1) k 2\pi} \frac1{\sqrt{2\pi}}
 e^{i(\sigma_1-\tilde{U})k2\pi }\Phi\left(e^{i(\sigma_1-\tilde{U})2\pi },\frac12,\frac{\tau-u}{2\pi}+k\right) \\
 =\int_{s}^{\pi}\frac{du}{\sqrt{u-s}}  \sum_{k=2}^\infty e^{i(\sigma-\sigma_1) k 2\pi} \frac1{\sqrt{2\pi}}
 e^{i(\sigma_1-\tilde{U})k2\pi } \int_0^\infty\frac{dp}{\sqrt{p}}\frac{e^{-p(\frac{\tau-u}{2\pi}+k)}}{1-e^{i(\sigma_1-\tilde{U})2\pi } e^{-p}}
\end{multline}
which is convergent for $\sigma_1-\tilde{U}\ne 0$, yielding a term in $\mathcal{K}_\sigma$, of the form \eqref{formK}. 

{\em Analytic structure.} For $\sigma_1-\tilde{U}= 0$ (meaning $\sigma_1=\sigma_0$) we proceed as in \S\ref{anstrsol}, only here the square root branch point will be at $\sigma_1=\sigma_0$ (instead of $\sigma_1=0$): we deform the path of integration and collect the residues.
The integral kernel \eqref{Ksigma1} has the form (analogue to \eqref{anstr})
\begin{equation}
\label{anstr2}
A_1(\sigma_1)+\frac{1}{\sqrt{\sigma_1-\sigma_0}}A_2(\sigma_1),\ \ \ \ \text{with\ }A_{1},A_2\ \text{analytic in }\sqrt{\sigma_1}
\end{equation}

The operator with the integral kernel \eqref{Ksigma1} is compact.

The calculation of Sum$_2[G]$ in \eqref{T13} is the most labor intensive, and we outline the main steps here (the details are as for the previous term). We change the order of summation: $\sum_{j=1}^{k-1}\sum_{m=j+1}^{k-1}=\sum_{m=1}^{k-2}\sum_{\ell=1}^{k-m-1}$
and using \eqref{perF_0} we obtain

\begin{multline}
{\rm Sum}_2[G_1]=\int_{-\pi}^\pi du\, g_1(u,\tau)\sum_{k=3}^\infty e^{i(\sigma-\sigma_1)2k\pi}\sum_{m=1}^{k-2}\frac{e^{i\sigma_12m\pi}}{\sqrt{u-s+2m\pi}} \sum_{\ell=1}^{k-m-1}\frac{e^{i(\sigma_1-\tilde{U})2\ell\pi}}{\sqrt{\tau-u+2\pi\ell}} \\
=\frac1{2\pi}\int_{-\pi}^\pi du\, g_1(u,\tau)\sum_{k=3}^\infty e^{i(\sigma-\sigma_1)2k\pi}\sum_{m=1}^{k-2}\frac{e^{i\sigma_12m\pi}}{\sqrt{\frac{u-s}{2\pi}+m}} \left[ e^{i(\sigma_1-\tilde{U})2\pi}\Phi\left(e^{i(\sigma_1-\tilde{U})2\pi},\frac12,\frac{\tau-u}{2\pi}+1\right)  \right.\\
\left. - e^{i(\sigma_1-\tilde{U})2(k-m)\pi}\Phi\left(e^{i(\sigma_1-\tilde{U})2\pi},\frac12,\frac{\tau-u}{2\pi}+k-m\right)  \right]:={\rm Term}_1+{\rm Term}_2
\end{multline}
and furthermore
\begin{multline}
{\rm Term}_1=\frac1{2\pi}\int_{-\pi}^\pi du\, g_1(u,\tau)\sum_{k=3}^\infty e^{i(\sigma-\sigma_1)2k\pi}   \\
\times  \left[ e^{i\sigma_12\pi}\Phi\left(e^{i\sigma_12\pi},\frac12,\frac{u-s}{2\pi}+1\right) - e^{i\sigma_1(k-1)2\pi}\Phi\left(e^{i\sigma_12\pi},\frac12,\frac{u-s}{2\pi}+k-1\right)   \right] \\
\times
 e^{i(\sigma_1-\tilde{U})2\pi}\Phi\left(e^{i(\sigma_1-\tilde{U})2\pi},\frac12,\frac{\tau-u}{2\pi}+1\right)\\
 =\frac1{2\pi}\int_{-\pi}^\pi du\, g_1(u,\tau) \delta_{\sigma-\sigma_1}   e^{i\sigma_12\pi}\Phi\left(e^{i\sigma_12\pi},\frac12,\frac{u-s}{2\pi}+1\right)  \  e^{i(\sigma_1-\tilde{U})2\pi}\Phi\left(e^{i(\sigma_1-\tilde{U})2\pi},\frac12,\frac{\tau-u}{2\pi}+1\right)\\
 -\frac1{2\pi}\int_{-\pi}^\pi du\, g_1(u,\tau)\sum_{k=3}^\infty e^{i(\sigma-\sigma_1)2k\pi}   \\
\times  \left[ e^{i\sigma_12\pi}\Phi\left(e^{i\sigma_12\pi},\frac12,\frac{u-s}{2\pi}+1\right) - e^{i\sigma_1(k-1)2\pi}\Phi\left(e^{i\sigma_12\pi},\frac12,\frac{u-s}{2\pi}+k-1\right)   \right] \\
\times
 e^{i(\sigma_1-\tilde{U})2\pi} \int_0^\infty \frac{dp}{\sqrt{p}} \frac{ e^{-\left(\frac{u-s}{2\pi}+k-1\right)p}}{1- e^{i\sigma_12\pi}e^{-p}}\\
 = \delta_{\sigma-\sigma_1} \frac1{2\pi}\int_{-\pi}^\pi du\, g_1(u,\tau)  e^{i\sigma 2\pi}\Phi\left(e^{i\sigma2\pi},\frac12,\frac{u-s}{2\pi}+1\right)  \  e^{i(\sigma-\tilde{U})2\pi}\Phi\left(e^{i(\sigma-\tilde{U})2\pi},\frac12,\frac{\tau-u}{2\pi}+1\right)+\text{analytic}
\end{multline}
and the first term above produces a term of the form \eqref{formL}, while the second term has the form \eqref{formK}.

Similarly, for $G_2$ we obtain the following term of the form \eqref{formL}:
\begin{multline}
\int_{-\pi}^\pi dr [\mathcal{P}_{\sigma}\psi_0](s)\, \frac1{2\pi}\int_{-\pi}^\pi du\,e^{i\sigma2\pi} \Phi\left(e^{i \sigma2\pi},\frac12,\frac{u-s}{2\pi}+1\right)\\
\times \left[ e^{iF_0(u,\tau)}e^{i(\sigma-\tilde{U})2\pi} \Phi\left(e^{i(\sigma-\tilde{U})2\pi},\frac32,\frac{\tau-u}{2\pi}+1\right)- e^{i\sigma2\pi} \Phi\left(e^{i\sigma2\pi},\frac32,\frac{\tau-u}{2\pi}+1\right)     \right]
\end{multline}
and a regular part, of the form \eqref{formK}.

The other terms are evaluated similarly and are simpler.

\subsubsection{Calculation of \,$\mathcal{P}_\sigma h$.}\label{calcPh}

We note the following identities:
$$\frac{e^{-iB/(n+a)}}{(n+a)^{1/2}}=\int_0^\infty\, dq\, e^{-nq}F_1(q),\ \ \frac{e^{-iB/(n+a)}}{(n+a)^{3/2}}=\int_0^\infty\, dq\, e^{-nq}F_2(q)$$
with 
\begin{equation}
\label{F12}
F_1(q)=\frac{e^{-aq} \cosh(2\sqrt{-iBq})}{\sqrt{\pi q}},\ \ F_2(q)=\frac{\sqrt{i}e^{-aq} \sinh(2\sqrt{-iBq})}{\sqrt{\pi B}}
\end{equation}
We saw that the kernels of $\mathcal{L}_\sigma,\mathcal{K}_\sigma $ have integral expressions. So will  also $\mathcal{P}_\sigma h$, except \eqref{notuw} is replaced by
$$u=e^{2\pi i (\sigma-\tilde{U})},\ \ w=e^{2\pi i (\sigma_1-\tilde{U})},\ \ \ \tilde{U}=U+\frac{E^2}{4\omega^2}$$
and instead of $\Phi(z,a,b)$ we have a sum of analytic functions multiplying $\int_0^\infty F_{1,2}(p)e^{-ap}(1-z e^{-p})^{-1}dp$, with the functions $F_1$ and $F_2$ given by \eqref{F12}.

Indeed, let us {\em calculate for the discrete-Laplace transform of} $h_+(0,t)$: with $\omega=1$ and the notations $\tilde{U}=U+E^2/4,\ A=E^2/8, \tilde{A}=E$ we have
\begin{multline}
\label{eq31}
\sqrt{2\pi i}\mathcal{P}_\sigma h_+(\tau)=\frac 1{\sqrt{\tau}}e^{-i\tilde{U}\tau+iA\sin 2\tau}\int_0^\infty dyf(y)e^{i\frac{(y+\tilde{A}(1-\cos\tau))^2}{2\tau}}\\
+\sum_{k=1}^\infty e^{i\sigma 2k\pi} e^{-i\tilde{U}(\tau+2k\pi)+iA\sin 2\tau}\int_0^\infty dyf(y)e^{i\frac{(y+\tilde{A}(1-\cos\tau))^2}{4\pi(\frac{\tau}{2\pi}+k)}}\\
=e^{-i\tilde{U}\tau+iA\sin 2\tau}\left[ \frac 1{\sqrt{\tau}}\int_0^\infty dyf(y)e^{i\frac{(y+\tilde{A}(1-\cos\tau))^2}{2\tau}}
+\sum_{k=1}^\infty e^{i(\sigma-\tilde{U}) 2k\pi} \int_0^\infty dyf(y)\int_0^\infty dq e^{-kq}F_1(q) \right]\\
=e^{-i\tilde{U}\tau+iA\sin 2\tau}\left[ \frac 1{\sqrt{\tau}}\int_0^\infty dyf(y)e^{i\frac{(y+\tilde{A}(1-\cos\tau))^2}{2\tau}}
+  \int_0^\infty dyf(y)\int_0^\infty dq F_1(q)\frac{e^{-q+i(\sigma-\tilde{U}) 2\pi}}{1-e^{-q+i(\sigma-\tilde{U}) 2\pi}} \right]
\end{multline}

Note that under our assumptions on $f$, the term
$$\frac 1{\sqrt{\tau}}\int_0^\infty dyf(y)e^{i\frac{(y+c)^2}{2\tau}},\ \ \ c=\tilde{A}(1-\cos\tau)$$
in the last line of \eqref{eq31} is in $L^2([-\pi,\pi],d\tau)$. To see this we integrate by parts, then change the variable of integration:
$$\frac 1{\sqrt{\tau}}\int_0^\infty dyf(y)e^{i\frac{(y+c)^2}{2\tau}}= \frac 1{\sqrt{\tau}}\int_0^\infty dyf'(y)\int_y^\infty du \,e^{i\frac{(u+c)^2}{2\tau}}=\frac1{\sqrt{2}} \int_0^\infty dyf'(y)\int_{\frac{(y+c)^2}{2\tau}}^\infty du \,\frac{e^{is}}{\sqrt{s}} $$
which is in $L^2$ since $f'\in L^2(\RR)$ and the integral $\int_{\nu}^\infty du \,\frac{e^{is}}{\sqrt{s}} $ is uniformly bounded (easily seen after an integration by parts).

{\em The discrete-Laplace transform of the term} $h_3(t):=\int_0^t(h_-*s^{-1/2})G(s,t)$ in $h$ yields singularities of the type studied in \S\ref{anstrsol}. Indeed 
\begin{equation}
\label{h3}
h_3(t)=\int_0^tdu\, h_-(u)\int_u^tds\frac1{\sqrt{s-u}}\left[ \frac 12 \frac{e^{iF_0(s,t)}-1}{(t-s)^{3/2}}+i\frac{\partial_s F_0. e^{iF_0(s,t)}}{\sqrt{t-s}}\right]
\end{equation}
which has a singularity of the type $\frac1{\sqrt{t-s}}$ which is preserved upon discrete-Laplace transform due to the special form of $F_0$, as seen in \S\ref{1p2p1p}. Indeed, by Proposition\,\ref{PerK} it suffices to discrete-Laplace transform the integral kernel in \eqref{h3}, which leads to a sum of terms of the form 
$$\frac{e^{iF_0(r+2j\pi,\tau+2k\pi)}-1}{(\tau+2k\pi-r-2j\pi)^{3/2}} =\frac{e^{-\tilde{U}(\tau+2k\pi-r-2j\pi)} e^{iF_0(r,\tau)}-1}{(\tau+2k\pi-r-2j\pi)^{3/2}}$$
which again, has a square root singularity.

 In the same way as in \S\ref{anstrsol} and \S\ref{1p2p1p} it follows that $\sqrt{\sigma-\sigma_0}\,\mathcal{P}_\sigma h$ is analytic in  $\sqrt{\sigma-\sigma_0}$ and in $\sqrt{\sigma}$.

\subsubsection{Existence of meromorphic solutions}\label{merosol}
Existence of solutions of \eqref{eq:inteq} for large $\Im\sigma>0$ follows from the existence of $\psi_0$, proved in Lemma\,\ref{lemma:contraction} and Propostion\,\ref{PerK}.

We showed in \S\,\ref{1p2p1p} that the operator  $\sqrt{\sigma-\sigma_0}\mathcal{L}_\sigma$ is analytic in $\sqrt{\sigma-\sigma_0}, \sqrt{\sigma}$ for $\sigma\ne 0$ and it is compact on $L^2[-\pi,\pi]$. Denote $z=e^{2\pi i \sigma}$; then $\mathcal{L}_\sigma$, $\mathcal{K}_\sigma$ are analytic in $z$, except for $z=z_0=e^{2\pi i \sigma_0}$, where there is a square root branch point. For $z\ne z_0$, by analytic Fredholm alternative $I-\mathcal{L}_\sigma$ has an inverse merormorphic  in $z$ and in a punctured neighborhood of each of its poles, say $z_p$, it has the form
$$(I-\mathcal{L}_\sigma)^{-1}=\frac1{(z-z_p)^m}M(z)+B(z)$$
where M is a finite rank operator, depending polynomially on $z$, and $B$ is analytic at $z_p$. Then $(I-\mathcal{L}_\sigma)^{-1}=\frac1{(z-z_p)^m}PM(z)+B(z)$ where $P$ is the orthogonal projection on ${\rm Ran}(M)$.

Applying this in \eqref{eq:inteq} we obtain
\begin{equation}
  \label{eq:inteq2}
  f=\frac1{(z-z_p)^m}PM \mathcal{K}_\sigma f+B\mathcal{K}_\sigma f +\left[\frac1{(z-z_p)^m}PM+B\right] h_\sigma,\ \ \ \text{where }f=\Psi_\sigma,\ h_\sigma=\mathcal{P}_\sigma h
\end{equation}
Denote by $P_\perp$ the orthogonal projection on Ran$(PM)$ . Then $f=Pf+P_\perp f$. Applying $P_\perp$ to \eqref{eq:inteq2} we obtain
$$P_\perp f=P_\perp B\mathcal{K}_\sigma( Pf+P_\perp f) +P_\perp B h$$
Now, $ \mathcal{K}_\sigma$ is compact on $\mathcal{H}=L^2([-\pi,\pi]\times[0,1),ds\,d\sigma_1)$. Then $P_\perp B\mathcal{K}_\sigma P_\perp$ is compact on $P_\perp\mathcal{H}$ has it has a meromorphic inverse, and there is $P_\perp f:=u$:
\begin{equation}
\label{defu}
u:=P_\perp f=(I_\perp - P_\perp B\mathcal{K}_\sigma P_\perp)^{-1}(P_\perp B\mathcal{K}_\sigma Pf+P_\perp B h_\sigma):= APf+\tilde{h}
\end{equation}

Now applying $P$ to \eqref{eq:inteq2} we obtain
\begin{equation}
\label{eqPf}
P f=\frac1{(z-z_p)^m}PM \mathcal{K}_\sigma (Pf+u)+PB\mathcal{K}_\sigma (Pf+u) +\left[\frac1{(z-z_p)^m}PM+PB\right] h_\sigma
\end{equation}
where, introducing $u$ from \eqref{defu} we obtain a finite dimensional equation for $Pf$, with meromorphic coefficients, which we know it has solutions. Therefore the solution $Pf$ of \eqref{eqPf} exists, and it is meromorphic in $z$.
We established that $(I-\mathcal{L}_\sigma-\mathcal{K}_\sigma)^{-1}$ is meromorphic in a neighborhood of the closed unit disk except for two square root branch points at $0$ and $\sigma_0$ and, in a neighborhood of any of its finitely many poles $z_p\notin\{0,\sigma_0\}$, it has the form
\begin{equation}
\label{eqinv}
(I-\mathcal{L}_\sigma-\mathcal{K}_\sigma)^{-1}=\frac1{(z-z_p)^{m_1}}M_p+B_p
\end{equation}
with $M_p$ of finite rank, polynomial in $z$ and $B_p$ analytic. Using analyticity in $\sqrt{\sigma}$ ($\sqrt{\sigma-\sigma_0}$ resp.), if a pole coincides with one of these branch points, then $m_1$ is simply replaced by $m+1/2$ and $M_p$ becomes analytic in  $\sqrt{\sigma}$ ($\sqrt{\sigma-\sigma_0}$ resp.).

\subsubsection{Poles imply nontrivial solutions of the quasienergy equation.} \label{Proofofb}
Lemma\,\ref{ifpoles} shows that if poles exist in \eqref{eqinv}, then there is a solution of  the Schr\"odinger equation \eqref{schrodinger2} with a special asymptotic behavior \eqref{asympsipol} in $t$.

\begin{Lemma}\label{ifpoles}
Let $z=e^{2\pi i\sigma}$, with $\Im\sigma>0$ (so that $|z|<1$).

Assume that $(I-\mathcal{L}_\sigma-\mathcal{K}_\sigma)^{-1}$ has a pole at $\sigma=\sigma_p$, that is, $M_p\ne 0$ in \eqref{eqinv}.
Then, for a dense set of initial conditions the Schr\"odinger equation \eqref{schrodinger2} has a solution of the form
\begin{equation}
\label{asympsipol}
\psi(x,t)=  t^{m-1}   e^{-i t\sigma_p} a(x,t)\left[ 1+O(1/t)\right]+O(1/\sqrt{t})
\end{equation}
with $a(x,\cdot)$ is $2\pi$-periodic, $a(\cdot,t)\in L^2(\RR)$ and also $\sigma_p\in\RR$.
\end{Lemma}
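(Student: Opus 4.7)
\medskip

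\noindent\textbf{Plan for Lemma \ref{ifpoles}.} The plan is to turn the assumed pole of $(I-\mathcal L_\sigma-\mathcal K_\sigma)^{-1}$ into an explicit pointwise asymptotic expression for $\psi_0(t)$ via the Cauchy inversion formula \eqref{invPz}, then propagate this asymptotic to $\psi(x,t)$ through the integral representations \eqref{psiminus}--\eqref{psiplus}, and finally rule out $\Im\sigma_p\neq 0$ by appealing to unitarity.

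\medskip

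\noindent\emph{Step 1 (activating the pole and extracting the leading term of $\psi_0$).} By \eqref{eqinv} we have, in a neighborhood of $z_p$,
\[
\Psi_\zeta(\tau)=\frac{M_p\,\mathcal{P}_\zeta h(\tau)}{(\zeta-z_p)^{m}}+R(\zeta,\tau),
\]
with $R$ holomorphic there. Since $M_p\neq 0$ as an operator and $h$ depends linearly on $f$ through \eqref{h}, the image of the affine map $f\mapsto M_p\mathcal P_{z_p}h$ is dense; so in a dense subset of $\mathcal D$ we may choose $f$ such that $M_p\mathcal P_{z_p}h(\tau)\not\equiv 0$ in $\tau$. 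For such $f$, applying \eqref{invPz} along a small circle $C$ and deforming past $z_p$ (the other singularities in a neighborhood of $\overline{\mathbb D}$ being, by Theorem \ref{sweatandblood}, the two square-root branch points and at most finitely many isolated poles) picks up a residue whose expansion in $k$ reads
\[
\psi_0(\tau+2k\pi)=\frac{(-1)^{m-1}}{(m-1)!}\binom{k+m-1}{m-1}z_p^{-k-m}\,M_p\mathcal P_{z_p}h(\tau)+O(k^{m-2}|z_p|^{-k})+O(k^{-1/2}),
\]
where the $O(k^{-1/2})$ collects the branch-point contributions from $\zeta=1$ and $\zeta=e^{2\pi i\sigma_0}$. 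Using $z_p^{-k}=e^{-i(\tau+2k\pi)\sigma_p}e^{i\tau\sigma_p}$ and setting $t=\tau+2k\pi$, this can be rewritten in the time variable as
\[
\psi_0(t)=t^{m-1}e^{-it\sigma_p}a_0(t)\,(1+O(1/t))+O(t^{-1/2}),
\]
with $a_0$ depending on $t$ only through $\tau\equiv t\pmod{2\pi}$, hence $2\pi$-periodic. Formula \eqref{fomulapsix0} gives the analogous expansion for $\psi_{x,0}(t)$.

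\medskip

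\noindent\emph{Step 2 (lifting to $\psi(x,t)$).} Substituting these expansions into \eqref{psiminus}, \eqref{psiplus}, the integrals in $s$ split into a leading-order piece obtained by replacing $s^{m-1}$ by $t^{m-1}$ plus an $O(t^{m-2})$ correction, and the remaining periodic factor $e^{-is\sigma_p}a_0(s)$ can be evaluated against the explicit oscillatory kernels of \eqref{psiminus}--\eqref{psiplus} using that these kernels are (up to an analytic phase) periodic in $s$ with period $2\pi/\omega=2\pi$. The outcome is
\[
\psi(x,t)=t^{m-1}e^{-it\sigma_p}a(x,t)\,(1+O(1/t))+O(t^{-1/2}),
\]
where $a(x,\cdot)$ is $2\pi$-periodic and $a(\cdot,t)\in L^2(\mathbb R)$: the decay in $|x|$ is provided by the oscillatory Gaussian factors $e^{ix^2/(2(t-s))}$ for $x<0$ and by the analogous exponentially decaying factor built from $F(x,s,t)$ for $x>0$ when the quasi-energy exceeds the work function, exactly as in the Faisal-type ansatz.

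\medskip

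\noindent\emph{Step 3 (forcing $\sigma_p\in\mathbb R$; main obstacle).} This is the most delicate part of the argument. Writing $\sigma_p=a_p+ib_p$, Step 2 gives $\|\psi(\cdot,t)\|_2\sim t^{m-1}e^{tb_p}\|a(\cdot,t)\|_2$; by Step 1 the function $\|a(\cdot,t)\|_2$ is periodic in $t$ and nonzero for at least one $t$, hence uniformly bounded below along a subsequence. Theorem \ref{Thzero}(b) asserts that $\|\psi(\cdot,t)\|_2=\|f\|_2$ is constant in $t$. Hence $t^{m-1}e^{tb_p}$ must remain bounded above along this subsequence, which forces $b_p\le 0$ (and moreover $b_p=0$ unless the leading coefficient vanishes, i.e. $m=1$). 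For the reverse inequality $b_p\ge 0$, observe that $b_p<0$ would mean $|z_p|>1$, in which case the contour in \eqref{invPz} can be pushed to a radius $r<|z_p|$ without crossing $z_p$, making $M_p\mathcal P_{z_p}h$ invisible in the asymptotic---contradicting the nontriviality established in Step 1. Thus $b_p=0$, i.e.\ $\sigma_p\in\mathbb R$, completing the proof. The main technical difficulty is justifying the contour deformation through the square-root branch points at $1$ and $e^{2\pi i\sigma_0}$ when $z_p$ happens to coincide with one of them, which is handled by the refined analytic structure $\sqrt{\sigma-\sigma_0}\,\Psi_\sigma$ analytic in $\sqrt{\sigma-\sigma_0}$ established in Theorem \ref{sweatandblood}.
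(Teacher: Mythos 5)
Your proposal follows essentially the same route as the paper: invert the discrete-Laplace transform by Cauchy's formula, deform the contour past $z_p$ to collect the residue (giving the $t^{m-1}e^{-it\sigma_p}\times$periodic leading term) while the Hankel contours around the branch points at $\sigma=0$ and $\sigma=\sigma_0$ contribute the $O(t^{-1/2})$ remainder, then lift to $\psi(x,t)$ through \eqref{psiminus}--\eqref{psiplus}. The only divergence is in establishing $\sigma_p\in\RR$, where you invoke norm conservation directly while the paper identifies $\sigma_p$ as an eigenvalue of the symmetric quasienergy operator $-i\partial_t+H$ acting on the asymptotically periodic $\phi=e^{i\sigma_p t}\psi$; these are two faces of the same self-adjointness fact (and the paper itself uses the unitarity argument in the adjacent Lemma~\ref{Hamsol}), so the difference is cosmetic, though note that your subsidiary argument for $b_p\geqslant0$ is not needed since the hypothesis already places the pole in $\Im\sigma\geqslant0$.
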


{\em Proof.}

Substituting \eqref{eqinv} in \eqref{eq:inteq} we obtain
 \begin{equation}
\label{psisigma}
 \Psi_\sigma(\tau)=\frac1{(z-z_p)^{m_p}}M_p\,\mathcal{P}_\sigma h + B_p\mathcal{P}_\sigma h
 \end{equation}
with
$$(I-\mathcal{L}_{\sigma_p}-\mathcal{K}_{\sigma_p})\,\mathcal{P}_{\sigma_p} h =0 $$

Let us simply denote $m_p=m$, $M_p=M,\ B_p=B$. 

We construct a $\psi_0$ so that $\Psi_\sigma$ of \eqref{psisigma} is its discrete-Laplace transform using  \eqref{defPz}, \eqref{invPz}.

Denoting $ \Psi_\sigma(\tau)=  F(z,\tau)$ we have 
$$\mathcal{P}_z\psi_0(\tau)=F(z,\tau)=\sum_{k=0}^\infty z^k\psi_0(\tau+2k\pi)$$
and the series converges in a disk $|z|<\delta$ by Lemma 6.
By \eqref{psisigma} $F$ has the form $F(z,\tau)=\frac{P(z;\tau)}{(z-z_p)^m}+g(z,\tau)$ where $P$ is a polynomial in $z$ of degree at most $m-1$ and $g$ is analytic at $z_p$. Then
$$\psi_0(\tau+2k\pi)=\frac 1{2\pi i}\oint_C \frac {F(z,\tau)}{z^{k+1}}\, dz$$
where $C$ is a closed path  containing $0$ inside the disk of radius $\delta$ and the pole $z_p$ is outside $C$. To determine large $k$ behavior we deform $C$ past the pole $z_p$ and leaving the path hanging around cuts at the branch points. In the process  we collect the residue at the pole, and then using the analytic properties of the operator, we push to two Hankel contours around the branch points $\sigma=0$ and $\sigma=\sigma_0$ linked by arccircles of radius $1+\epsilon$. 

{\em The contributions of the Hankel contours to the large $k$ behavior is $O(k^{-1/2})$.} Indeed, near $\sigma_0$, by Theorem\,\ref{sweatandblood} we have
\begin{equation}
\label{stareq20}
\mathcal{P}_\sigma\psi_0=\frac{a_{-1}}{\sqrt{\sigma-\sigma_0}}+a_1\sqrt{\sigma-\sigma_0}+f_1(\sigma)
\end{equation}
where $f_1$ is differentiable in $\sigma$. 
Integration by parts shows that
\begin{equation}\label{stareq2}
\int_0^1d\sigma\, e^{-i\sigma 2 k \pi}f_1(\sigma)=O(1/k)
\end{equation}
Hence 
\begin{equation}\label{largek}
\int_0^1 d\sigma\, e^{-i\sigma 2 k \pi} \mathcal{P}_\sigma\psi_0=-\frac{e^{3i\pi/4}\sqrt{2}a_{-1}}{2\sqrt{k}}\, {\rm erf}\left(\sqrt{2k\pi}e^{i\pi/4}\right)+O(1/k)\sim  -\frac{e^{3i\pi/4}a_{-1}}{\sqrt{2}} k^{-1/2} +O(1/k)
\end{equation}
where $O(1/k)$ comes from \eqref{stareq2} and from $\int_0^1 d\sigma\, e^{-i\sigma 2 k \pi}a_1\sqrt{\sigma-\sigma_0}$.

The contribution from the square root branch point at $\sigma=0$ is similar.

 {\em The contribution of the residues at the poles}, each of which is, to leading order, 
  \begin{equation}
    \label{largek}
\frac1{2\pi i} \oint_{|z-z_p|<\epsilon}dz\, \frac{P(z;\tau)}{z^{k+1}(z-z_p)^m} \sim (-1)^{m-1}k^{m-1}z_p^{-k-m}P(z_p,\tau)(1+O(1/k))
\end{equation}

Consider initial conditions $\psi(x,0)$ so that $\mathcal{P}_zh$ does not belong to $\cup_{j\ne p}Ran(M_j)$ where $j$ indexes the finitely many possible poles, and so that $\mathcal{P}_\sigma h\not\in$Ker$(M_p)$. Since $M_j$ are finite rank, this is a dense set of initial conditions.. For such initial conditions the leading order behavior of $\psi_0(\tau+2k\pi)$ is, with the notation $z_p=e^{2\pi i \sigma_p}$,
\begin{equation}
  \label{eq:asc}
 \psi_0(\tau+2\pi k) \sim k^{m-1}  e^{2\pi i \sigma_p(-k-m)} b(\tau)
\end{equation}

For $t>0$ let $\tau\in[-\pi,\pi)$ be the unique number so that $t=\tau+2k\pi$ with $k$ a positive integer. Then for large $t$

\begin{equation}
\label{psi0det}
\psi_0(t)\sim  t^{m-1}   e^{-i t\sigma_p} b_1(\tau)\left[ 1+O(1/t)\right]+O(1/\sqrt{t})
\end{equation}

For the assumed initial conditions as discussed the discrete-Laplace transform of $h$ exists up to the unit circle, and since the asymptotic form \eqref{psi0det} is still valid for $\sigma_1$ real.

By Lemma\,\ref{lemma:contraction} $\psi_0$ is continuously differentiable therefore $ b_1(\tau)$ extends to $[-\pi,\pi]$ periodically, hence to $a(t)$, a periodic functions of period $2\pi$.

Then $\psi(x,t)$, obtained by introducing $\psi_0$ in formulas \eqref{fomulapsix0}, \eqref{psiminus}, \eqref{psiplus} has a similar asymptotic form \eqref{psi0det}. To see this we note that convolutions with $t^{-1/2}$ preserves the asymptotic behavior \eqref{psi0det}  (up to multiplicative constants) since, expanding each $a_j(t):=a(t)$ in Fourier series $a(t)=\sum_kc_ke^{ikt}$ which converges uniformly since $a(t)$ is continuously differentiable, we see that
 \begin{multline}
\int_0^tds\, s^{m-1}e^{-i\sigma_p s}a(s)\frac1{\sqrt{t-s}}=t^{m-1}e^{-i\sigma_p t}\int_0^tds\,(1-s/t)^{m-1} e^{i\sigma_p s}a(t-s)\frac1{\sqrt{s}}\\
\sim t^{m-1}e^{-i\sigma_p t}\int_0^tds\, e^{i\sigma_p s}a(t-s)\frac1{\sqrt{s}}
=t^{m-1/2}e^{-i\sigma_p t}\int_0^1du\, e^{i\sigma_p ut}a(t(1-u))\frac1{\sqrt{u}}\\
=t^{m-1/2}e^{-i\sigma_p t}\sum_kc_k e^{ikt} \int_0^1 du\, e^{i(\sigma_p -k)ut}\frac1{\sqrt{u}}
\label{calca}
\end{multline}
where each integral in \eqref{calca} is evaluated by deforming the path of integration of the steepest descent and each integral is of order $t^{-1/2}k^{-1/2}$ and thus obtain that \eqref{calca} has dominant behavior $t^{m-1}e^{-i\sigma_p t}a_3(t)$ with $a_3$ is $2\pi$-periodic and smoothly differentiable.
Differentiation also preserves this form (being obtained from integral formulas, the asymptotic is differentiable).
 Furthermore, note that we required initial conditions so that $h_\pm(\cdot,t)$ decay sufficiently fast at $\infty$, thus being smaller than behavior \eqref{psi0det} for $m\ge 1$. The other integrals in \eqref{fomulapsix0}, \eqref{psiminus}, \eqref{psiplus} are treated similarly (recall that in this section we assumed $\omega=1$). Then, from \eqref{psiminus}, \eqref{psiplus}, $\psi(x,t)$ behaves, for large $t$, as a polynomial multiplying $e^{-i\sigma_p t}$ and $a(t)$, a $2\pi$-periodic function.

To study the behavior at $\sigma=\sigma_0$ we denote $\zeta=\sqrt{\sigma-\sigma_0}$ and we repeat the argument above, ruling out poles at $\zeta\ne 0$, while if there is a pole at $\zeta= 0$, it will have the form $\zeta^{-m}=(\sigma-\sigma_0)^{-m/2}$, which is not a pole in $\sigma$.

{\em We now show that $a(\cdot,t)\in L^2(\RR)$.}
The proof mimics the arguments in \S \ref{PfLema4}, (iii). An algebraically simpler way to see why this is to combine those arguments with the Fourier representations \eqref{eqhatpsim} and \eqref{eqhatpsip} below.  $\psi_0(t)$ converges in a space of differentiable functions with H\"older $1/4$ derivative, and, from \eqref{fomulapsix0}, $\psi_{x,0}$ converges in a space of functions  with H\"older exponent $3/4$  in intervals of the form $[t,t+2\pi/\omega]$. The norm in the latter space $\|f\|_{\infty}+\|f'\|_{\infty}+\sup_{x,y}|x-y|^{-\gamma}|f'(x)-f'(y)|$, with $\gamma=1/4$ and exponential weights are place on the sup norm as in \eqref{Bnu} to ensure contractivity. The integral operator is smoothing in this space.

  The integral term in \eqref{hatpsimin} converges uniformly in a space of functions on $\RR^+$ with values in $\{g:\|\sqrt{x^2+1} g\|_{\infty}<\infty\}$, hence uniformly a space of functions on $\RR^+$ with values in $L^2(\RR)$, to a $\tilde{\psi}\in L^2$ periodic in $t$ which solves \eqref{eqhatpsim}, as it is easy to check. (Note that the boundary condition at $x=0$ does not ensure symmetry of the Laplacian, nor hence conservation of  the $L^2$ norm.)

\

{\em  It remains to show that $\sigma_p$ is real. }Denote $\psi(x,t)=e^{-i\sigma_p t}\phi(x,t)$. Since $\psi$ satisfies the Schr\"odinger equation $i \psi_t=H\psi$ then $\phi$ satisfies $\sigma_p\phi+i \phi_t=H\phi$ therefore $\sigma_p\in\RR$, since the operator is symmetric.
 
 This completes the proof of Lemma\,\ref{ifpoles}. $\Box$.
 
 \end{proof}
 
 \

 {\bf Consequence.} Since there are no poles for $|z|<1$ (and no other singularities, by the Analytic Fredholm Alternative), the series of $\mathcal{P}_\sigma\psi_0$ converges for $|z|<1$.

\subsubsection{Absence of solutions of the quasienergy equation }\label{RAGE}

We first show that the existence of such solutions implies existence of actual eigenfunctions of the quasienergy operator; this implication is very general.

\begin{Lemma}\label{Hamsol}
  Consider a general Schr\"odinger equation
  \begin{equation}
    \label{eq:eqschr}
    i\psi_t(x,t)=H(x,t)\psi(x,t);\ \ x\in \RR^n
  \end{equation}
   where $H(x,t+2\pi)=H(x,t)$ for all $t$. Assume \eqref{eq:eqschr} has a solution of the form 
  \begin{equation}
    \label{eq:poly-per}
    \psi(x,t)=P(t)e^{i\lambda t}\phi(x,t),\ \ \ \text{where }\phi(x,t+2\pi)=\phi(x,t)\left[1+O(t^{-1})\right]+O(t^{-1/2})\ \ (t\to\infty)
  \end{equation}
where $P$ is a polynomial, $\psi$ is nonzero, $\lambda\in \RR$.
Then $P(t)$ is constant.
\end{Lemma}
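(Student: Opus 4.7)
The plan is to argue by contradiction. Assume $m:=\deg P\geq 1$. The argument combines unitarity of the Floquet monodromy with the near-periodicity of $\phi$, and ultimately amounts to the fact that a unitary operator on a Hilbert space admits no nontrivial Jordan blocks.

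First I would observe that self-adjointness of $H(\cdot,t)$ and $2\pi$-periodicity give a unitary propagator $U(t,s)$ satisfying $U(t+2\pi,s+2\pi)=U(t,s)$, so that the monodromy $U_F:=U(2\pi,0)$ is unitary and, iterating at period multiples, $\psi(\cdot,2\pi n)=U_F^n\psi(\cdot,0)$. In particular $\|\psi(\cdot,t)\|_{L^2}$ is constant, equal to some $C_0>0$ since $\psi$ is nonzero. Combined with the ansatz and $\lambda\in\RR$, this yields $\|\phi(\cdot,t)\|_{L^2}=C_0/|P(t)|$, which tends to $0$ for $m\geq 1$. (The same identity follows by substituting the ansatz into $i\psi_t=H\psi$: one obtains $i\phi_t=(H+\lambda-iP'/P)\phi$, whose scalar anti-Hermitian term drives $\frac{d}{dt}\|\phi\|^2=-\frac{d}{dt}\ln|P|^2\cdot\|\phi\|^2$.)

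Next I would exploit the near-periodicity at $t_n=2\pi n$: iterating $\phi(\cdot,t+2\pi)=\phi(\cdot,t)[1+O(t^{-1})]+O(t^{-1/2})$ shows that $\phi(\cdot,2\pi n)$ approaches a $2\pi$-periodic $L^2$-function $a(\cdot,0)$ -- its ``principal periodic part'', which in the application of this lemma inside Lemma~\ref{ifpoles} is explicit as the nonzero periodic function constructed there. Evaluating the ansatz at $t_n$ and taking norms gives
\begin{equation*}
C_0=\|\psi(\cdot,2\pi n)\|=|P(2\pi n)|\,\|\phi(\cdot,2\pi n)\|\sim |P(2\pi n)|\,\|a(\cdot,0)\|.
\end{equation*}
The left side is constant while $|P(2\pi n)|$ grows polynomially, forcing $\|a(\cdot,0)\|=0$, hence $a\equiv 0$ by periodicity. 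But vanishing of the principal periodic part collapses the ansatz to the subleading $O(t^{-1/2})$ remainder, whose $L^2$-norm tends to zero -- incompatible with $\|\psi(\cdot,t)\|=C_0>0$. Hence $m=0$ and $P$ is constant.

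The main obstacle is the extraction of a genuinely nonzero periodic limit $a$ from the stated upper-bound near-periodicity, where the correction estimates are one-sided; this step is transparent in the application (Lemma~\ref{ifpoles}) because the periodic function $a$ is produced explicitly by the construction there, but in the abstract setting it requires either a telescoping/compactness argument on the shifts $\{\phi(\cdot,\tau+2\pi n)\}_n$ or, more conceptually, invoking the spectral theorem for the unitary $U_F$: a polynomial-times-periodic ansatz identifies $\psi(\cdot,0)$ with a Jordan chain element of length $m+1$ for $U_F$ at the unimodular eigenvalue $e^{i2\pi\lambda}$, and unitary operators admit no Jordan chains of length $>1$, forcing $\deg P=0$.
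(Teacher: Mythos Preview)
Your approach is essentially the same as the paper's, which gives a one-line proof: ``This follows from the fact that the evolution is unitary and $\|\psi(x,t)\|=1$ for all $t$.'' Your core observation---that unitarity forces $C_0=|P(t)|\,\|\phi(\cdot,t)\|$ to be constant, hence $\|\phi(\cdot,t)\|\to 0$ if $\deg P\ge 1$---is exactly the paper's argument, just made explicit.

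Where you go further is in addressing why $\|\phi(\cdot,t)\|\to 0$ is incompatible with the near-periodicity hypothesis; the paper does not spell this out. You correctly identify that the abstract hypothesis, with its one-sided $O(t^{-1/2})$ remainder, does not by itself obviously preclude $\|\phi\|\to 0$, and that the argument is clean only because in the actual application (Lemma~\ref{ifpoles}) the periodic part $a(x,t)$ is constructed explicitly and is nonzero in $L^2$. This is a fair observation: the paper's one-liner tacitly relies on the concrete form coming from Lemma~\ref{ifpoles}. Your alternative suggestion via Jordan blocks for the unitary monodromy $U_F$ is a more conceptual route to the same conclusion, but is not needed here and is not what the paper does.
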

\begin{proof}
 This follows from the fact that the evolution is unitary and $\|\psi(x,t)\|=1$ for all $t$.  
\end{proof}

\begin{Proposition}\label{prop:floquet}
  There are no nonzero solutions of satisfying \eqref{asympsipol}
  any $\lambda\in\mathbb \RR$.
  
  As a consequence, there are no poles for $z=e^{2\pi i \sigma}$ with $|z|\le 1$.
\end{Proposition}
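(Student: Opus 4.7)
The second assertion is immediate from Lemma \ref{ifpoles}: any pole $z_p=e^{2\pi i\sigma_p}$ with $|z_p|\leqslant 1$ would, by that lemma (combined with the preceding ``Consequence'' which already handles $|z|<1$), produce a nontrivial Schr\"odinger solution of the form \eqref{asympsipol} with $\sigma_p\in\RR$, contradicting the first assertion. So the task reduces to excluding solutions of the form \eqref{asympsipol}. Given such a solution, I would first apply Lemma \ref{Hamsol}: since the evolution from an $L^2$ initial datum is unitary (Theorem \ref{Thzero}(b)) and $a(\cdot,t)\in L^2(\RR)$ by hypothesis, $\|\psi(\cdot,t)\|_2$ stays bounded in $t$, whereas the given asymptotic would force $\|\psi(\cdot,t)\|_2\sim C\,t^{m-1}$ unless $m=1$. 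Hence $m=1$, and $\psi(x,t)=e^{-i\lambda t}a(x,t)+O(t^{-1/2})$ with $a$ genuinely $2\pi$-periodic in $t$.

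\emph{Extracting a strict quasi-energy eigenfunction and reducing to an ODE system.} With $m=1$, the time-translates $\psi_k(x,t):=e^{2\pi i k\lambda}\,\psi(x,t+2\pi k)$ are themselves solutions of \eqref{schrodinger2} by $2\pi$-periodicity of the Hamiltonian, and $\psi_k\to\tilde\psi(x,t):=e^{-i\lambda t}a(x,t)$ as $k\to\infty$ uniformly on compacts in $(x,t)$. Passing to the limit using a weakly convergent subsequence in $L^2$ (afforded by unitarity and the $L^2$ bound on $a$) shows that $\tilde\psi$ is also a solution; equivalently, $a\in L^2(\mathbb T\times\RR)$ is a genuine eigenfunction of the quasi-energy operator $\mathcal K$ with eigenvalue $\lambda$. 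Fourier-expanding $a(x,t)=\sum_{n\in\ZZ}a_n(x)e^{int}$, the eigenvalue equation becomes, for $x<0$, the free ODE $-\tfrac12 a_n''=(\lambda-n)a_n$, and for $x>0$ the infinite tridiagonally-coupled system
\begin{equation*}
-\tfrac12 a_n''(x)+(U+n-\lambda)\,a_n(x)=\tfrac{Ex}{2}\bigl(a_{n-1}(x)+a_{n+1}(x)\bigr),
\end{equation*}
with continuity of $a_n,a_n'$ at $0$. On $x<0$ the oscillating plane-wave solutions are not $L^2(-\infty,0)$, forcing $a_n\equiv0$ whenever $\lambda-n>0$; for $n>\lambda$ only the decaying branch remains, $a_n(x)=B_n e^{\kappa_n x}$ with $\kappa_n=\sqrt{2(n-\lambda)}$.

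\emph{The hard step --- $x>0$.} The linear term $Ex$ drives Airy-type WKB behavior with exponents $\pm\tfrac23\sqrt{2E}\,x^{3/2}$, and generic solutions of the infinite system contain an exponentially growing branch. My plan is to invoke the resurgent/transseries machinery of \S\ref{RAGE} together with the analytic-structure techniques of \S\ref{calcPh}--\S\ref{merosol}: after a Langer-type rescaling, diagonalize the large-$x$ coupling to produce, per mode, a two-dimensional space of asymptotic transseries parametrized by \'Ecalle--Stokes constants. Requiring $a_n\in L^2(0,\infty)$ for every $n$ forces the coefficient of the growing branch to vanish in each mode; propagating these infinitely many vanishing conditions back to $x=0$ via the connection data, together with the matching $a_n(0^-)=a_n(0^+)$, $a_n'(0^-)=a_n'(0^+)$ and the constraint $a_n\equiv0$ for $n<\lambda$, forces $B_n=0$ for all $n$ and hence $a\equiv 0$ by ODE uniqueness, contradicting nontriviality. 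The chief obstacle is precisely this last step: setting up the resurgent analysis of the \emph{infinite} coupled system uniformly in the mode index, and converting the cascade of vanishing-Stokes-constant conditions into a rigorous exclusion of nontrivial $L^2$ solutions --- this is where the paper's new methods become essential.
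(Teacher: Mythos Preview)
Your reduction to $m=1$ via unitarity, the extraction of a genuine periodic eigenfunction, and the $x<0$ analysis all match the paper. The gap is in your $x>0$ step: the paper does \emph{not} attack the infinite tridiagonally-coupled ODE system you wrote down, and the transseries argument is not a large-$x$ WKB/Stokes analysis in the spatial variable.

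Instead, the paper first passes to the magnetic gauge $\varphi_t(x)=e^{-ixA_t\Theta(x)}\tilde\phi(x,t)$ and then, for $x>0$, performs a further time-periodic gauge plus moving-frame change of variables $u(x,t)=e^{g(t)}G(x+q(t),t)$ with $q(t)=E\cos t$, $g(t)=\tfrac{iE^2}{8}\sin(2t)$. This removes the electric field entirely: $G$ satisfies the \emph{free} equation $-i\partial_tG-\tfrac12\partial_\xi^2G=\tilde\lambda G$ on the time-dependent half-line $\xi\ge -E\cos t$, and the Fourier modes decouple, giving $G(\xi,t)=\sum_{n>n_0}D_n e^{-\kappa_n\xi}e^{int}$. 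The $L^2$ constraint in $x$ is then automatic mode-by-mode; there is no infinite coupled connection problem to solve. The transseries machinery enters only at the matching condition at $x=0$: setting $z=e^{it}$, the right-hand side $\sum D_n f_n(t)e^{int}$ becomes a convergent transseries at $z=0$ with essential-singularity factors $e^{-\kappa_n E/(2z)}$, while the left-hand side $\sum_{k>\lambda}C_ke^{ikt}$ is meromorphic at $z=0$. Uniqueness of transseries representations forces all $D_n=0$, hence $a\equiv 0$.

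So the step you flagged as hardest is bypassed by a decoupling transformation you did not anticipate; the resurgent input is one-variable (in $z=e^{it}$) rather than a spatial Stokes analysis of an infinite system.
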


{\em Proof.}

Recall that in this section we normalized equation so that $\omega=1$.

Consider a solution satisfying \eqref{asympsipol}. By Lemma\,\ref{Hamsol} we have $m=1$, therefore, with $\sigma=\lambda$,
\begin{equation}
\label{asympsipol1}
\psi(x,t)=   e^{-i \lambda t} a(x,t)\left[ 1+O(1/t)\right]+O(1/\sqrt{t})
\end{equation}
Substituting $\psi(x,t)=e^{i\lambda t}\tilde{\phi}(x,t)$ in the Schr\"odinger equation \eqref{schrodinger2} we see that $\tilde{\phi}$ solves:
\begin{equation}\label{eq11}
  i\partial_t\tilde{\phi}(x,t)-\frac 12\left[ -\partial_x^2+\Theta(x)(U-Ex\cos( t))\right]\tilde{\phi}(x,t)=\lambda \tilde{\phi}(x,t)
\end{equation}
where $\tilde{\phi}(\cdot,t)$ is in $L^2$ for each $t$ and for each $x$
\begin{equation}
\label{asyphi}
\tilde{\phi}(x,t)=a(x,t)\left[1+O(t^{-1})\right]+O(t^{-1/2})\ \ (t\to\infty),\ \ a(x,t)=a(x,t+2\pi),\ \forall t>0
\end{equation}
 We have $\lim_{t\to\infty}\tilde{\phi}(0,t)=a(0,t)$, periodic. We now show that there are no solutions with such matching conditions at $x=0$.

{\bf Remark 1.} If $\psi_0(t):=\psi(0,t)$ is $2\pi$-periodic then\ $\psi(x,\cdot)$ is $2\pi$-periodic.

{\em Indeed, }let $z=e^{2\pi is}$ where  $\Im s\ge 0$ to be complex. For $x<0$ we have $i\psi_t+\psi_{xx}=0$ with boundary condition $\psi_0(t)$.
 Now we write  $\phi(x,t)=e^{2\pi i\sigma t}\psi(x,t)$ we get $-2\pi\sigma\phi+\phi_t+\phi_{xx}=0$ where now $\phi(0,t)$ is periodic. Since for each fixed $x$ $\phi(x,t)$ is continuous, we take the discrete Fourier transform, 
$\phi(t)=\sum_{j\in\ZZ} C_j(x) e^{2\pi i j t}$ we get $-2\pi\sigma C_j -j C_j+C_j''=0$. The solution is $A_j e^{\pm \sqrt{2\pi j+s}x}$ where the sign depends on the sign of $\Re\sqrt{2\pi j+s}$ where $A_j$ are the Fourier coefficients of $\phi(0,t)$.  We note that the Fourier series of $\phi$ converges pointwise since $\phi$ is differentiable.The series  $\sum_jA_j e^{\pm \sqrt{2\pi j+s}}$ converges absolutely and uniformly since $|C_j|\le |A_j|e^{\pm \sqrt{2\pi j+s}x}$ where the sign ensures the real part is positive  and because of the convergence of the Fourier deries of $\phi$, $A_j\to 0$ as $j\to\infty$. We note that for such solutions to exist, we need that the Fourier coefficients $A_j$ vanish if $j$ is below a certain value. We have shown that, {\em if} such solutions exist, they are  analytic in $t$ for any $x\ne 0$ and periodic in $t$. The proof for $x>0$ is similar. The boundary condition becomes  $\sum C_j e^{\pm\sqrt{2\pi j+s} E\cos t+2\pi i j t}=\sum A_je^{2\pi i j t}$. Since $f_0$ is differentiable and hence $\sum A_je^{2\pi i j t}$ converges for all $t$, for $t=1$ we get that    $C_j e^{\mp\sqrt{2\pi j+s}}\to 0$ once more ensuring the absolute and uniform convergence of the series $\sum C_j e^{\pm\sqrt{2\pi j+s} \xi+i j t}$ in the corresponding domain (6.50). 

{\bf Remark 2.}  A straightforward but more tedious way is to rely on \eqref{eqhatpsim}, \eqref {hatpsimin} and \eqref{Cminus}, starting with $\lambda$ in the upper half plane to obtain, for $x<0$, an $L^2$ solution $\psi_0$ such that in the large $t$, $e^{i\lambda t}$ is periodic.  Similarly, for $x>0$, one uses \eqref{eqhatpsip}, \eqref{hatpsiplu}, \eqref{uPhi}, and \eqref{Cplus}.  

\

 We can equivalently work in the magnetic gauge. Let
\begin{equation}
  \varphi_t(x)=e^{-ixA_t\Theta(x)}\tilde{\phi}(x,t)
  ,\quad
  A_t:=\int_0^t d\tau\ E\cos( \tau)={E}\sin( t)
\end{equation}
Then $ \varphi_t(x)$ satisfies
\begin{equation}\label{eval}
  i\partial_t\varphi_t(x)-\frac12\left[ i\partial_x-\Theta(x)A_t\right]^2\varphi_t(x)-\Theta(x)U\varphi_t(x)=\lambda\psi(x,t)
  .
\end{equation}
The matching condition $\psi_t(0-)=\psi_t(0+),\ \partial_x\psi_t(0-)=\partial_x\psi_t(0+)$ becomes
\begin{equation}\label{matcon}
\varphi_t(0-)=\varphi_t(0+),\ \ \ \ \ \ \partial_x\varphi_t(0-)=\partial_x\phi_t(0+)+iA_t\varphi_t(0)
\end{equation}

We solve the equation \eqref{eval} for $x<0$ and $x>0$.

\noindent {\bf Negative $x$.}

For $x<0$ equation \eqref{eval} becomes 
$$ - i\partial_t\varphi_t(x)-\frac12\partial_x^2\varphi_t(x)=\lambda \varphi_t(x)$$
which we solve with boundary condition $\varphi_t(0)=a(0,t)$.

Substituting $\varphi_t(x)=\sum_{k\in\ZZ}u_k(x){\rm e}^{ik t}$ we obtain that $u_k(x)={\rm e}^{\pm \sqrt{2(k-\lambda)}\, x}$.

Solutions that decay towards $-\infty$ must have $k\omega-\lambda>0$ and the plus sign must be chosen at the exponent. Therefore, for $x<0$, 
\begin{equation}\label{solxneg}
\varphi_t(x)=\sum_{k\in\ZZ,k>\lambda}\ \ \ C_k\, {\rm e}^{ \sqrt{2(k-\lambda)}\, x}\,{\rm e}^{ik t}
\end{equation}
for some constants $C_k$.

\bigskip

\noindent {\bf Positive $x$. } For $x>0$ the equation  \eqref{eval}  becomes
\begin{equation}\label{eqxpos}
- i\partial_t\varphi_t(x) -\frac12\left(\partial_x^2 +2iA_t\partial_x-A_t^2-U\right)\varphi_t(x)=\lambda \varphi_t(x)
\end{equation}

{\bf Gauge transformation on a half-line; and eliminating the magnetic field.}
Substituting
$$u(x,t)={\rm e}^{g(t)}G(x+q(t),t)$$
with
$$ q(t)={E}\cos(t),\ \ \ \ \ g(t)={\frac {i{E}^{2}\sin \left( 2\,t \right) }{8}}
,\ \ \ \ \ \xi=x+q(t)$$
equation \eqref{eqxpos} becomes
\begin{equation}
  \label{eq:eqG1}
  - i\partial_tG(\xi,t)-\frac12\partial_x^2 G(\xi,t)=\left(-\lambda-U-\frac{E^2}{4}\right)G(\xi,t)=:\tilde{\lambda}G(\xi,t)
\end{equation}
The new PDE is defined on the domain
\begin{equation}
  \label{eq:defdom}
\mathcal{D}=  \left\{(\xi,t):\,t\ge 0,\ \xi+ {E} \cos( t)\ge 0\right\}
\end{equation}
It is clear that, for each fixed $t$,  the change of variables is an isomorphism between $L^2((-{E} \cos( t),\infty)$ and $L^2(\RR^+)$. We are looking for periodic solutions of \eqref{eq:eqG1}. Such solutions have Fourier series, convergent in $\mathcal D$:
\begin{equation}
  \label{eq:fourier0}
  G(\xi,t)=\sum_{n\in\ZZ}c_n(\xi) e^{i n t}
\end{equation}
Substituting \eqref{eq:fourier0} in \eqref{eq:eqG1} we obtain that for any $n\in\ZZ$ there is a $D_n\in\CC$ such that
\begin{equation}
  \label{eq:fourier0}
 c_n(\xi)=D_ne^{-\xi \sqrt{2(n-\tilde\lambda)}};\ \ n> n_0:=\tilde{\lambda}\ \ \text{and} \ c_n=0 \ \text{otherwise}
\end{equation}
hence
\begin{equation}
  \label{eq:formG}
  G(\xi,t)=\sum_{n> n_0}D_n e^{-\xi \sqrt{2(n-\tilde\lambda)}} e^{i n t}
\end{equation}
Since $G$ is differentiable the series converges pointwise convergence in the interior of $\mathcal D$, which implies 
\begin{equation}\label{eq:condconv}
| D_n|<\mathrm{Const}\  e^{\xi \sqrt{2(n-\tilde\lambda)}}
\end{equation}
The best bound is obtained when $t=(2m+1) \pi$  ($m\in\ZZ$),  see \eqref{eq:defdom},
\begin{equation}\label{eq:condconv}
 | D_n|<\mathrm{Const}\  e^{-E \sqrt{2(n-\tilde\lambda)}}
\end{equation}
We note that this estimate implies that the series  \eqref{eq:formG} converges uniformly and absolutely to a locally analytic function in the interior of $\mathcal{D}$, and it also converges uniformly and absolutely, together with all derivatives to its boundary, except perhaps at the special points $(-E \omega^{-2},(2m+1)\pi),\ m\in\NN$.

Returning to the variables $(x,t)$ we obtain 

\begin{equation}\label{solxpos}
\varphi_t(x)=\sum_{n\in\ZZ,n>n_0}\ \ \ D_n\, f_n(t)\, {\rm e}^{-\kappa_n  x}\, {\rm e}^{in t}
 \end{equation}
 \begin{equation}\label{notation}
 f_n(t)= {\rm e}^{\frac{i E^2}{8}\sin(2t)-\kappa_n   {E}\cos( t)},\ \ \ \ n_0=(-\lambda -U-\tfrac{E^2}{4}),\ \ \kappa_n=\sqrt{2}\sqrt{n+ \lambda+U+\frac{E^2}{4}}
 \end{equation}
 and convergence and analyticity are inherited from the above, for all $x>0,t>0$, all the way to $x=0$ except for the points $(0,(2m+1)\pi)$. 

We now show,  by contradiction, that \eqref{eval} has no nonzero solutions, 

We impose the matching conditions \eqref{matcon} for $\varphi_t(x)$ given by \eqref{solxneg} for $x<0$ and by \eqref{solxpos} for $x>0$. We must have
\begin{equation}\label{cond1}
\sum_{k\in\ZZ,k>\lambda}\ \ \ C_k\, \,{\rm e}^{ikt}=\sum_{n\in\ZZ,n>n_0}\ \ \ D_n\, {\rm e}^{in t}\, f_n(t):=\Phi(t)
\end{equation}

This equation holds pointwise except for $t=(2m+1)\pi:=t_m$, which means that, except at these points we are dealing with locally analytic functions of $t$, and the series on the left also converge pointwise uniformly a.e. (more precisely, except at $t_m$). 
From \eqref{cond1}, since we have $C_k=0$ for $k< \lambda$, then
$$\int_0^{2\pi}{\rm e}^{-ikt}\ \Phi(t)\, dt=0\ \ \ \ \text{for all }k< \lambda  $$
which we now show it is not possible unless all the $D_n=0$ in the sum.

Indeed, since the Fourier coefficients of $\Phi(t)$ vanish for $k<\lambda$, then $\Phi$ extends as a meromorphic function inside the disk bounded by $\mathbb{T}$. Denoting $z={\rm e}^{i t}$, the function $\Phi$ is presented as a convergent transseries (see e.g. \cite{Co08}) at $z=0$: 

\begin{equation}\label{tranP}
\Phi={\rm e}^{\frac{\epsilon^2}{2}(z^2-\frac{1}{z^2})} \sum_{n\in\ZZ,n>n_0}\ \ \ D_n\, z^n {\rm e}^{-\kappa_n   {4\epsilon}(\frac z2+\frac1{2z})} ={\rm e}^{-\frac{\epsilon^2}{2}\, {\frac{1}{z^2}} }\sum_{n>n_0}\  {\rm e}^{-\kappa_n   {2\epsilon}\, \frac1{2z}}\, g_n(z)
\end{equation}
with $g_n$ meromorphic and $\kappa_n$ strictly increasing in $n$. When transseries representations  exist, they are unique. Since $\Phi$ is meromorphic at $z=0$, the transseries representation \eqref{tranP} is possible only if all  $g_n,n\in\NN$ are zero, therefore $\Phi\equiv 0$. 

In conclusion $a(x,t)=0$, hence no poles can exist.

$\Box$

{\bf Note.} In the process we showed that we could work with the dominant term in \eqref{asyphi}, asymptotically, as $t\to\infty$.

\subsection{End of proof of Theorem \ref{Thlongtime}}\label{TheEnd}
Assume $x$ is in a compact set and $\psi(x,0)\in L^2$. The fact that decay of $\psi(x,t)$ is at least as fast as $t^{-1/2}$ follows from the explicit formula for $\psi(x,t)$ in terms of $\psi(x,0)$ and the following:

\begin{Lemma}\label{psi0longtime}

Assume $\psi(x,0)\in L^2$.

(i) We have $\psi(0,t)=O(t^{-1/2})$ as $t\to\infty$.

(ii) For $x$ in a compact set $\psi(x,t)=O(t^{-1/2})$.
\end{Lemma}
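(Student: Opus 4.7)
The plan is to use Cauchy's inversion formula \eqref{invPz}, writing
\[
\psi_0(\tau + 2k\pi) = \frac{1}{2\pi i}\oint_C \frac{\mathcal{P}_\zeta \psi_0(\tau)}{\zeta^{k+1}}\, d\zeta,
\]
and then to deform the contour $C$ past the unit circle. By Proposition\,\ref{prop:floquet} and the consequence stated at the end of \S\ref{Proofofb}, $\mathcal{P}_\zeta \psi_0$ extends analytically to a neighborhood of the closed unit disk except at the two square-root branch points $\zeta = 1$ (i.e.\ $\sigma = 0$) and $\zeta = e^{2\pi i \sigma_0}$. I deform $C$ to a circle $|\zeta| = 1 + \epsilon$ with Hankel loops surrounding each branch point; the outer circle contributes $O((1+\epsilon)^{-k})$ and is exponentially small. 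Near each branch point the local structure is the one recorded in \eqref{stareq20}, and the computation already carried out in \eqref{largek} shows that each Hankel loop contributes $O(k^{-1/2})$. Writing $t = \tau + 2k\pi$ with $\tau \in [-\pi,\pi)$ and $k \sim t/2\pi$ therefore yields $\psi_0(t) = O(t^{-1/2})$.

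\textbf{Plan for (ii).} The plan is to feed the decay of $\psi_0$ from (i) into the integral representations \eqref{psiminus} for $x < 0$ and \eqref{psiplus} for $x > 0$. First, the free-propagator terms $h_\pm(x,t)$ in \eqref{hminus}, \eqref{hplus} satisfy the standard pointwise dispersive bound $|h_\pm(x,t)| \lesssim t^{-1/2}$ for $f \in L^2$, uniformly for $x$ in compact sets (by an $L^1 \cap L^2$ density argument together with the explicit prefactor $1/\sqrt{2\pi i t}$). Next, from \eqref{fomulapsix0} the bound on $\psi_0(t)$ and on $h_-(0,t)$, combined with the H\"older-$1/4$ regularity of $\psi_0'$ from Lemma\,\ref{lemma:contraction}(iv), translates into $\psi_{x,0}(t) = O(t^{-1/2})$ via the Abel identity $t^{-1/2}\ast t^{-1/2}=\pi$ applied to the convolution appearing there.

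It remains to estimate the Duhamel-type integrals in \eqref{psiminus}--\eqref{psiplus}. I would split the $s$-integration at $t/2$: on $[t/2,t]$ the pointwise decay $\psi_0(s),\psi_{x,0}(s) = O(s^{-1/2})$ contributes an overall factor $t^{-1/2}$, whereas on $[0,t/2]$ I integrate by parts in the oscillatory phase $e^{iF(x,s,t)}$, exploiting that $\partial_s F$ stays bounded away from $0$ for bounded $x$ and $s$ separated from $t$, to gain the extra $t^{-1/2}$. The genuinely $(t-s)^{-3/2}$-singular pieces in \eqref{psiplus} must be regularized by the endpoint-subtraction trick already used in \eqref{beforewhere}--\eqref{diffpsi}: writing $\psi_0(s) = \psi_0(t) + [\psi_0(s)-\psi_0(t)]$, the first term multiplies an explicit integral evaluated as in \eqref{beforewhere}, while for the second the H\"older-$1/4$ control on $\psi_0$ softens the kernel to an integrable $(t-s)^{-5/4}$.

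The main obstacle, and the most delicate point, is verifying $\psi_{x,0}(t) = O(t^{-1/2})$ rigorously from \eqref{fomulapsix0}: the formula involves a time derivative of a convolution of an only H\"older-continuous function with $t^{-1/2}$, so one cannot differentiate under the integral sign naively. One has to exploit both the decay from (i) for $s$ close to $t$ and the $C^{1,1/4}$-regularity from Lemma\,\ref{lemma:contraction}(iv) to obtain a genuine pointwise bound rather than an $L^p$ or distributional one; this is the analytic step on which the whole compact-set estimate rests.
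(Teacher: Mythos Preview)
Your argument for part (i) is correct and is essentially the paper's proof: absence of poles (Proposition~\ref{prop:floquet}) allows the contour in \eqref{invPz} to be pushed to Hankel loops around the two square-root branch points, and \eqref{largek} gives the $O(k^{-1/2})$ bound.

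Your plan for part (ii), however, has a genuine gap. The difficulty is not the one you flag (controlling $\psi_{x,0}$ via \eqref{fomulapsix0}); it is that even granting $\psi_0(s),\psi_{x,0}(s)=O(s^{-1/2})$, the time-domain estimates you propose do not yield $\psi(x,t)=O(t^{-1/2})$. For $x<0$ the Duhamel kernel in \eqref{psiminus} is $(t-s)^{-1/2}e^{ix^2/(2(t-s))}$: on $[0,t/2]$ the phase derivative is $\partial_s\big[x^2/(2(t-s))\big]=x^2/(2(t-s)^2)=O(x^2/t^2)$, which is vanishingly small for bounded $x$, so integration by parts in this phase \emph{loses} rather than gains. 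The naive bound then produces $\int_0^t s^{-1/2}(t-s)^{-1/2}\,ds=\pi$, i.e.\ only $O(1)$. The Abel identity you invoke for $\psi_{x,0}$ is precisely the obstruction here: $t^{-1/2}\ast t^{-1/2}$ does not decay. No amount of endpoint subtraction as in \eqref{beforewhere}--\eqref{diffpsi} repairs this, since that trick handles the local $(t-s)^{-3/2}$ singularity near $s=t$, not the lack of large-$t$ decay of the whole integral.

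The paper avoids this by not working in the time domain at all for (ii). It refers back to \S\ref{Proofofb}: one applies the discrete-Laplace transform $\mathcal P_\sigma$ directly to $\psi(x,\cdot)$ for each fixed $x$, and uses the integral representations \eqref{psiminus}, \eqref{psiplus}, \eqref{fomulapsix0} to transport the analytic structure \eqref{stareq20} of $\mathcal P_\sigma\psi_0$ (square-root branch points at $\sigma=0,\sigma_0$, no poles) to $\mathcal P_\sigma\psi(x,\cdot)$. This is exactly the propagation argument carried out in the proof of Lemma~\ref{ifpoles} (around \eqref{calca}), where it is checked that convolution with $t^{-1/2}$ and the other operations in those formulas preserve the generating-function structure in $\sigma$. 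Once $\mathcal P_\sigma\psi(x,\cdot)$ is known to have the same singularity type and no poles, the Hankel-contour computation \eqref{largek} applies verbatim and gives $\psi(x,\tau+2k\pi)=O(k^{-1/2})$, uniformly for $x$ in compact sets. The point is that the $\sigma$-domain structure survives convolution even though the crude $O(t^{-1/2})$ bound does not.
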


\begin{proof}
(i) The absence of poles proven in Proposition \ref{prop:floquet} shows that the main large $k$ asymptotic behavior of $\psi(0,\tau+2 k \pi \omega)$ comes from the Hankel contours around the branch points, namely \eqref{largek} resulting in $O(t^{-1/2})$ decay in $x$, uniformly on compact sets. (Uniformity follows immediately from \eqref{psiminus} and \eqref{psiplus}.)

(ii) The same arguments as in \S\ref{Proofofb} show that $\psi(x,t)=O(t^{-1/2})$.
\end{proof}

Since $\psi(x,t)=O(t^{-1/2})$ uniformly on compact sets in $\RR$, formula \eqref{intpsisq} follows.

Theorem\,\ref{Thlongtime} is proved.

 \begin{Remark}\label{rempoleswave} Note that starting with distributional (plane wave) initial condition  \eqref{init}, poles appear in $h_\pm$ as seen by a straightforward calculation and decay to an eventually periodic solution obtained by physical arguments by Faisal \cite{FKS05}).
 \end{Remark}

\subsection{Computation of $h_\pm$}
Let us start by computing $h_\pm$.

\begin{Proposition}\label{NP2}
For the initial condition $f(x)=\psi(0,x)$ in \eqref{init} we have
\begin{equation}\label{h0texp}
h_-(0,t)=\frac{e^{-i\frac{k^2t}2}}2
\left[
  \mathrm{erfc}\left({\textstyle-\sqrt{\frac t{2i}}k}\right)
  +R_0\, \mathrm{erfc}\left({\textstyle\sqrt{\frac t{2i}}k}\right)
\right]
\end{equation}
and
\begin{equation}\label{h0texpplus}
  h_+(0,t)=
  \frac{T_0}2
  e^{
    \frac{E}{\omega^2}(1-\cos(\omega t))\sqrt{2U-k^2}
    -\frac i2(k^2+\frac{E^2}{2\omega^2})t
    +i\frac{E^2}{8\omega^3}\sin(2\omega t)
  }
  \mathrm{erfc}\left({\textstyle \sqrt{\frac{it}2}\sqrt{2U-k^2}+\frac{E}{\omega^2\sqrt{2it}}(1-\cos(\omega t))}\right)
  .
  \end{equation}
\end{Proposition}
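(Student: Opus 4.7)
The plan is to substitute the piecewise initial condition \eqref{init} directly into the explicit formulas \eqref{hminus0} and \eqref{hplus0} for $h_\pm(0,t)$, reducing each computation to Gaussian integrals with complex parameters that evaluate to complementary error functions.

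For $h_-(0,t)$, I substitute $f(y)=e^{iky}+R_0e^{-iky}$ in \eqref{hminus0} to split the integral into two pieces. In each piece I complete the square using the identity
\begin{equation*}
\frac{iy^2}{2t}\pm iky=\frac{i(y\pm kt)^2}{2t}-\frac{ik^2t}{2},
\end{equation*}
which pulls out a common factor $e^{-ik^2t/2}$ and leaves integrals of the form $\int_{-\infty}^0 e^{i(y\pm kt)^2/(2t)}\,dy$. After the substitution $u=(y\pm kt)/\sqrt{2it}$ (with the principal branch $\sqrt{2it}=\sqrt{2t}\,e^{i\pi/4}$) each such integral becomes a line integral of $e^{-u^2}$ along the ray $e^{-i\pi/4}\mathbb{R}$; a Cauchy contour deformation into the sector of decay $|\arg u|<\pi/4$ gives $\frac{\sqrt{2\pi it}}2\,\mathrm{erfc}(\mp k\sqrt{t/(2i)})$. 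Dividing by $\sqrt{2\pi i t}$ and collecting gives \eqref{h0texp}.

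For $h_+(0,t)$, I insert $f(y)=T_0 e^{-\kappa y}$ with $\kappa:=\sqrt{2U-k^2}$ into \eqref{hplus0}. Setting $a:=\frac{E}{\omega^2}(1-\cos(\omega t))$, the inner integrand has exponent $-\kappa y+\frac{i(y+a)^2}{2t}$, which I complete to a perfect square in $y$:
\begin{equation*}
-\kappa y+\frac{i(y+a)^2}{2t}=\frac{i(y+a+it\kappa)^2}{2t}+a\kappa+\frac{it\kappa^2}{2}.
\end{equation*}
The $y$-integral then reduces to $\int_0^\infty e^{i(y+b)^2/(2t)}dy$ with $b=a+it\kappa$; the same contour deformation used above yields $\tfrac{\sqrt{2\pi i t}}{2}\,\mathrm{erfc}\bigl(b/\sqrt{2it}\bigr)$, and the identity $b/\sqrt{2it}=\kappa\sqrt{it/2}+a/\sqrt{2it}$ produces the erfc-argument appearing in the statement. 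The prefactor $e^{a\kappa+it\kappa^2/2}$ combines with the preexisting $e^{-i(U+E^2/4\omega^2)t+iE^2\sin(2\omega t)/8\omega^3}$: the $itU$ from $it\kappa^2/2$ cancels the $-iUt$, while $-itk^2/2$ and $-iE^2t/(4\omega^2)$ combine into $-\tfrac{i}{2}(k^2+\tfrac{E^2}{2\omega^2})t$, giving \eqref{h0texpplus}.

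The only delicate step is the contour manipulation used to express the complex Fresnel integral as an $\mathrm{erfc}$: one must verify that the arc at infinity joining the rotated ray to the real axis contributes nothing, which follows from $\Re(u^2)=R^2\cos(2\theta)\geq 0$ on the relevant sector together with Jordan's lemma. Once this is in place, both identities are essentially algebraic.
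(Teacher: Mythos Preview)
Your proof is correct and follows essentially the same route as the paper: complete the square in each Gaussian exponent, shift the integration variable, and identify the resulting half-line Fresnel integral with an $\mathrm{erfc}$. The only cosmetic difference is that you make the contour-deformation justification explicit (via the sector $|\arg u|<\pi/4$ and Jordan's lemma), whereas the paper simply writes the substitution $y\mapsto y/\sqrt{2it}$ and the erfc identification directly.
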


\begin{proof}
We first compute $h_-$: by\-~\eqref{hminus},
\begin{equation}
  h_-(t)=\sqrt{\frac1{2\pi i t}}\int_{-\infty}^0dy\ (e^{iky}+R_0e^{-iky})e^{i\frac{y^2}{2t}}
  .
\end{equation}
We have
\begin{equation}
  \begin{array}{>\displaystyle l}
    \int_{-\infty}^0dy\ e^{iky}e^{i\frac{y^2}{2t}}
    =
    e^{-i\frac{k^2t}{2}}\int_{-\infty}^0dy\ e^{i\frac1{2t}(y+tk)^2}
    =
    e^{-i\frac{k^2t}{2}}\int_{-tk}^\infty dy\ e^{i\frac1{2t}y^2}
    \\[0.5cm]\hfill
    =
    \sqrt{2it}e^{-i\frac{k^2t}{2}}
    \int_{-\sqrt{\frac{t}{2i}}k}^\infty dy\ e^{-y^2}
    =
    \sqrt{\frac{i\pi t}{2}}e^{-i\frac{k^2t}{2}}
    \mathrm{erfc}({\textstyle-\sqrt{\frac t{2i}}k})
    .
  \end{array}
\end{equation}
Therefore,
\begin{equation}
  h_-(t)=\frac{e^{-i\frac{k^2}{2}t}}2\left(
    \mathrm{erfc}({\textstyle-e^{-\frac{i\pi}4}\sqrt t\frac k{\sqrt{2}}})
    +R_0
    \mathrm{erfc}({\textstyle e^{-\frac{i\pi}4}\sqrt t\frac k{\sqrt{2}}})
  \right)
  .
  \label{h-}
\end{equation}
\bigskip

\indent
We now turn to $h_+$: by\-~\eqref{hplus}, if $\mathfrak q_k:=\sqrt{2U-k^2}$, then
\begin{equation}
  h_+(t):=T_0\sqrt{\frac1{2\pi i t}}e^{-i(U+\frac{E^2}{4\omega^2})t+i\frac{E^2}{8\omega^3}\sin(2\omega t)}\int_0^\infty dy\ e^{-\mathfrak q_ky}e^{i\frac{(y+\frac{E}{\omega^2}(1-\cos(\omega t)))^2}{2t}}
  .
\end{equation}
We have
\begin{equation}
  \int_0^\infty dy\ e^{-\mathfrak q_ky}e^{i\frac{(y+\frac{E}{\omega^2}(1-\cos(\omega t)))^2}{2t}}
  =
  e^{i\frac{E^2}{2\omega^4t}(1-\cos(\omega t))^2}
  \int_0^\infty dy\ e^{-\mathfrak q_ky}e^{i\frac1{2t}y^2+2i\frac{E y}{2\omega^2 t}(1-\cos(\omega t))}
\end{equation}
and
\begin{equation}
  \int_0^\infty dy\ e^{-\mathfrak q_ky}e^{i\frac1{2t}y^2+2i\frac{E y}{2\omega^2 t}(1-\cos(\omega t))}
  =
  e^{-i\frac 1{2t}(-it\mathfrak q_k-\frac{E}{\omega^2}(1-\cos(\omega t)))^2}
  \int_0^\infty dy\ e^{i\frac1{2t}(y+it\mathfrak q_k+\frac{E}{\omega^2}(1-\cos(\omega t)))^2}
\end{equation}
so
\begin{equation}
  \begin{array}{>\displaystyle l}
    \int_0^\infty dy\ e^{-\mathfrak q_ky}e^{i\frac{(y+\frac{E}{\omega^2}(1-\cos(\omega t)))^2}{2t}}
    =
    e^{\frac{it\mathfrak q_k^2}{2}+\frac{\mathfrak q_kE}{\omega^2}(1-\cos(\omega t))}
    \int_{it\mathfrak q_k+\frac{E}{\omega^2}(1-\cos(\omega t))}^\infty dy\ e^{i\frac1{2t}y^2}
    \\[0.5cm]\hfill
    =
    \sqrt{2it}e^{\frac{it\mathfrak q_k^2}{2}+\frac{\mathfrak q_kE}{\omega^2}(1-\cos(\omega t))}
    \int_{\sqrt{\frac{it}{2}}\mathfrak q_k+\frac{E}{\omega^2\sqrt{2it}}(1-\cos(\omega t))}^\infty dy\ e^{-y^2}
    \\[0.5cm]\hfill
    =
    \sqrt{\frac{i\pi t}{2}}e^{\frac{it\mathfrak q_k^2}{2}+\frac{\mathfrak q_kE}{\omega^2}(1-\cos(\omega t))}
    \mathrm{erfc}({\textstyle \sqrt{\frac{it}{2}}\mathfrak q_k+\frac{E}{\omega^2\sqrt{2it}}(1-\cos(\omega t))})
    .
  \end{array}
\end{equation}
Therefore,
\begin{equation}
  h_+(t)=\frac{T_0}2e^{\frac{E}{\omega^2}(1-\cos(\omega t))\mathfrak q_k-i\frac{k^2+\frac{E^2}{2\omega^2}}{2}t+i\frac{E^2}{8\omega^3}\sin(2\omega t)}\mathrm{erfc}(e^{-\frac{i\pi}4}({\textstyle i\sqrt t\frac{\mathfrak q_k}{\sqrt{2}}+E\frac{1-\cos(\omega t)}{\omega^2\sqrt t\sqrt{2}}}))
  .
  \label{h+}
\end{equation}
\bigskip
\end{proof}

\subsection{Poles of $\mathcal P_\sigma h_\pm$}
\indent
We now compute the poles of $\mathcal P_\sigma h_\pm$.

\begin{Proposition}\label{singPh}
  $\mathcal P_\sigma h_\pm$ has poles at $\sigma_p=\frac{k^2}2$ and $\sqrt\sigma\mathcal P_\sigma h_\pm$ is analytic in $\sqrt\sigma$.
\end{Proposition}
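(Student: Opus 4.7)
The plan is to read the singularity structure of $\mathcal P_\sigma h_\pm$ directly off the explicit formulas \eqref{h0texp}--\eqref{h0texpplus} of Proposition \ref{NP2}. Both $h_\pm(0,t)$ contain the oscillation $e^{-ik^2 t/2}$ as a factor; under the discrete-Laplace transform this factor produces a geometric series whose sum has a simple pole at $\sigma = k^2/2\pmod{\mathbb Z}$, while the remaining factor decays like $t^{-1/2}$ and its DLT is handled by the Lerch-transcendent machinery of \S\ref{anstrsol}, yielding only a $\sqrt\sigma$ branch at $\sigma = 0$.

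For $h_-$, I would apply the identity $\mathrm{erfc}(-z)+\mathrm{erfc}(z)=2$ to \eqref{h0texp} to obtain
$$h_-(0,t) = e^{-ik^2 t/2} + \tfrac{R_0-1}{2}\,e^{-ik^2 t/2}\mathrm{erfc}(\alpha\sqrt t),\qquad \alpha := e^{-i\pi/4}k/\sqrt 2.$$
The DLT of $e^{-ik^2 t/2}$ is the geometric series $e^{-ik^2\tau/2}/(1-e^{2\pi i(\sigma-k^2/2)})$, exhibiting the announced simple pole at $\sigma = k^2/2$. Since $\alpha^2 = -ik^2/2$, the Laplace representation $e^{z^2}\mathrm{erfc}(z) = (2/\sqrt\pi)\int_0^\infty e^{-2zu-u^2}du$ eliminates the oscillation from the second summand, giving
$$e^{-ik^2 t/2}\mathrm{erfc}(\alpha\sqrt t) = \frac{2}{\sqrt\pi}\int_0^\infty e^{-2\alpha\sqrt t\, u-u^2}du,$$
a purely non-oscillatory function whose large-$t$ expansion is in odd powers of $t^{-1/2}$, namely $\sum_{n\ge 0} a_n t^{-(2n+1)/2}$ (obtained by Laplace's method or by rescaling $u \mapsto v/\sqrt t$ and expanding $e^{-v^2/t}$). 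Its DLT, computed termwise via the integral representation \eqref{Phiint} and identity \eqref{sumPhi} as in \S\ref{anstrsol}, has the singular structure $c_{-1}/\sqrt\sigma + c_0 + c_1\sqrt\sigma+\cdots$ near $\sigma = 0$; multiplication by $\sqrt\sigma$ then gives a function analytic in $\sqrt\sigma$.

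For $h_+$, I would factor $h_+(0,t) = (T_0/2)\,e^{-ik^2 t/2}G(t)$ by pulling $e^{-ik^2 t/2}$ out of the prefactor in \eqref{h0texpplus}, leaving $G$ a bounded product of periodic exponentials and $\mathrm{erfc}(z(t))$. The identity
$$\mathcal P_\sigma[e^{-ik^2 t/2}G](\tau) = e^{-ik^2\tau/2}(\mathcal P_{\sigma-k^2/2}G)(\tau)$$
shifts every singularity of $\mathcal P_\omega G$ at $\omega = 0$ into a singularity of $\mathcal P_\sigma h_+$ at $\sigma = k^2/2$. Expanding the periodic exponentials $e^{E\mathfrak q_k(1-\cos\omega t)/\omega^2}$ and $e^{iE^2\sin(2\omega t)/(8\omega^3)-iE^2 t/(4\omega^2)}$ in Jacobi--Anger Fourier series and applying the Laplace representation of $\mathrm{erfc}$ to $z(t)$ to isolate the non-oscillatory $t^{-1/2}$ decay, the analysis reduces termwise to the same Lerch-transcendent sums as for $h_-$, with Proposition \ref{PerK} justifying the commutation of $\mathcal P_\sigma$ with the Fourier sum.

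The main obstacle is the algebraic bookkeeping in the $h_+$ case: the argument $z(t)$ of $\mathrm{erfc}$ in \eqref{h0texpplus} mixes $\sqrt{it}$ with the periodic correction $(1-\cos\omega t)/\sqrt{it}$, and combining this with the Jacobi--Anger expansion produces a double sum indexed by Fourier modes and by the Lerch index $k$ in \eqref{sumPhi}. Ensuring uniform convergence of this double sum — so as to legitimize the termwise contour deformation of \S\ref{anstrsol} and the extraction of the $\sqrt\sigma$ branch — is where the technical labor concentrates, but no tool beyond those already deployed in the proof of Theorem \ref{sweatandblood} is required.
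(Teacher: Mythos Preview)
Your treatment of $h_-$ is correct and essentially coincides with the paper's: both extract the pure oscillation $e^{-ik^2t/2}$ (yielding the pole via a geometric series) and treat the $O(t^{-1/2})$ remainder by the Lerch machinery of \S\ref{anstrsol}. The paper simply invokes the one-term asymptotic expansion of $\mathrm{erfc}$ rather than your exact identity $\mathrm{erfc}(-z)+\mathrm{erfc}(z)=2$, but the decomposition is the same.

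Your treatment of $h_+$, however, has a genuine gap. After you factor out $e^{-ik^2t/2}$, the remaining factor $G(t)$ is \emph{not} a periodic function times a non-oscillatory decay. Two separate oscillations survive in $G$. First, you have grouped $e^{-iE^2 t/(4\omega^2)}$ with the ``periodic exponentials'' to be Jacobi--Anger expanded, but this is a linear-in-$t$ phase, not periodic. Second, the $\mathrm{erfc}$ itself carries a hidden oscillation: its argument satisfies $z(t)^2 = it(U-\tfrac{k^2}{2})+\text{(bounded)}$, so while $e^{z^2}\mathrm{erfc}(z)$ is non-oscillatory, $\mathrm{erfc}(z(t))$ alone behaves like $e^{-it(U-k^2/2)}t^{-1/2}$. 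Combining the three phases gives the dominant oscillation of $h_+(0,t)$ as $e^{-it\tilde U}$ with $\tilde U=U+\tfrac{E^2}{4\omega^2}$, not $e^{-ik^2t/2}$. Consequently $\mathcal P_\sigma h_+$ has \emph{no pole at all} --- only a square-root branch at $\sigma_0=\{\tilde U\}$ --- and your shift identity, while formally correct, locates nothing at $\sigma=k^2/2$ because $\mathcal P_\omega G$ has no singularity at $\omega=0$. The paper sees this directly from the asymptotic expansion $\mathrm{erfc}(z)\sim e^{-z^2}/(z\sqrt\pi)$, which makes the cancellation of the prefactor phases against $e^{-z^2}$ transparent; the pole at $k^2/2$ claimed in the proposition comes solely from $h_-$.
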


\begin{proof}
We start with $\mathcal P_\sigma h_-$.
Using the asymptotic expansion of the error function\-~\cite[(7.12.1)]{DLMF1.1.6},
\begin{equation}
  h_-(0,t)=
  \frac{e^{-i\frac{k^2t}2}}{\sqrt\pi}
  +
  \mathfrak R_-(t)
\end{equation}
with
\begin{equation}
  \mathfrak R_-(t):=
  -\frac{\sqrt i}{k\sqrt{2\pi t}}(1-R_0)
  +
  O(t^{-\frac32})
  .
\end{equation}
Proceeding as in section\-~\ref{anstrsol}, we find that $\sqrt\sigma\mathcal P_\sigma\mathcal R_-$ is analytic in $\sqrt\sigma$.
Furthermore,
\begin{equation}
  \mathcal P_\sigma\mathfrak h_-
  =
  \sum_{n=0}^\infty
  e^{2i\pi\sigma n}
  \frac{e^{-i\frac{k^2(t+2\pi n)}2}}{\sqrt\pi}
  =
  e^{-i\frac{k^2}2t}\frac1{\sqrt\pi(1-e^{2i\pi(\frac{k^2}2-\sigma)})}
\end{equation}
which has a pole at $\sigma=\frac{k^2}2$.
\bigskip

\indent
We now turn to $\mathcal P_\sigma h_+$.
By\-~\cite[(7.12.1)]{DLMF1.1.6},
\begin{equation}
  h_+(0,t)=
  \frac{T_0}{2\sqrt{i\pi t(U-\frac{k^2}2)}}
  e^{-it(U+\frac{E^2}{4\omega^2})+i\frac{E^2}{8\omega^3}\sin(2\omega t)}
  +O(t^{-\frac32})
  .
\end{equation}
Again, proceeding as in section\-~\ref{anstrsol}, we find that $\sqrt\sigma\mathcal P_\sigma\mathcal R_-$ is analytic in $\sqrt\sigma$.
\end{proof}

\subsection{End of proof of Theorem \ref{theo:2}}\label{lasttheo}
As we explained at the beginning of \S\ref{PfLongt}, we only need to take into account that for the distributional plane wave initial condition,  $\mathcal P_\sigma h_\pm$ has a pole at $\sigma_p=\frac{k^2}2$. Otherwise, $\sqrt\sigma\mathcal P_\sigma h_\pm$ is analytic in $\sqrt\sigma$ everywhere else (see Proposition \ref{singPh}).
Proceeding as in the proof of Lemma\-~\ref{ifpoles}, we find that the solution of the Schr\"odinger equation\-~\eqref{schrodinger2} is of the form
\begin{equation}
  \psi(x,t)=e^{-it\frac{k^2}2}\phi(x,t)(1+O(t^{-\frac12}))
\end{equation}
where $\phi(x,\cdot)$ is $2\pi/\omega$-periodic, which proves the theorem.
 \qed

\appendix

\section{Laplace transform versus discrete-Laplace transform}\label{appen}

In a way similar to the classical Poisson summation formula approach, working in distributions, taking a Laplace transform, which we denote by $\mathcal L$, followed by a discrete Fourier transform is related to a discrete-Laplace transform in the original variable, as seen below.
\begin{equation}
\label{genPos}
\frac1{2\pi}\sum_{n\in\ZZ}(\mathcal{L}\psi)(-i\sigma-in\omega){\rm e}^{-i(\sigma+n\omega)r}
=\frac1\omega\left[ \psi(r)\Theta(r)+\sum_{k=1}^\infty {\rm e}^{i\sigma\frac{2k\pi}\omega}\psi\left(r+\frac{2k\pi}\omega\right)\right]:=(\mathcal{P}_\sigma\psi)(r)
\end{equation}
where $-\tfrac\pi\omega\le r<\tfrac\pi\omega$ and $\sigma\in[0,\omega)$.

To deduce this formula,
we calculate
\begin{multline}
\frac1{2\pi}\sum_{n\in\ZZ}(\mathcal{L}\psi)(-i\sigma-in\omega){\rm e}^{-in\omega r}=\frac1{2\pi}\sum_{n\in\ZZ}\int_0^\infty e^{(i\sigma+in\omega)t}\psi(t){\rm e}^{-in\omega r}\, dt\\
=\frac1{2\pi}\sum_{n\in\ZZ}\left[\int_0^{\pi/\omega} e^{(i\sigma+in\omega)t}\psi(t){\rm e}^{-in\omega r}\, dt +\sum_{k=1}^\infty \int_{(2k-1)\pi/\omega}^{(2k+1)\pi/\omega} e^{(i\sigma+in\omega)t}\psi(t){\rm e}^{-in\omega r}\, dt    \right]\\
=\frac1{2\pi}\sum_{n\in\ZZ}\int_{-{\pi/\omega}}^{\pi/\omega}e^{i\sigma t}\psi(t) \Theta(t)\,e^{in\omega(t-r)}\, dt +\frac1{2\pi}\sum_{n\in\ZZ}\sum_{k=1}^\infty \int_{-\pi/\omega}^{\pi/\omega}  e^{i\sigma(s+\frac{2k\pi}\omega)} \psi(s+\frac{2k\pi}\omega)e^{in\omega(s-r+\frac{2k\pi}\omega)}  \\
\end{multline}
where we let $t=\tfrac{2k\pi}\omega+s$. Using the fact that $\tfrac1{2\pi}\sum_ne^{in\omega(s-r)}=\frac1\omega\delta_{s-r}$ formula \eqref{genPos} follows.

\section{Figures}\label{appendix:figs}

As already mentioned in the introduction Eq.  \eqref{schrodinger2} is the underlying basic model used for the interpretation of experiments of electron emission from a metal surface irradiated by lasers of different frequencies \cite{FN28,FKS05,HKK06,KSH11,YGR11,Ba06,KSe12b,PA12,YHe13,CPe14,ZL16,Fo16,Je17,KLe18,LZZ21}.
  This is so despite the fact that the system described by \eqref{schrodinger2} is very idealized, both in the description of the metal and in the use of a classical electric field.
 The literature therefore contains many approximate qualitative solutions of \eqref{schrodinger2} or some modification of it.
 Our analysis which proves the existence of physical solutions to \eqref{schrodinger2} does not give a visualization of the form of such solutions.
 To do that requires carefully controlled numerical solutions.
  Figure \ref{Fig2} shows the complex behavior of the current at early times for large fields.
 Figure \ref{Fig3} shows the steep rise of the current as the frequency of the applied field crosses the field dependent critical frequency, which is the energy that is necessary for an electron to absorb in order to be extracted from the metal: it is the real solution to the cubic equation $\omega_c=U-\frac{k^2}2+\frac{E^2}{4\omega^2_c}$ (the term $\frac{E^2}{4\omega_c^2}$ comes from the ``Zitterbewegung'' \cite{Wo35}).
 For small $E$, this reproduces the usual physical picture of the photoelectric effect.

The figures are obtained by solving the integral equation numerically for $\psi(x,t)$ with controlled approximations \cite{CCe20}.

\begin{figure}
  
  \hfil\includegraphics[width=8cm]{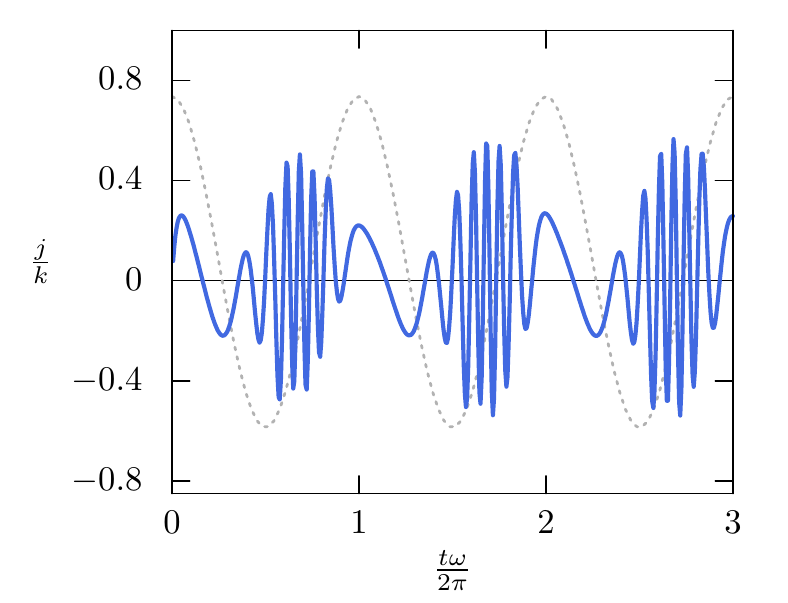}
  \caption{
    The normalized current $\frac jk$ at the interface (in atomic units, so $\frac jk$ is dimensionless) as a function of $\frac{t\omega}{2\pi}$ for $\omega=1.55\ \mathrm{eV}$ and for the electric field: $E=25\ \mathrm{V}\cdot\mathrm{nm}^{-1}$.
    The dotted line is the graph of $\cos(\omega t)$ (not to scale).
  }
  \label{Fig2}
\end{figure}

\begin{figure}
  \hfil\includegraphics[width=8cm]{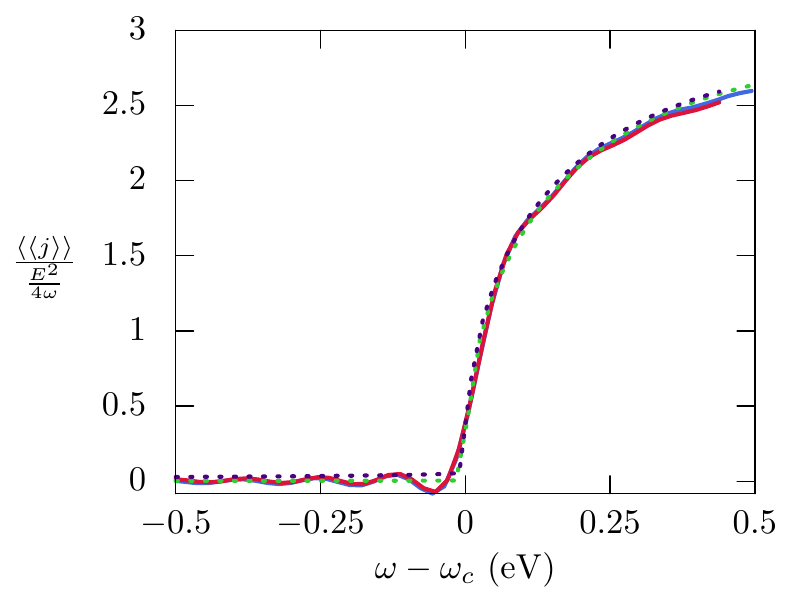}
  \caption{
    An average of the current after a number of periods as a function of $\omega-\omega_c$, for various values of the field: $E=3\ \mathrm{V}\cdot\mathrm{nm}^{-1}$ (blue), $E=10\ \mathrm{V}\cdot\mathrm{nm}^{-1}$ (red).
    For the sake of comparison, we have also plotted the asymptotic current predicted in \cite{FKS05} as dotted lines: green for $E=3\ \mathrm{V}\cdot\mathrm{nm}^{-1}$ and purple for $E=10\ \mathrm{V}\cdot\mathrm{nm}^{-1}$.
    All four curves are almost on top of each other.
    We see a sharp transition as $\omega$ crosses the critical frequency $\omega_c=U-\frac{k^2}2+\frac{E^2}{4\omega_c^2}$.
  }
  \label{Fig3}
\end{figure}
\bigskip

\noindent{\bf Acknowledgements:}
The authors wish to thank David Huse for valuable discussions, as well as the Institute for Advanced Study for its hospitality.
OC was partially supported by the NSF grants DMS-1515755 and DMS-2206241.
OC, RC, IJ and JLL were partially supported by AFOSR Grant FA9550-16-1-0037.
IJ was partially supported by NSF Grant DMS-1802170 and by a grant from the Simons Foundation, Grant Number 825876.

\vfill
\eject

\bibliographystyle{apsrev4-2_mod}
\bibliography{bibliography}

\end{document}